\documentclass[a4paper,onecolumn,11pt,unpublished]{quantumarticle}
\pdfoutput=1
\usepackage[utf8]{inputenc}
\usepackage[english]{babel}
\usepackage[T1]{fontenc}
\usepackage{amsmath, mathtools}
\usepackage{hyperref}
\usepackage{mleftright}
\usepackage{amsfonts}

\usepackage{tikz}
\usepackage{bm}
\usetikzlibrary{arrows.meta}
\usetikzlibrary{decorations.markings,quantikz}
\usetikzlibrary{fit,shapes.arrows}
\usepackage{subcaption}

\usepackage{tikz}
\usepackage{lipsum}

\usepackage{amssymb}

\usepackage{physics}
\usepackage{multicol}

\usepackage{graphics, setspace}

\usepackage{mleftright}

\usepackage[inkscapelatex=false]{svg}

\newcommand{\sgeq}{\succcurlyeq}

\DeclareMathOperator{\m}{\mathcal{M}}
\DeclareMathOperator{\K}{\mathcal{K}}
\newcommand{\id}{\mathbbm{1}}
\newcommand{\pg}{P_{\textnormal{guess}}} 
\newcommand{\pgs}{P^{*}_{\textnormal{guess}}}
\newcommand{\hmin}{H_{\textnormal{min}}}
\DeclareMathOperator{\p}{\mathcal{P}}
\usepackage{amsthm}
\newcommand{\vecr}{\vec{r}}
\newcommand{\ve}{\varepsilon}
\newcommand{\conv}{\textnormal{conv}}

\newtheorem{theorem}{Theorem}

\newtheorem{corollary}{Corollary}[theorem]
\newtheorem{observation}{Observation}[theorem]
\usepackage{chngcntr}

\usepackage[numbers,sort&compress]{natbib}

\usepackage{changes}

\begin{document}

\title{Quantum randomness beyond projective measurements}

\author{Fionnuala Curran}
\affiliation{ICFO-Institut de Ci\`encies Fot\`oniques, The Barcelona Institute of Science and Technology,
Av. Carl Friedrich Gauss 3, 08860 Castelldefels (Barcelona), Spain}
\orcid{0000-0001-9036-8113}
\email{fionnuala.curran@icfo.eu}
\maketitle
\begin{abstract}
The unpredictability of quantum physics gives rise to intrinsic randomness. In an adversarial scenario, any additional degrees of freedom must be attributed to an eavesdropper with correlations to the measurement set-up. The \emph{true} randomness is then quantified by the probability that she correctly guesses the measurement outcomes, optimised over all possible strategies.
Extremal measurements are appealing here, since they do not allow information to leak to such an eavesdropper. Beyond projective measurements, however, a simple question remains open: how much intrinsic randomness can be generated by a given extremal measurement? In a step towards solving it, we characterise the randomness generated by any unbiased extremal rank-one measurement acting on any state, solving the problem explicitly in dimension two. Four-outcome qubit measurements of this type are tomographic, so these results hold for fully \emph{source}-device-dependent randomness too. The tetrahedral symmetric informationally complete (SIC) measurement, we find, has the \emph{least} intrinsic randomness within this class. We also present the \emph{skewed} SIC family of measurements, and use them to partially solve an open problem: we prove that $2 \log d$ bits of randomness, the maximal amount, can be generated device-dependently (or source-device-independently) in any dimension in which there exists a SIC measurement.  
\end{abstract}

\section{Introduction}
Quantum measurements generate an intrinsic form of randomness with no counterpart in classical physics. That the outcomes of a well-chosen measurement on a quantum system are unpredictable, even to an eavesdropper with perfect knowledge of the pre-measurement set-up, is exploited in the design of quantum random number generators \cite{Ma_2016, Herrero_Collantes_2017, Mannalatha_2023}. Some measurements are more random than others, however. Of particular interest are the \emph{extremal} measurements, which cannot be decomposed into a probabilistic mixture of other measurements \cite{D_Ariano_2005}, and are therefore secure against quantum adversaries \cite{Dai_2023, Senno_2023}. A question of practical, as well as theoretical, importance then arises: how much randomness can be generated using an extremal measurement, and which measurements are the most random?

When an extremal measurement acts on a generic (mixed) quantum state, we can quantify the resulting randomness by the probability with which a correlated eavesdropper correctly guesses the outcomes \cite{Law_2014}. The eavesdropper may have quantum correlations, such that she holds a purification of the state, or, equivalently, classical correlations, such that she knows which state from a probabilistic mixture is \emph{actually} prepared in a given measurement round. The guessing probability relates directly to the conditional min-entropy \cite{Konig_2009}, which lower bounds the number of (almost) perfectly random bits that can be extracted from the measurement outcomes\footnote{We use logarithms of base two throughout.}  \cite{renner2006thesis,Konig_2011}. Recent works have studied the maximal randomness that can be produced from any state using projective \cite{Meng_2023} or general \cite{anco2024securerandomnessquantumstate} measurements, as well as the maximal intrinsic randomness of several noisy (non-extremal) measurements \cite{curran2025maximalintrinsicrandomnessnoisy}.  
At the limits of randomness generation, it has been shown \cite{anco2024securerandomnessquantumstate} that no more than $2\log d$ bits of conditional min-entropy can be generated in dimension $d$, and that this bound can be approached arbitrarily closely.

We begin by introducing the class of \emph{skewed} SIC measurements, which are derived from symmetric informationally complete (SIC) measurements. These measurements are biased, in that they produce a non-uniform probability distribution from the maximally mixed `junk' state. While this bias makes them poor candidates for quantum state tomography, it allows them to produce a uniform distribution from a pure state instead. We thereby prove that one can generate maximal randomness from any pure or isotropically noisy state in any dimension in which there exists a SIC measurement, partially answering an open question in \cite{anco2024securerandomnessquantumstate}. Since the skewed SIC measurements are informationally complete, this result holds in a source-device-independent setting as well.
These measurements can be seen as higher-order analogues of a two-dimensional class that was previously shown to generate maximal device-independent randomness \cite{Woodhead_2020}, an area in which skewed SIC measurements may also find application. 

Turning to the more familiar \emph{unbiased} extremal measurements, we study the randomness they generate from any given state. This, we find, is proportional to the maximal quantum fidelity of the chosen state with any state that lives in the convex hull of the measurement projectors. Given a measurement with $n$ outcomes, by choosing the worst possible state, we derive a lower bound of $\log \frac{n}{d}$ bits of conditional min-entropy, which is relevant for source-device-independent randomness generation \cite{Avesani_2022}. Notably, in the extreme case where $n=d^2$, we can always reach at least $\log d$ bits of randomness, which is the maximum that can be achieved using a projective measurement. In pursuit of maximal randomness, we also introduce the one-parameter `scissors' family of unbiased measurements in dimensions two, three and four, which all come arbitrarily close to achieving $2 \log d$ bits of randomness. 

We then explicitly solve the \emph{maximal} intrinsic randomness of all unbiased extremal measurements in dimension two, as well as the randomness they generate from any state. On the way, we prove that all four-outcome measurements of this type are group covariant, and present a characterisation in terms of two angles, a result which may be of independent interest. Leveraging the tomographic nature of these measurements, one could adapt our results to randomness generation in a fully source-device-independent setting. A final result shows that the task of randomness generation can itself give an insight into the landscape of quantum measurements: among all four-outcome unbiased extremal qubit measurements, the SIC measurement, which is optimal for quantum state tomography \cite{Scott_2006}, is singled out as having the \emph{least} intrinsic randomness.

This paper is laid out as follows. In Section \ref{sec: setup}, we set up our problem and give the necessary preliminaries. In Section \ref{sec: biased}, we present the family of skewed SIC measurements. In Section \ref{sec: ub any dim}, we study unbiased extremal rank-one measurements in any dimension, while in Section \ref{sec: ub qubit}, we give our results on unbiased extremal qubit measurements. Finally, in Section \ref{sec: scissors}, we derive the scissors family of measurements in dimensions two, three and four. We present our conclusions and outlook in Section \ref{sec: conc}.

\section{Set-up}\label{sec: setup}
We work throughout in a Hilbert space of finite dimension $d$. A quantum system is represented by a state $\rho$, a positive semidefinite matrix of unit trace, while a measurement of $n$ outcomes is represented by a positive operator-valued measure (POVM) $\m=\{M_j\}_j$, where the positive semidefinite elements $\{M_j\}$ represent the outcomes. The probability of obtaining the outcome $j$ is given by the Born rule, $p_j = \tr \mleft( \rho M_j \mright)$. To ensure that the probabilities are non-negative and normalised for any state, the POVM elements must also satisfy $\sum_{j=1}^{n}M_j= \id$. We call a POVM \emph{unbiased} if all of its elements have equal trace, and biased otherwise\footnote{Note that the `unbiased' quality of a POVM has nothing to do with unbiased bases.}. Projective measurements are POVMs whose elements satisfy $M_i M_j = M_i \delta_{i, \, j}$ for all $i, \, j \in \{1, \, \hdots, \, n\}$; \emph{any} POVM can be realised, however, by applying a joint projective measurement to the target quantum state composed with some auxiliary state \cite{Peres_1990}.

On the boundary of the convex set of POVMs there live the \emph{extremal} measurements, which cannot be represented as a probabilistic decomposition of other measurements \cite{D_Ariano_2005}. These POVMs, of which the projective measurements are a special case, are constrained to have $n \leq d^2$ outcomes, and can be seen as the POVM analogue of pure quantum states. Unlike pure states, the elements of an extremal measurement may be mixed, but we will only concern ourselves with rank-one extremal POVMs in this work. Another distinguished class is that of \emph{informationally complete} (IC) measurements \cite{Prugovecki_1977, Busch_1991}, as one can uniquely reconstruct any state $\rho$ based solely on the probability distribution it generates from such a measurement. IC measurements span the entire space of $d \times d$ Hermitian matrices, so they must have at least $d^2$ outcomes. Those with $d^2$ elements are called \emph{minimal} informationally complete (MIC). Unbiased MICs, which can be seen as analogous to orthonormal bases in the larger operator space, are relevant for quantum tomography \cite{Caves_2002}, and can be used to generate certified randomness in any dimension \cite{farkas2024} (see \cite{debrota2021} for an overview of MICs). 

Bridging these two concepts, we know that any extremal POVM with $d^2$ outcomes is necessarily an MIC \cite{D_Ariano_2005}. The most famous MICs are undoubtedly the symmetric informationally complete (SIC) measurements \cite{Zauner1999, Renes_2004}, which are unbiased, with elements $M_j = \frac{1}{d} \Pi_j$ satisfying
\begin{equation}\label{eqn: SIC cond}
    \tr \mleft( \Pi_i \Pi_j \mright) = \frac{d \delta_{i, \, j} +1}{d+1} \qquad \textnormal{for all} \;\; i, \, j \in \{1, \, \hdots, \, d^2\}\,.
\end{equation}
SIC POVMs are conjectured, though not proven, to exist in every dimension. Solutions---some exact and some numerical---have been found in every dimension from two to 151, as well as in a handful of higher dimensions \cite{Fuchs_2017}. These POVMs show up in diverse areas, including quantum foundations (they are central to the project of quantum Bayesianism, or QBism \cite{Fuchs_2013}), quantum state tomography \cite{Scott_2006}, quantum key distribution \cite{Renes_2005,englert2008efficientrobustquantumkey,Bouchard_2018} and quantum randomness \cite{Acin_2016, Tavakoli_2021}. 

Quantum randomness is studied within different frameworks, depending on the assumptions one makes about the state and measuring devices. At one extreme sits device-independent randomness generation \cite{colbeck2011, Pironio_2010}, wherein randomness is certified by the measurement statistics alone, and no assumptions are made about the workings of the devices. At the other, we find trusted or device-\emph{dependent} randomness generation \cite{Law_2014}, wherein both devices are fully characterised. There also exist a range of intermediate, \emph{semi}-device-independent settings with fewer assumptions, including that of source-independent randomness generation, where the state is completely untrusted \cite{Cao_2016, Marangon_2017, Avesani_2018, Avesani_2022}. This work is based primarily on the device-dependent framework, since we assume that both the state $\rho$ and the POVM $\m$ are fully characterised, though our results on MICs apply equally to the source-device-independent setting.

It is known \cite{Dai_2023, Senno_2023} that the randomness generated by an extremal measurement does not depend on how that measurement is implemented. Given an extremal measurement $\m$ and a mixed state $\rho$, then, the optimal guessing probability of an eavesdropper is a maximisation over all the possible probabilistic decompositions $\{p_j, \, \rho_j\}$ of $\rho$, 
\begin{equation}\label{eqn: pguess}
    \pg \mleft( \rho, \, \m \mright) = \max_{ \{p_j, \, \rho_j\} } \; \sum_{j=1}^{n} p_j \tr \mleft( \rho_j M_j\mright) \,.
\end{equation}
The intrinsic randomness is then quantified by the conditional min-entropy \cite{Konig_2009}, as
\begin{equation}\label{eqn: hmin}
    \hmin \mleft( \rho, \, \m \mright) = - \log \pg \mleft( \rho, \, \m \mright)\,.
\end{equation}
We can find the \emph{maximal} intrinsic randomness \cite{anco2024securerandomnessquantumstate, curran2025maximalintrinsicrandomnessnoisy} of a state $\rho$ or an extremal measurement $\m$ by minimising \eqref{eqn: pguess} over all $\m$ or $\rho$,
\begin{equation}\label{eqn: max rand two}
    \pgs \mleft( \rho \mright) = \min_{\m } \, \pg \mleft( \rho, \, \m \mright)\,, \qquad \pgs \mleft( \m \mright) = \min_{\rho } \, \pg \mleft( \rho, \, \m \mright)\,.
\end{equation}
Note that the state that achieves the minimisation in $\pgs \mleft( \m \mright)$ is necessarily pure, so there is no decomposition involved here. The authors of \cite{anco2024securerandomnessquantumstate} 
derived the following lower bound for any $\rho$,
\begin{equation}\label{eqn: rand bound state}
    \pgs \mleft( \rho \mright) \geq \frac{1}{d^2} \left( \tr \sqrt{\rho} \right)^2\,.
\end{equation}
They showed that this bound can be achieved in dimension two, and constructed measurements that achieve a guessing probability arbitrarily close to \eqref{eqn: rand bound state} in every dimension. Note that, setting $\rho$ to be pure in \eqref{eqn: rand bound state}, we have 
\begin{equation}\label{eqn: rand bound pure state}
    \pgs \mleft( \rho \mright) \geq \frac{1}{d^2}  \,,
\end{equation}
such that, from \eqref{eqn: hmin}, 
\begin{equation}
    \hmin \mleft( \rho \mright) \leq 2 \log d
\end{equation}
is an upper bound for device-dependent randomness generation in any dimension, matching its device-independent counterpart \cite{Acin_2016}.

\section{Skewed SIC measurements}\label{sec: biased}
Here, we present a family of measurements, the \emph{skewed} SIC POVMs, that can generate maximal randomness from a pure or isotropically noisy state in any dimension in which there exists a SIC POVM. Let $\m= \{M_j\}_j$ be any symmetric informationally complete (SIC) measurement in dimension $d$, whose elements $M_j=\frac{1}{d}\Pi_j$ satisfy \eqref{eqn: SIC cond}.
Singling out the element $\Pi_1$ and following a prescription in \cite{Caves_2002}, in Appendix \ref{app: max rand meas} we derive a one-parameter family of biased MIC POVMs $\mathcal{N}=\{N_j\}_j$ with elements
\begin{equation}\label{eqn: max rand N main}
    N_{j} = \begin{cases}
        \gamma \Pi_{1}\,, \qquad & j =1\,,
        \\[5pt]
      \frac{1- \gamma}{d-1} \left(  \frac{\left(1- \Gamma \right)^2}{d+1} \Pi_1 + \Gamma \left( 1 - \Gamma \right) \left( \Pi_1 \Pi_j + \Pi_j \Pi_1 \right) + \Gamma^{2} \Pi_j \right)\,, \qquad & j \in \{2, \, \hdots, \, d^2\}\,,
    \end{cases}
\end{equation}
where
\begin{equation}
    \Gamma = \sqrt{ \frac{d-1}{d \left( 1 - \gamma  \right)} }\,, \qquad 0 < \gamma < 1\,.
\end{equation}
If we performed this measurement on the state $\rho = \Pi_1$, we would obtain the following probability distribution,
\begin{equation}
    p_j = \tr \mleft( \rho \,
    N_j\mright) =  
    \begin{cases}
     \gamma\,, \qquad & j=1\,,
     \\[5pt]
     \frac{1- \gamma}{d^2-1}\,, \qquad & j \in \{2, \, \hdots, \, d^2  \}\,.
    \end{cases}
\end{equation}
Choosing $\gamma=\frac{1}{d^2}$ gives a uniform distribution, so we can achieve the bound \eqref{eqn: rand bound pure state}. Note, too, that the probability of the obtaining outcome $1$ tends to one as $\gamma \rightarrow 1$ and to zero as $\gamma \rightarrow 0$. In dimension two, the POVM approaches the trine measurement in the latter case (see Figure \ref{fig: skewed} for a representation of all three cases for $d=2$). 
\begin{figure}
    \centering
    \begin{subfigure}{0.3\textwidth}
    \includegraphics[width=\textwidth]{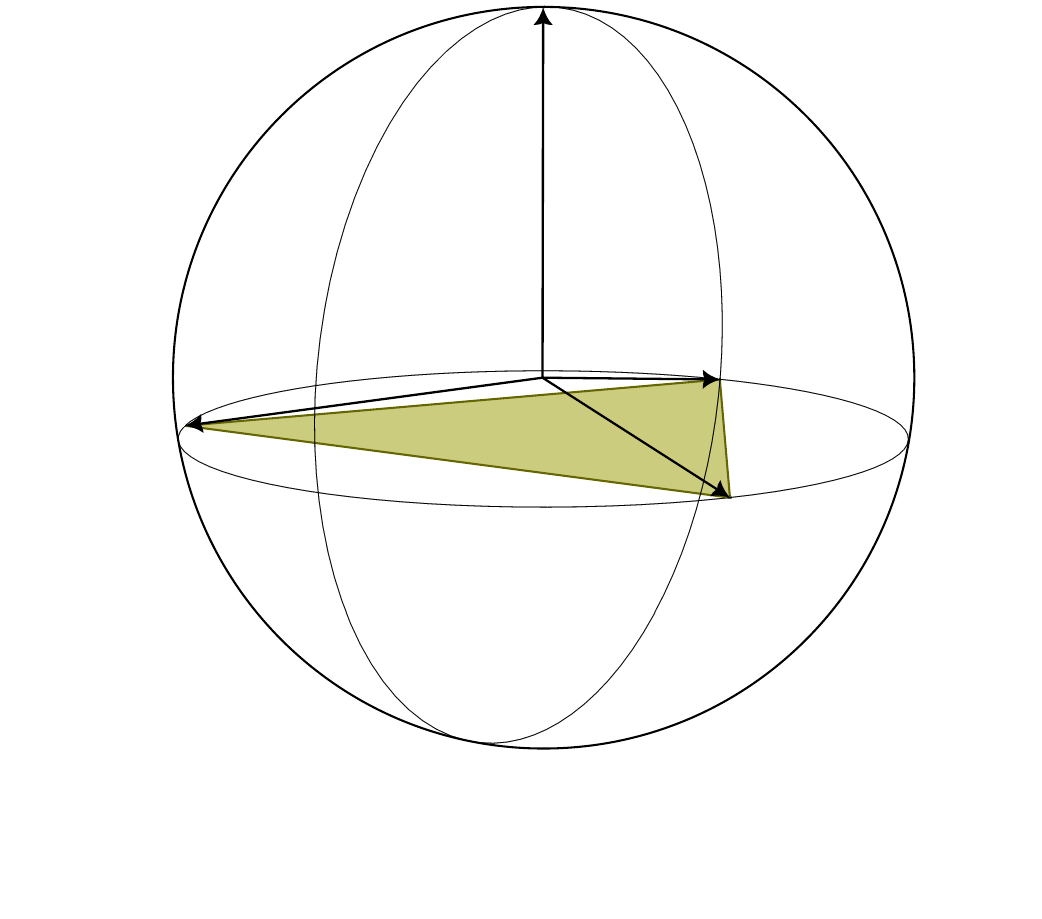}
    \caption{Maximally random, $\gamma = \frac{1}{4}$.}
    \label{fig: 4rand}
\end{subfigure}
\hfill
\begin{subfigure}{0.3\textwidth}
\includegraphics[width=\textwidth]{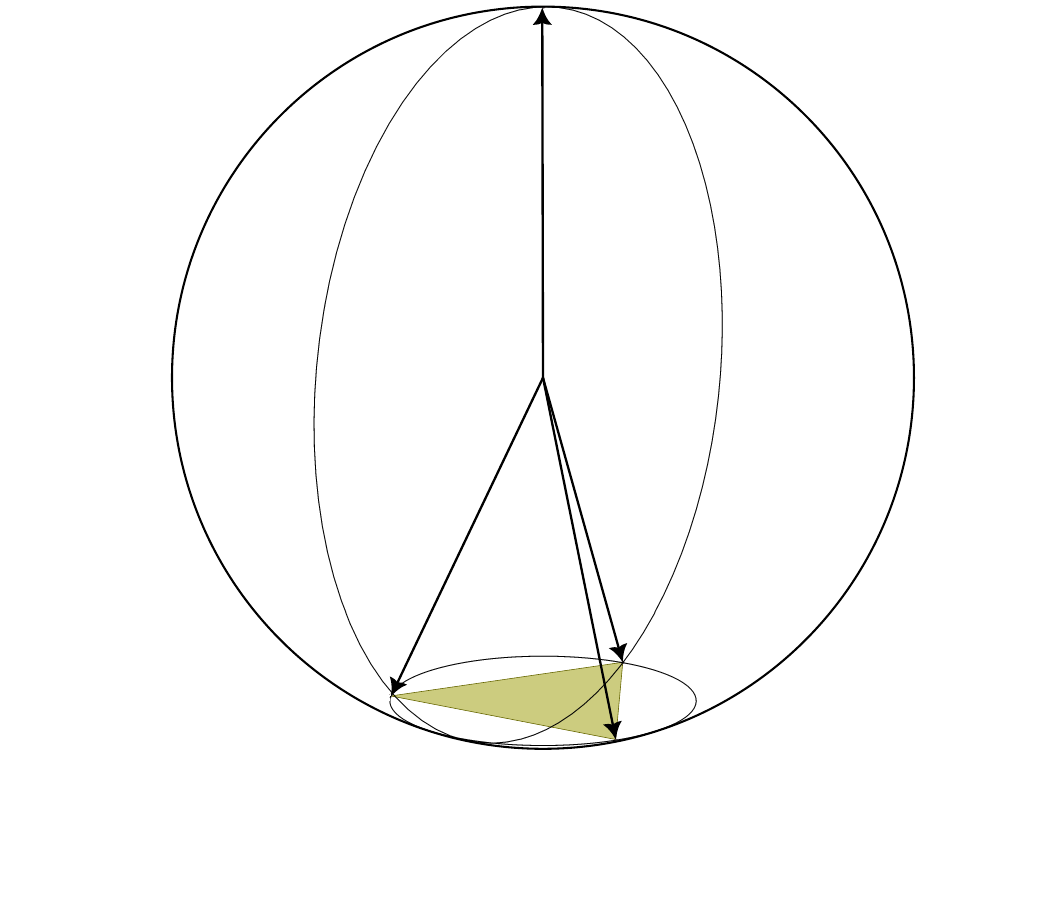}
    \caption{`Almost certain', $\gamma \rightarrow 1$.}
    \label{fig: 4proj}
\end{subfigure}
\hfill
\begin{subfigure}{0.3\textwidth}
\includegraphics[width=\textwidth]{Pics/4proj.pdf}
    \caption{`Almost trine', $\gamma \rightarrow 0$.}
    \label{fig: 4trine}
\end{subfigure}
    \caption{Skewed SIC POVMs in dimension two.}
    \label{fig: skewed}
\end{figure}

Now let's consider an isotropically noisy quantum state $\rho_\ve$ given by
\begin{equation}
    \rho_\ve  = \left( 1 - \varepsilon \right) \Pi_1 + \varepsilon\frac{\id}{d} \,, \qquad 0 < \ve \leq 1\,.
\end{equation}
Choosing
\begin{equation}\label{eqn: gamma par}
    \gamma =  \sqrt{\frac{d}{d - \varepsilon \left( d-1 \right)} }  \,\frac{\tr \sqrt{\rho_\ve}}{d^2}\, 
\end{equation}
gives us
\begin{equation}
    \tr \mleft( \sqrt{\rho_\ve} \, N_j\mright) = \frac{\tr \sqrt{\rho_\ve}}{d^2} \qquad \textnormal{for all} \;\; j \in \{1, \, \hdots, \, d^2\}\,.
\end{equation}
From \cite[Theorem B.4]{anco2024securerandomnessquantumstate}, this POVM lets us saturate the bound \eqref{eqn: rand bound state} and achieve maximal intrinsic randomness from $\rho_\ve$. In dimension two, for example, \eqref{eqn: gamma par} becomes 
\begin{equation}
    \gamma = \frac{1}{4} \left( 1 + \sqrt{\frac{\ve}{2-\ve}} \right)\,,
\end{equation}
and, choosing $\Pi_1= \ketbra{0}{0}$ and the standard representation of the tetrahedron SIC, we have
\begin{equation}
    N_1 = \begin{pmatrix}
        \gamma & 0
        \\
        0 & 0
    \end{pmatrix}\,, \qquad N_{j} = \frac{1}{3} 
    \begin{pmatrix}
    1 - \gamma & \sqrt{1- \gamma} \, \eta^{2 \left( j-2\right)} 
    \\
    \sqrt{1- \gamma} \, \eta^{ j-2} & 1
    \end{pmatrix} \qquad \textnormal{for all} \;\; j \in \{2, \, 3, \, 4\}\,,
\end{equation}
where $\eta= e^{\frac{2 \pi i}{3}}$.
This POVM appears in \cite[Example 4.7]{anco2024securerandomnessquantumstate}, where it was shown to generate maximal device-dependent randomness from any qubit state, while a similar construction was given in \cite{Woodhead_2020} which generates maximal device-\emph{independent} randomness from a partially entangled two-qubit state.

\section{Unbiased extremal measurements in any dimension}\label{sec: ub any dim}
Here, we consider unbiased extremal rank-one POVMs with $n$ outcomes, which have the form $\m= \{M_j\}_j$, with $M_j = \frac{d}{n} \ketbra{\psi_j}{\psi_j}$ and $d \leq n \leq d^2$. For a given POVM, we can construct an $n$-dimensional simplex $\p$ using the projectors $\{\ketbra{\psi_j}{\psi_j}\}$ as vertices,
\begin{equation}
    \p = \conv \mleft( \{\ketbra{\psi_j}{\psi_j}\}\mright) \,,
\end{equation}
where $\conv\mleft( \cdot \mright)$ denotes the convex hull of its arguments.
\begin{theorem}\label{thm: fidelity rand}
Let $\m = \{M_j\}_j$ be an unbiased extremal rank-one POVM with $n$ outcomes in dimension $d$, such that $M_j= \frac{d}{n} \ketbra{\psi_j}{\psi_j}$, let $\rho$ be a quantum state and let $\p$ be the convex hull of the set $\{\ketbra{\psi_j}{\psi_j}\}$. Then
\begin{equation}\label{eqn: fidelity rand main}
    \pg \mleft( \rho, \, \m \mright) = \frac{d}{n} \max_{\sigma \in \p} \; F \mleft( \rho, \, \sigma \mright)\,,
\end{equation}
where $F \mleft( \rho, \, \sigma\mright)$ is the quantum fidelity 
\begin{equation}
    F \mleft( \rho, \, \sigma\mright) = \left( \tr \sqrt{\sqrt{\rho} \, \sigma \sqrt{\rho}} \right)^2\,.
\end{equation}
\end{theorem}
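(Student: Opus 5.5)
The plan is to rewrite both sides of \eqref{eqn: fidelity rand main} as the same optimisation over families of vectors $\{v_j\}$ satisfying $\sum_j \ketbra{v_j}{v_j} \leq \id$. Throughout, set $\ket{\phi_j} = \sqrt{\rho}\,\ket{\psi_j}$ and $A_j = \ketbra{\phi_j}{\phi_j}$.

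\emph{Left-hand side.} By the Hughston--Jozsa--Wootters / square-root-measurement correspondence, every decomposition $\rho = \sum_j p_j \rho_j$ can be written as $p_j\rho_j = \sqrt{\rho}\,E_j\sqrt{\rho}$ for some POVM $\{E_j\}$. Here $E_j$ is built with the pseudo-inverse of $\sqrt{\rho}$ on its support and completed arbitrarily off the support, which does not change $\sqrt{\rho}E_j\sqrt{\rho}$. Conversely, every POVM gives a decomposition in this way. Substituting into \eqref{eqn: pguess} gives
\begin{equation*}
\pg(\rho,\m) = \frac{d}{n}\max_{\{E_j\}} \sum_j \bra{\phi_j}E_j\ket{\phi_j}.
\end{equation*}

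\emph{Reduction to rank-one vectors.} Next I would show that this maximum equals
\begin{equation*}
\max\Big\{\sum_j |\langle v_j|\phi_j\rangle|^2 : \textstyle\sum_j \ketbra{v_j}{v_j}\leq \id\Big\}.
\end{equation*}
The direction ``$\geq$'' is easy: the operators $E_j=\ketbra{v_j}{v_j}$ can be completed to a POVM by adding the remainder $\id-\sum_j\ketbra{v_j}{v_j}$ to any one element. For ``$\leq$'', given a POVM, pick a unit vector $\ket{w_j}\propto E_j^{1/2}\ket{\phi_j}$ and set $\ket{v_j}=E_j^{1/2}\ket{w_j}$. Then $|\langle v_j|\phi_j\rangle| = \|E_j^{1/2}\phi_j\|$, so $|\langle v_j|\phi_j\rangle|^2=\bra{\phi_j}E_j\ket{\phi_j}$. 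Moreover $\sum_j\ketbra{v_j}{v_j}\leq\sum_j E_j=\id$.

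\emph{Right-hand side.} For $\sigma=\sum_j q_j\ketbra{\psi_j}{\psi_j}\in\p$ we have $\sqrt{\rho}\sigma\sqrt{\rho}=\Phi D_q\Phi^\dagger$, where $\Phi$ is the $d\times n$ matrix with columns $\ket{\phi_j}$ and $D_q=\mathrm{diag}(q)$. Hence
\begin{equation*}
\sqrt{F(\rho,\sigma)}=\|\Phi D_q^{1/2}\|_1=\max_{\|V\|_\infty\leq 1}\big|\tr(V\Phi D_q^{1/2})\big|,
\end{equation*}
where $V$ ranges over $n\times d$ contractions. Writing the rows of $V$ as $\bra{v_j}$, the contraction condition is exactly $\sum_j\ketbra{v_j}{v_j}\leq\id$, and
\begin{equation*}
\big|\tr(V\Phi D_q^{1/2})\big|\leq\sum_j\sqrt{q_j}\,a_j,\qquad a_j=|\langle v_j|\phi_j\rangle|,
\end{equation*}
with equality achievable by rephasing the $v_j$. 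Maximising over the simplex of $q$ and using Cauchy--Schwarz, $\max_q\big(\sum_j\sqrt{q_j}a_j\big)^2=\sum_j a_j^2$, attained at $q_j\propto a_j^2$. Swapping the two maxima therefore gives
\begin{equation*}
\max_{\sigma\in\p}F(\rho,\sigma)=\max_{\sum_j\ketbra{v_j}{v_j}\leq\id}\sum_j|\langle v_j|\phi_j\rangle|^2,
\end{equation*}
which is the same quantity as on the left-hand side. This proves the theorem.

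\emph{Main obstacle.} I expect the delicate step to be the correspondence between decompositions and POVMs when $\rho$ is not full rank. There the pseudo-inverse must be handled carefully, and one must check that vectors $v_j$ with components outside $\mathrm{supp}\,\rho$ neither help nor hurt. They do not: $\phi_j\in\mathrm{supp}\,\rho$, so projecting each $v_j$ onto the support does not change $\langle v_j|\phi_j\rangle$ and preserves $\sum_j\ketbra{v_j}{v_j}\leq\id$. The remaining steps are routine: the variational formula for the trace norm, and the phase and Cauchy--Schwarz optimisation. Note also that extremality of $\m$ is used only to justify formula \eqref{eqn: pguess} itself.
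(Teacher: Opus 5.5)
Your proof is correct, and it takes a genuinely different route from the paper's.

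\textbf{The paper's route.} It imports the fidelity formula of Coles et al.\ for projective measurements, $\pg(\rho,\Pi)=\max_\sigma F(\rho,\sum_j\Pi_j\sigma\Pi_j)$, and applies it to a Naimark dilation of $\m$. Extremality guarantees that the dilation does not change the guessing probability. Tracing out the ancilla then shows that the resulting operators are $\frac{d}{n}$ times subnormalised points of $\p$, with every point of $\p$ attained.

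\textbf{Your route.} You never leave dimension $d$. You turn \eqref{eqn: pguess} into a pure-state discrimination problem for the vectors $\sqrt{\rho}\ket{\psi_j}$. You then show that both this problem and $\max_{\sigma\in\p}F(\rho,\sigma)$ equal $\max\{\sum_j|\langle v_j|\phi_j\rangle|^2:\sum_j\ketbra{v_j}{v_j}\leq\id\}$. On one side you use the compression $\ket{v_j}=E_j^{1/2}\ket{w_j}$; on the other, trace-norm duality plus Cauchy--Schwarz. All of these steps check out, including the support bookkeeping for singular $\rho$.

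\textbf{What each buys.} The paper's argument is shorter given the cited theorem, and it presents the result directly as the POVM analogue of the projective case. Yours is self-contained and gives more:
\begin{itemize}
\item Unbiasedness enters only when you pull out $\frac{d}{n}$. The same computation therefore shows that the value of \eqref{eqn: pguess} for any rank-one POVM is $\max_{q}F(\rho,\sum_jq_jM_j)$, with $F$ extended to subnormalised second arguments.
\item It proves the discrimination formula of Theorem~\ref{thm: fidelity SD} for arbitrary ensembles of rank-one states. The linear-independence and $\tr(\nu^{-1}\nu_j)=\frac{d}{n}$ hypotheses are not needed.
\item It exhibits an optimal $\sigma$ explicitly. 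Its weights are $q_j\propto a_j^2$, i.e.\ $q_j=p_j\tr(\rho_jM_j)/\pg(\rho,\m)$ for an optimal decomposition.
\end{itemize}
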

The proof is given in Appendix \ref{ssec: fidelity proof}. Note that if we restrict to rank-one projective measurements, this theorem reduces to \cite[Theorem 1 (iii)]{Coles_2012}. If we choose $\rho \in \mathcal{P}$ in \eqref{eqn: fidelity rand main}, we get $\pg \mleft( \rho, \, \m \mright) = \frac{d}{n}$, as the fidelity of $\rho$ with itself is one, while if we choose $\rho \notin \mathcal{P}$, we find $\pg \mleft( \rho, \, \m \mright) < \frac{d}{n}$, as the fidelity must be strictly less than one. We can deduce several corollaries.
\begin{corollary}\label{cor: hmin}
    Let $\m$ be an unbiased extremal rank-one POVM with $n$ outcomes in dimension $d$. The conditional min-entropy generated by measuring any state $\rho$ with $\m$ is lower bounded by
    \begin{equation}\label{eqn: hmin lower bound}
        \hmin \mleft( \rho, \, \m\mright) \geq \log \frac{n}{d}\,.
    \end{equation}
\end{corollary}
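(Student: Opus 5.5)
The plan is to read the bound directly off Theorem \ref{thm: fidelity rand}, using only that the quantum fidelity never exceeds one. Fix the unbiased extremal rank-one POVM $\m$ with elements $M_j = \frac{d}{n}\ketbra{\psi_j}{\psi_j}$ and an arbitrary state $\rho$. By \eqref{eqn: fidelity rand main}, $\pg(\rho,\m) = \frac{d}{n}\max_{\sigma\in\p} F(\rho,\sigma)$. Every $\sigma\in\p$ is a convex combination of the rank-one projectors $\ketbra{\psi_j}{\psi_j}$, so it is a valid density matrix: positive semidefinite with unit trace. For any two states, $0\le F(\rho,\sigma)\le 1$. This follows from Uhlmann's theorem, which writes the fidelity as the squared overlap of unit purifications, together with Cauchy--Schwarz. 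The maximum over the compact set $\p$ is therefore at most one, and $\pg(\rho,\m)\le \frac{d}{n}$.

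Taking $-\log$, which is monotonically decreasing, in the definition \eqref{eqn: hmin} gives $\hmin(\rho,\m) = -\log\pg(\rho,\m) \ge -\log\frac{d}{n} = \log\frac{n}{d}$, which is \eqref{eqn: hmin lower bound}. The bound is tight for every such $\m$: choosing $\rho\in\p$, for instance $\rho=\ketbra{\psi_1}{\psi_1}$, gives $F(\rho,\rho)=1$ and hence equality, as the remark after the theorem already notes.

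If one wanted to avoid quoting the theorem, there is a direct route from \eqref{eqn: pguess}. For any decomposition $\{p_j,\rho_j\}$, each term satisfies $\tr(\rho_j M_j) = \frac{d}{n}\bra{\psi_j}\rho_j\ket{\psi_j}\le\frac{d}{n}$, and $\sum_j p_j=1$, so the bound follows immediately. No step here is a real obstacle. The only point needing care is that $\p$ consists of genuine normalised states, so that the fidelity bound $F\le 1$ applies; this is immediate from convexity.
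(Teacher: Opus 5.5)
Your proof is correct and follows the paper's own argument: the paper likewise reads off $\pg(\rho,\m)\le \frac{d}{n}$ from Theorem~\ref{thm: fidelity rand} using $F\le 1$, then applies $-\log$. Your alternative route, bounding each term $p_j\tr(\rho_j M_j)\le p_j\frac{d}{n}$ directly in \eqref{eqn: pguess}, is also valid and avoids the fidelity theorem entirely.
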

\begin{proof}
    The result follows immediately from $\pg \mleft(\rho, \, \m \mright) \leq \frac{d}{n}$, using \eqref{eqn: hmin}.
\end{proof}
Corollary \ref{cor: hmin} generalises a lower bound from \cite{Avesani_2022} for the qubit trine measurement to all unbiased extremal rank-one measurements in any dimension, which includes all SIC measurements. Since it is state-independent, \eqref{eqn: hmin lower bound} also holds in a source-device-independent setting. 
\begin{corollary}\label{corr: 1/d^2 rand}
   Let $\m = \{M_j\}_j$ be an unbiased extremal POVM in dimension $d$. Then 
   \begin{equation}
       \pg \mleft( \rho, \, \m \mright) > \frac{1}{d^2} 
   \end{equation}
   for any state $\rho$.
\end{corollary}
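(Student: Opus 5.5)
We argue directly from the bound \eqref{eqn: rand bound state} of \cite{anco2024securerandomnessquantumstate}, and split into the cases of mixed and pure $\rho$. This avoids Theorem \ref{thm: fidelity rand}, which only covers rank-one POVMs, whereas the corollary is stated for any unbiased extremal POVM. (For rank-one $\m$ one could instead take $\sigma=\id/d\in\p$ in \eqref{eqn: fidelity rand main}. This is allowed because $\sum_j \ketbra{\psi_j}{\psi_j} = \frac{n}{d}\id$, and it gives $\pg \geq \frac{1}{n}(\tr\sqrt{\rho})^2$. The argument below covers the general case.)

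\emph{Mixed $\rho$.} Since $\pg(\rho,\m)\geq \pgs(\rho)$, the bound \eqref{eqn: rand bound state} gives $\pg(\rho,\m)\geq \frac{1}{d^2}(\tr\sqrt{\rho})^2$. Write the eigenvalues of $\rho$ as $\lambda_k$, so that $(\tr\sqrt{\rho})^2=\sum_{k,l}\sqrt{\lambda_k\lambda_l}$. Because $\lambda_k\leq\sqrt{\lambda_k}$ on $[0,1]$, we have $(\tr\sqrt{\rho})^2 \geq \sum_k\lambda_k=1$. If $\rho$ has rank at least two, at least one cross term is strictly positive, so $(\tr\sqrt{\rho})^2>1$. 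This gives the strict inequality.

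\emph{Pure $\rho=\ketbra{\phi}{\phi}$.} A pure state admits only the trivial decomposition, so \eqref{eqn: pguess} reduces to $\pg(\rho,\m)=\max_j \tr(\rho M_j)$. This maximum is at least the average, $\frac{1}{n}\sum_j\tr(\rho M_j)=\frac{1}{n}$. Extremality forces $n\leq d^2$, so $\frac1n\geq\frac{1}{d^2}$. If $n<d^2$ we are done. It remains to rule out equality when $n=d^2$, where equality would require $\tr(\rho M_j)=\frac{1}{d^2}$ for every $j$.

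This equality case is the main step, and we handle it with informational completeness. An extremal POVM with $d^2$ outcomes is an MIC \cite{D_Ariano_2005}. Unbiasedness together with $\sum_j M_j=\id$ gives $\tr M_j=\frac{1}{d}$, so $\tr(\frac{\id}{d}M_j)=\frac{1}{d^2}$ for all $j$. The pure state $\rho$ and the maximally mixed state $\frac{\id}{d}$ would then produce identical statistics. Informational completeness forces $\rho=\frac{\id}{d}$, which contradicts purity because $d\geq2$. Hence $\max_j\tr(\rho M_j)>\frac{1}{d^2}$ in this case as well, which completes the proof.
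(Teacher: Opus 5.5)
Your proof is correct, but it treats mixed states by a different route from the paper. The paper gives a single argument for every $\rho$. The trivial decomposition yields $\pg(\rho,\m)\geq\max_j\tr(\rho M_j)\geq\frac{1}{n}\geq\frac{1}{d^2}$. Equality forces $n=d^2$ and a uniform distribution, and informational completeness then pins $\rho=\frac{\id}{d}$. For that state Theorem~\ref{thm: fidelity rand} (using $\frac{\id}{d}\in\mathcal{P}$) gives $\pg=\frac{1}{d}>\frac{1}{d^2}$.

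Your pure-state case is the same argument, except that the contradiction comes from purity rather than from evaluating $\pg(\frac{\id}{d},\m)$. For mixed states you instead invoke the external bound \eqref{eqn: rand bound state} and note that $(\tr\sqrt{\rho})^2>1$ once the rank is at least two. That step is valid and even gives a spectrum-dependent margin above $\frac{1}{d^2}$. However, it relies on the result of \cite{anco2024securerandomnessquantumstate}, whereas the paper's route is self-contained and needs no case split.

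Your stated reason for avoiding Theorem~\ref{thm: fidelity rand} is also not a real obstacle. The paper only needs it when $n=d^2$, where extremality already forces rank-one elements. Even there it is dispensable: the explicit decomposition $\frac{\id}{d}=\sum_j\frac{1}{d^2}\ketbra{\psi_j}{\psi_j}$ gives $\pg(\frac{\id}{d},\m)\geq\frac{1}{d}$ directly. So you could have run the uniform-distribution argument for all states at once.
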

\begin{proof}
    For any measurement with $n$ outcomes, the guessing probability is lower bounded by $\pg \mleft( \m \mright) \geq \frac{1}{n}$, so the bound \eqref{eqn: rand bound pure state} cannot be saturated by any POVM with $n < d^2$ outcomes. Our only candidates in the class of unbiased extremal measurements are those  with $n=d^2$ outcomes, which are necessarily rank one \cite[Corollary 6]{D_Ariano_2005}. In this case, the bound can be reached only if the probability distribution is uniform, i.e. if $\tr \mleft( \rho \, M_j \mright)= \frac{1}{d^2}$ for all $j \in \{1, \, \hdots, \, d^2\}$, or else the eavesdropper would choose a more probable outcome. However, since $\m$ is an unbiased MIC, the maximally mixed state $\rho = \frac{\id}{d}$ is the unique state that gives the uniform distribution, and since $\frac{\id}{d^2} \in \mathcal{P}$, from Theorem \ref{thm: fidelity rand} it must give $\pg \mleft( \rho, \, \m \mright)= \frac{1}{d}$, so the bound \eqref{eqn: rand bound pure state} is strict for all unbiased extremal POVMs.
\end{proof}

The following corollary holds in the special case where $n=d^2$, and is proven in Appendix \ref{ssec: fidelity proof}.
\begin{corollary}\label{corr: notd^2 rand}
    Let $\m = \{M_j\}_j$ be an unbiased extremal POVM with $d^2$ outcomes in dimension $d$, such that $M_j= \frac{1}{d} \ketbra{\psi_j}{\psi_j}$, and let $\mathcal{P}$ be the convex hull of the set $\{\ketbra{\psi_j}{\psi_j}\}$. When $\rho \notin \mathcal{P}$, the optimal decomposition $\{p_j , \, \rho_j\}$ for an eavesdropper contains at most $d^2-1$ nonzero elements.
\end{corollary}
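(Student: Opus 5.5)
The plan is to combine Theorem \ref{thm: fidelity rand} with a geometric argument about where the fidelity maximiser lies in $\p$, and then transfer the conclusion back to the eavesdropper's decomposition. The first step is to note that when $n=d^2$ the projectors $\{\ketbra{\psi_j}{\psi_j}\}$ are linearly independent and span the space of Hermitian matrices, since $\m$ is an MIC. Hence $\p$ is a full-dimensional $(d^2-1)$-simplex inside the affine hyperplane of unit-trace Hermitian matrices. Every $\sigma\in\p$ therefore has a \emph{unique} barycentric representation $\sigma=\sum_j q_j \ketbra{\psi_j}{\psi_j}$. Moreover, $\sigma$ lies in the relative interior of $\p$ if and only if all $q_j>0$.

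The second step is to show that, for $\rho\notin\p$, any maximiser $\sigma^*$ of $F(\rho,\sigma)$ over $\p$ lies on the relative boundary of $\p$. Suppose instead that $\sigma^*$ were in the relative interior. Consider the segment $\sigma_t=(1-t)\sigma^*+t\rho$. It lies in the unit-trace hyperplane, so $\sigma_t\in\p$ for all sufficiently small $t>0$. The root fidelity $\sqrt{F(\rho,\cdot)}$ is concave, and $F(\rho,\rho)=1>F(\rho,\sigma^*)$, where the strict inequality holds because $\sigma^*\neq\rho$. It follows that
\begin{equation}
\sqrt{F(\rho,\sigma_t)}\geq (1-t)\sqrt{F(\rho,\sigma^*)}+t>\sqrt{F(\rho,\sigma^*)}\,,
\end{equation}
which contradicts maximality. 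Hence $q_k^*=0$ for at least one index $k$.

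The third step, which I expect to be the main obstacle, is to link the weights $q_j$ to the decomposition $\{p_j,\rho_j\}$. I would go back to the proof of Theorem \ref{thm: fidelity rand} in Appendix \ref{ssec: fidelity proof}. Given any decomposition that is optimal in \eqref{eqn: pguess}, I would first argue that the $\rho_j$ can be taken pure. I would then write $p_j\rho_j=\sqrt{\rho}\,E_j\sqrt{\rho}$ for some POVM $\{E_j\}$ on the support of $\rho$, which is the standard steering parametrisation of decompositions. With this, the objective becomes $\frac1d\sum_j \bra{\psi_j}\sqrt\rho E_j\sqrt\rho\ket{\psi_j}$. Next I would apply Cauchy--Schwarz together with Uhlmann's theorem, as in that proof, to bound this by $\frac1d F(\rho,\sigma)$ for the specific $\sigma$ with weights $q_j\propto \bra{\psi_j}\sqrt\rho E_j\sqrt\rho\ket{\psi_j}/\ldots$. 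Equality in these steps forces $\sigma$ to be a fidelity maximiser. It also forces $p_j=0$ exactly when $q_j=0$, because $p_j\langle\psi_j|\rho_j|\psi_j\rangle$ is the term that produces $q_j$.

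The delicate part is the last claim. One must also handle indices with $p_j>0$ but $\langle\psi_j|\rho_j|\psi_j\rangle=0$. Such terms contribute nothing, and I would argue that their weight can be reassigned to another index without lowering the objective. Combining this with the second step, some $q_k=0$, so the corresponding $p_k$ vanishes. Hence the optimal decomposition has at most $d^2-1$ nonzero elements.
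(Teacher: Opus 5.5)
Your route differs from the paper's and is sound in outline, but two steps are not yet a proof.

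First, the inequality in your third step is not available ``as in that proof''. The paper proves Theorem~\ref{thm: fidelity rand} by importing the projective-measurement result of Coles et al.\ through a Naimark dilation, and it never attaches a particular $\sigma\in\p$ to a given decomposition. You must prove the inequality yourself; it does hold with exactly your weights. Put $\ket{v_j}=\sqrt{\rho}\ket{\psi_j}$, $w_j=\langle v_j|E_j|v_j\rangle=\tr(p_j\rho_j\ketbra{\psi_j}{\psi_j})$, $W=\sum_j w_j$ and $\sigma=W^{-1}\sum_j w_j\ketbra{\psi_j}{\psi_j}$. Cauchy--Schwarz gives $E_j\ketbra{v_j}{v_j}E_j\leq w_jE_j$, so $B=\sum_{j:\,w_j>0}w_j^{-1/2}\ketbra{j}{v_j}E_j$ is a contraction. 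With $K=\sum_j\sqrt{w_j}\ketbra{v_j}{j}$, trace-norm duality yields
\begin{equation*}
\sqrt{W\,F(\rho,\sigma)}=\tr\sqrt{\textstyle\sum_j w_j\ketbra{v_j}{v_j}}=\|K\|_1\geq\tr(BK)=W\,.
\end{equation*}
Combined with Theorem~\ref{thm: fidelity rand}, this makes $\sigma$ a fidelity maximiser whenever the decomposition is optimal. Your second step then gives some $w_k=0$.

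Second, ``$p_j=0$ exactly when $q_j=0$'' does not follow from the reason you give. Your reassignment fix only produces \emph{some} optimal decomposition with an empty label. The paper's contradiction argument instead shows that \emph{every} optimal decomposition has one. You can get this stronger form directly:
\begin{itemize}
\item If $w_k=0$, moving $p_k\rho_k$ to label $j$ changes the objective by $\frac1d\tr(p_k\rho_k\ketbra{\psi_j}{\psi_j})\geq0$.
\item Optimality forces this change to vanish for every $j$.
\item Since $\sum_j\ketbra{\psi_j}{\psi_j}$ is $d$ times the identity, summing over $j$ gives $d\,p_k=0$.
\end{itemize}
Purifying the $\rho_j$ is also unnecessary, since the parametrisation $p_j\rho_j=\sqrt\rho E_j\sqrt\rho$ already covers mixed $\rho_j$.

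As for what each route buys: the paper argues on the dual side. Complementary slackness makes every $X-M_j$ singular, and the matrix determinant lemma then gives $\tr(X^{-1}M_j)=1$. The MIC property pins $X$ to $\frac1d$ times the identity, which contradicts $\pg<\frac1d$. This is shorter, but it presupposes an optimal dual $X$ satisfying complementary slackness, which strict feasibility guarantees only when $\rho$ is full rank. Your primal argument uses no duality, so it applies equally to rank-deficient $\rho$. It also says more: the overlap weights of any optimal decomposition define a fidelity maximiser on a proper face of the simplex $\p$. The labels Eve discards are exactly the vertices outside the smallest face containing that maximiser.
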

The results of this section can be translated to a state discrimination scenario wherein a receiver tries to discriminate $n$ subnormalised states $\{\nu_j\}$ that are linearly independent on the operator space and satisfy  
\begin{equation}
    \tr \mleft( \nu^{-1} \, \nu_j \mright) = \frac{d}{n} \qquad \text{for all} \;\; j \in \{1, \, \hdots, \, n\}\,,
\end{equation}
where $\sum_{j=1}^{n}\nu_j=\nu$ and $d= \text{rank}\mleft(\nu \mright)$ (see Appendix \ref{app: SD}).

\section{Unbiased extremal qubit measurements}\label{sec: ub qubit}

\subsection{Description}
The set of extremal qubit measurements was characterised in \cite{D_Ariano_2005}. Here, we further characterise the subset of \emph{unbiased} extremal qubit POVMs, which may have two, three or four outcomes. We will use Bloch vector notation for the (possibly mixed) state $\rho$ and the normalised POVM elements $\{\ketbra{\psi_j}{\psi_j}\}$, 
\begin{equation}
  \rho = \frac{1}{2} \left( \id + \vecr \cdot \vec{\sigma}\right)\,, \qquad  \ketbra{\psi_j}{\psi_j}= \frac{1}{2} \left(  \id + \vecr_j \cdot \vec{\sigma}\right)\,, \qquad j \in \{1, \, \hdots, \, n\}\,,
\end{equation}
where $\vec{\sigma}= \left( X, \, Y, \, Z\right)$ is a vector of Pauli matrices and $n \in \{2, \, 3, \, 4\}$. The only extremal qubit measurements with two outcomes are the rank-one projective measurements, which are automatically unbiased, while the only one with three outcomes is the trine measurement, which is composed of three measurement vectors lying equidistant in a plane (see Figure \ref{fig: trine}).  
Unbiased extremal qubit POVMs with four outcomes have the form $\m=\{M_j\}_j$ with $M_j = \frac{1}{2}\ketbra{\psi_j}{\psi_j}$.
Up to a global rotation, their POVM elements can be described by the Bloch vectors
\begin{equation}\label{eqn: qubit UB MIC}
    \begin{aligned}
        \vecr_1 &= \big( -\cos \delta, \, -\sin \delta, \, 0 \big)\,, 
        \\
        \vecr_2 &= \big( -\cos \delta, \, \sin \delta, \, 0 \big)\,,
        \\
        \vecr_3 &= \big( \cos \delta, \, -\sin \delta \sin \alpha, \,  \sin \delta \cos \alpha\big)\,, 
        \\
        \vecr_4 &= \big( \cos \delta, \, \sin \delta \sin \alpha, \, - \sin \delta \cos \alpha \big)\,,
    \end{aligned}
\end{equation}
with $0 < \delta < \frac{\pi}{2}$ and $ - \frac{\pi}{2}< \alpha < \frac{\pi}{2}$. The measurement vectors $\{\vecr_j\}$ form a disphenoid, a tetrahedron with four congruent triangular faces (see Figure \ref{fig: MIC} for an example).
If we set $\alpha=0$ and $\delta = \arccos \frac{1}{\sqrt{3}}$, we recover the tetrahedron SIC POVM. Full details for this derivation are given in Appendix \ref{sssec: two angles}. 

\begin{figure}
    \centering
   \begin{subfigure}{0.45\textwidth}
    \includegraphics[width=\textwidth]{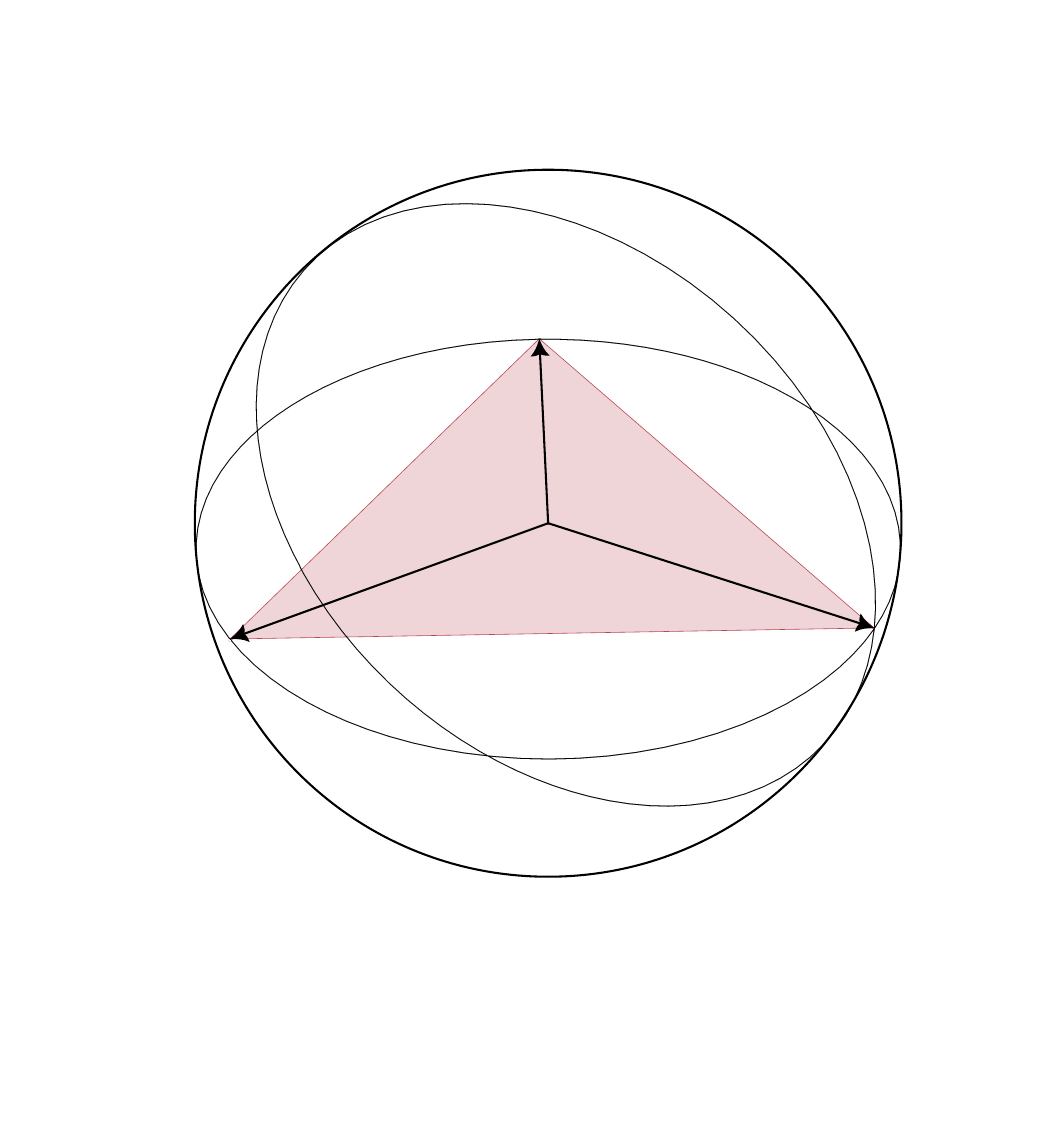}
    \caption{A trine measurement.}
    \label{fig: trine}
\end{subfigure}
\hfill
\begin{subfigure}{0.45\textwidth}
   \includegraphics[width=\textwidth]{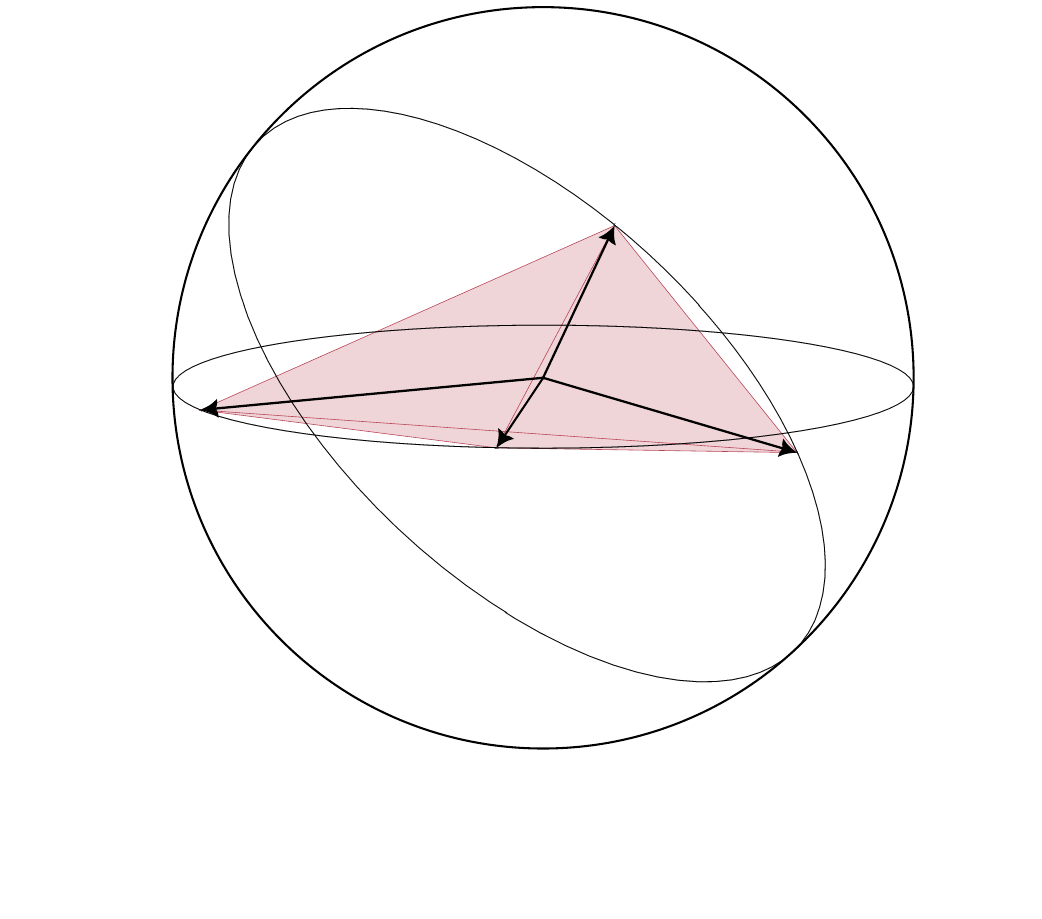}
   \caption{A general unbiased MIC.}
    \label{fig: MIC}
\end{subfigure}
    \caption{Unbiased extremal measurements in dimension two.}
    \label{fig: qubit POVMs}
\end{figure}

\begin{theorem}\label{thm: group cov}
  All unbiased qubit MICs are group covariant. 
\end{theorem}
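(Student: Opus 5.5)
We will use the parametrisation \eqref{eqn: qubit UB MIC}, which the paper derives in Appendix \ref{sssec: two angles}, and exhibit a group of unitaries that permutes the four POVM elements transitively. First we recall why the parametrisation covers the general case. An unbiased qubit MIC has four rank-one elements $M_j=\frac12\ketbra{\psi_j}{\psi_j}$. The condition $\sum_j M_j=\id$ is then equivalent to $\sum_j \vecr_j = 0$ with each $|\vecr_j|=1$. Four unit vectors summing to zero satisfy $|\vecr_1+\vecr_2|=|\vecr_3+\vecr_4|$, and this gives $\vecr_1\cdot\vecr_2=\vecr_3\cdot\vecr_4$. The same argument for the other two pairings shows that opposite edges of the tetrahedron have equal length, so the tetrahedron is a disphenoid. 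Informational completeness ensures it is non-degenerate, i.e. not planar.

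The key step is geometric. A disphenoid has three mutually orthogonal twofold symmetry axes, namely the lines joining the midpoints of pairs of opposite edges. The half-turn rotation about each axis swaps the two vertices of each of the corresponding opposite edges. In the coordinates of \eqref{eqn: qubit UB MIC} the midpoint of $\vecr_1,\vecr_2$ is $(-\cos\delta,0,0)$ and the midpoint of $\vecr_3,\vecr_4$ is $(\cos\delta,0,0)$. The $x$-axis is therefore one symmetry axis, and the half-turn $R_x(\pi)\colon (x,y,z)\mapsto(x,-y,-z)$ maps $\vecr_1\leftrightarrow\vecr_2$ and $\vecr_3\leftrightarrow\vecr_4$, as one checks directly. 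For the pairing $\{1,3\},\{2,4\}$ we compute the axis through $\vecr_1+\vecr_3$ and $\vecr_2+\vecr_4=-(\vecr_1+\vecr_3)$, and verify that the half-turn about it maps $\vecr_1\leftrightarrow\vecr_3$ and $\vecr_2\leftrightarrow\vecr_4$. The pairing $\{1,4\},\{2,3\}$ is handled in the same way.

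The general argument runs as follows. For a unit vector $\hat n\propto \vecr_1+\vecr_3$, the half-turn is $\vec v\mapsto 2(\hat n\cdot\vec v)\hat n-\vec v$. Since $|\vecr_1|=|\vecr_3|$, the vectors $\vecr_1$ and $\vecr_3$ are symmetric about $\hat n$, so the half-turn swaps them. By the equal-opposite-edge property, $\vecr_2$ and $\vecr_4$ make the same angle with $\hat n$ and the same angle with each other, which forces them to be swapped too. This also requires $\vecr_1+\vecr_3\neq 0$, which holds by non-degeneracy. The three half-turns together with the identity form the Klein four-group $\mathbb{Z}_2\times\mathbb{Z}_2$ acting regularly on the vertices.

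Finally, we lift to unitaries. A half-turn about $\hat n$ is implemented by $U=\hat n\cdot\vec\sigma$, since $U(\id+\vec r\cdot\vec\sigma)U^\dagger=\id+(2(\hat n\cdot\vec r)\hat n-\vec r)\cdot\vec\sigma$. The three axes are orthogonal, so the corresponding operators are Pauli operators in a rotated frame, and they form a projective representation of $\mathbb{Z}_2\times\mathbb{Z}_2$ (a Pauli group modulo phases). Fixing $M_1$, we then have $M_j=U_{g_j}M_1U_{g_j}^\dagger$, with $g_j$ the unique group element sending vertex $1$ to $j$. This is exactly group covariance.

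The main obstacle is establishing the mutual orthogonality of the three axes and the transitivity rigorously, without coordinates. The cleanest route is the identity $(\vecr_1+\vecr_2)\cdot(\vecr_1+\vecr_3) = 1+\vecr_1\cdot\vecr_2+\vecr_1\cdot\vecr_3+\vecr_2\cdot\vecr_3$, which equals $\tfrac12\big|\sum_j\vecr_j\big|^2$ after using the equal-opposite-edge relations. It therefore vanishes. If this bookkeeping becomes messy, the fallback is to verify everything by direct substitution into \eqref{eqn: qubit UB MIC}.
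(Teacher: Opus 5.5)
Your proof is correct and is essentially the paper's. Your half-turns about the axes along $\vecr_1+\vecr_2$, $\vecr_1+\vecr_3$ and $\vecr_1+\vecr_4$ lift to exactly the paper's unitaries $X$, $f(\alpha)Y-g(\alpha)Z$ and $g(\alpha)Y+f(\alpha)Z$ acting on the fiducial $\ket{\psi_1}$. The paper verifies their action by direct substitution in the two-angle coordinates, whereas you derive them from the disphenoid's symmetry axes.

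Two small touch-ups. The swap $\vecr_2\leftrightarrow\vecr_4$ is cleanest if you note that $\vecr_2+\vecr_4=-(\vecr_1+\vecr_3)$ lies on the same axis, so the identical computation applies; your "same angle" argument alone does not pin down the image. Also, $(\vecr_1+\vecr_2)\cdot(\vecr_1+\vecr_3)$ equals $\tfrac14\bigl|\sum_j\vecr_j\bigr|^2$, not $\tfrac12\bigl|\sum_j\vecr_j\bigr|^2$; this is harmless, since it vanishes either way.
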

\begin{proof}
We show in Appendix \ref{sssec: group cov} that the measurement vectors represented in \eqref{eqn: qubit UB MIC} can be generated by acting on a suitable fiducial vector with the following basis of unitary matrices, 
    \begin{equation}
 \left\{ \id, \, \; X, \, \; f \mleft( \alpha\mright) Y - g \mleft( \alpha\mright) Z, \, \; g \mleft( \alpha\mright) Y + f \mleft( \alpha \mright) Z   \right\}\,,
\end{equation}
where 
\begin{equation}
    f \mleft( \alpha \mright) = \frac{1}{\sqrt{2}} \left( \cos \frac{\alpha}{2} + \sin \frac{\alpha}{2} \right)\,, \qquad g \mleft( \alpha \mright) = \frac{1}{\sqrt{2}} \left( \cos \frac{\alpha}{2} - \sin \frac{\alpha}{2}\right)\,.
\end{equation}
\end{proof}

\subsection{Maximal randomness}
One can generate maximal randomness from a rank-one projective measurement by choosing any state that's unbiased to the measurement basis. For the trine measurement, there are only two optimal states, both perpendicular to the plane spanned by the three measurement vectors. In either case, we achieve a uniform probability distribution, such that 
\begin{equation}
    \pgs \mleft( \m \mright) = \frac{1}{n}\,, \qquad n = 2 \;\, \textnormal{or} \; \, 3\,.
\end{equation}
The situation is more complicated when $n=4$, as we know from Corollary \ref{corr: 1/d^2 rand} that $\pgs \mleft(\m \mright) > \frac{1}{4}$. Let's begin by labelling the four faces of the polytope $\mathcal{P}=\conv\mleft( \{\vecr_j\}\mright)$ formed by the vectors of the unbiased qubit MIC by $\left( 123\right)$, $\left( 124\right)$, $\left( 134\right)$ and $\left( 234\right)$. We show in Appendix \ref{ssec: qubit MIC max} that the unique states that generate maximal intrinsic randomness from such a POVM are the four pure states that are normal to the faces of $\mathcal{P}$, which we denote by $\vec{n}_{\left( 123\right)}$, $\vec{n}_{\left( 124\right)}$, $\vec{n}_{\left( 134\right)}$ and $\vec{n}_{\left( 234\right)}$. Choosing the state $\vec{n}_{\left( 123\right)}$, for example, the best approach for the eavesdropper is to deterministically guess any of the outcomes 1, 2 or 3. Note that
\begin{equation}\label{eqn: l main text}
    \vec{n}_{\left( 123\right)} \cdot \vecr_1 = \vec{n}_{\left( 123\right)} \cdot \vecr_2 = \vec{n}_{\left( 123\right)} \cdot \vecr_3 = l\,,
\end{equation}
where $l$ is the perpendicular distance from the centre of the Bloch ball to any of the faces of $\mathcal{P}$. Equivalent results hold if we choose any of the other normal vectors instead of $\vec{n}_{\left( 123\right)}$. The intrinsic randomness is then given by 
\begin{equation}\label{eqn: max qubit four}
    \pgs \mleft( \m \mright) = \frac{1}{4} \left( 1+ l\right)\,.
\end{equation}
Using the parameterization of unbiased qubit MIC POVMs given in \eqref{eqn: qubit UB MIC}, we can write $l$ as
\begin{equation}\label{eqn: l param main}
    l= \frac{\sin \delta \cos \delta \cos \alpha}{ \sqrt{4 \cos^2 \delta + \sin^2 \delta \cos^2 \alpha }}\,,
\end{equation}
such that 
\begin{equation}
    \pgs \mleft( \m \mright) = \frac{1}{4} \left( 1 + \frac{\sin \delta \cos \delta \cos \alpha}{ \sqrt{4 \cos^2 \delta + \sin^2 \delta \cos^2 \alpha }}\right)\,.
\end{equation}

\subsection{Randomness for any state}
Starting with $n=2$, let's consider a measurement vector $\vec{m}$ such that $\vecr_1=\vec{m}$ and $\vecr_2=- \vec{m}$. Without loss of generality, we can represent any mixed state $\vecr$ as
\begin{equation}\label{eqn: qubit par proj}
    \vecr = m \vec{m} + p \vec{n}\,, \qquad m^2 + p^2 <1\,,
\end{equation}
where $\vec{n}$ is a unit vector such that $\vec{m} \cdot \vec{n}=0$.
We show in Appendix \ref{ssec: 2 outcomes any} that
the optimal decomposition for the eavesdropper is into the states $\{\vec{q}_j\}$, where 
\begin{equation}
    \vec{q}_1 = \sqrt{1-p^2} \, \vec{m} + p \vec{n}\,, \qquad \vec{q}_2 = -\sqrt{1-p^2} \, \vec{m} + p \vec{n}\,.
\end{equation}
The optimal guessing probability is then 
\begin{equation}\label{eqn: qubit result}
    \pg \mleft( \vecr, \, \m \mright) = \frac{1}{2}\left( 1 + \sqrt{1-p^2} \right)\,. 
\end{equation}
When $m=0$ in \eqref{eqn: qubit par proj}, we return a result of \cite{Fiorentino_2007, Meng_2023}.

To solve the intrinsic randomness generated by measuring any mixed state $\vecr$ with the trine POVM, we begin by constructing a three-dimensional shape $\Delta$ in the Bloch ball. Let's introduce a unit vector $\vec{n}$ that's perpendicular to the plane in which the three measurement vectors $\{\vecr_j\}$ live. Without loss of generality, we will restrict ourselves to states for which
\begin{equation}
    p = \vecr \cdot \vec{n} \geq 0\,.
\end{equation}
From Theorem \ref{thm: fidelity rand}, we know that $\pg \mleft( \vecr, \, \m \mright)= \frac{1}{3}$ when $p=0$.
We can contract and translate the vectors $\{\vecr_j\}$ so they lie a distance $p$ along the vector $\vec{n}$, and still touch the inside of the Bloch sphere. The new vectors are
\begin{equation}
    \vec{s}_j = p \vec{n} + \sqrt{1-p^2} \vec{r}_j \qquad \textnormal{for all} \;\; j \in \{1, \, 2, \, 3\}\,.
\end{equation}
We can then define the shape $\Delta$ as the set
\begin{equation}
    \Delta = \conv \mleft( \{\vec{s}_j\}   \mright) \quad \text{for all } \quad 0 \leq p \leq 1\,. 
\end{equation}
We show in Appendix \ref{ssec: 3 outcomes any} that for any state inside $\Delta$, the optimal decomposition of the eavesdropper uses the three states $\{\vec{s}_j\}$ at the appropriate height $p$, such that 
\begin{equation}\label{eqn: trine result}
    \pg \mleft( \vecr, \, \m \mright) = \frac{1}{3} \left( 1 + \sqrt{1-p^2}\right) \qquad \textnormal{for all} \;\; \vecr \in \Delta\,.
\end{equation}
Note the similarity of \eqref{eqn: trine result} to the solution for projective measurements in \eqref{eqn: qubit result}.
We defer the solution for states $\vecr \notin \Delta$ to Appendix \ref{ssec: 3 outcomes any}, noting only that the eavesdropper's optimal solution is into two, rather than three, states. 
\begin{figure}
    \centering
   \begin{subfigure}{0.45\textwidth}
    \includegraphics[width=\textwidth]{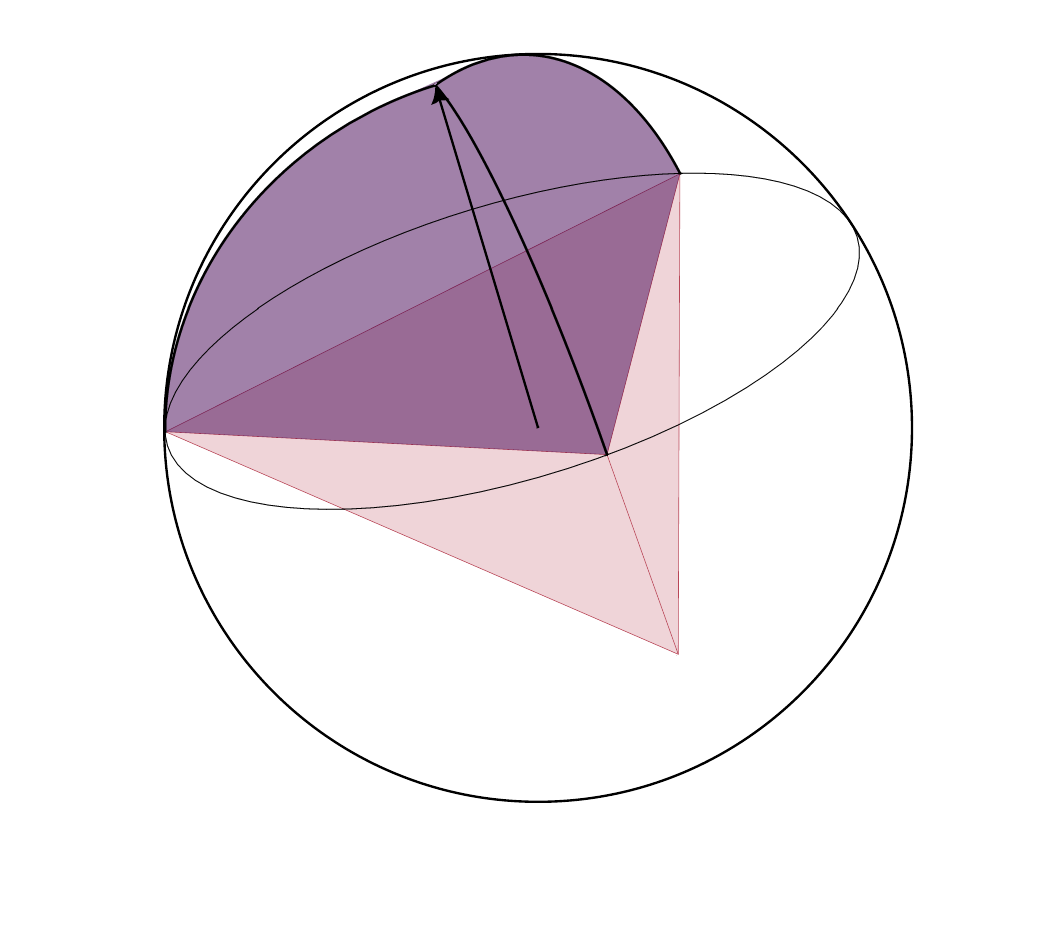}
    \caption{The shape $\Delta_{\left(123 \right)}$.}
    \label{fig: shape}
\end{subfigure}
\hfill
\begin{subfigure}{0.45\textwidth}
    \includegraphics[width=\textwidth]{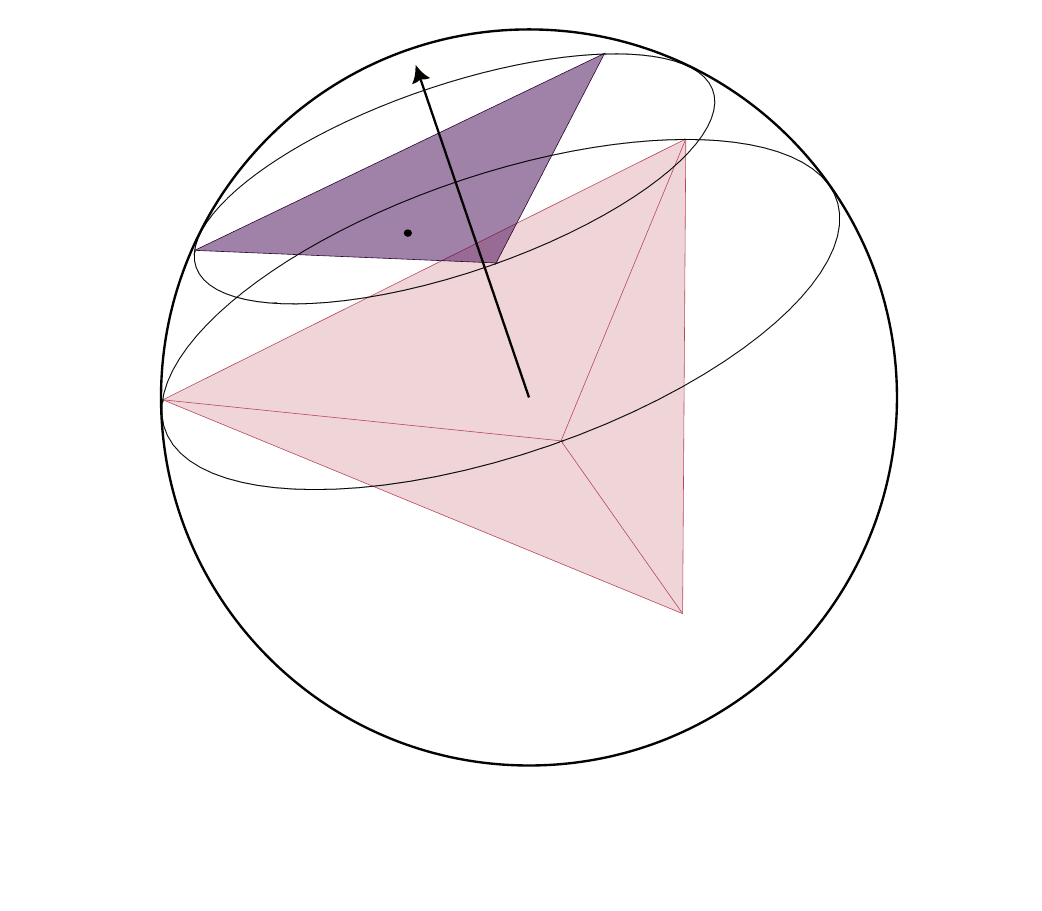}
    \caption{A state inside $\Delta_{\left(123 \right)}$.}
    \label{fig: insideT}
\end{subfigure}
    \caption{Regions of the Bloch ball for an unbiased qubit MIC.}
    \label{fig: dilations}
\end{figure}

We take a similar approach to solve the randomness generated from an unbiased qubit MIC. We solve it here in terms of the POVM vectors $\{\vecr_j\}$, without referring to the angle parameterization in \eqref{eqn: qubit UB MIC}. The relevant proofs are relegated to Appendix \ref{ssec: 4 outcomes any}. We know from Theorem \ref{thm: fidelity rand} that 
\begin{equation}
    \pg \mleft( \vecr, \,  \m \mright) = \frac{1}{2} \qquad \textnormal{for all} \;\; \vecr \in \mathcal{P}\,,
\end{equation}
so we need only concern ourselves with states outside the polytope. From Corollary \ref{corr: notd^2 rand}, we also know that the optimal decomposition for the eavesdropper will have at most three elements.
Without loss of generality, we will work in the region of the Bloch sphere defined by
\begin{equation}\label{eqn: MIC region main}
  p =  \vecr \cdot \vec{n}_{\left(123 \right)} \geq \vecr \cdot \vec{n}\,, \qquad \vec{n} \in \{ \vec{n}\}\,,
\end{equation}
where $\{\vec{n}\}= \{\vec{n}_{\left(123 \right)}, \,  \vec{n}_{\left(124 \right)} , \, \vec{n}_{\left(134 \right)}, \, \vec{n}_{\left(234 \right)}\}$ is the set of unit vectors normal to the faces of $\mathcal{P}$. 
We prove in Appendix \ref{ssec: 4 outcomes any} that every state $\vecr$ in our region \eqref{eqn: MIC region main} must satisfy $p \geq l$, where $l$ is given by \eqref{eqn: l main text}. Let's translate the face $\left( 123 \right)$ such that it lies at a distance $l \leq p \leq 1$ along the axis $\vec{n}_{\left( 123\right)}$, and contract it so all of the translated vectors touch the surface of the Bloch sphere. The result is the set of vertices $\{\vec{s}_j\}$, with 
\begin{equation}
    \vec{s}_j = \left( p -kl \right)\vec{n}_{\left( 123\right)} + k \vecr_j\,, \qquad k = \sqrt{\frac{1-p^2}{1-l^2} } \qquad \textnormal{for all} \;\; j \in \{1, \, 2, \, 3\}\,.
\end{equation}
We can construct a shape, which we call $\Delta_{\left( 123 \right)}$, defined by 
\begin{equation}\label{eqn: shape}
    \Delta_{\left( 123\right)} = \conv \mleft( \{\vec{s}_j\}   \mright) \qquad \text{for all } \;\; l \leq p \leq 1\,. 
\end{equation}
\begin{theorem}\label{thm: inside shape}
    For any unbiased qubit MIC with elements $\{\vecr_j\}$ and for any state $\vecr \in \Delta_{\left( 123 \right)}$, the guessing probability is 
     \begin{equation}
     \pg \left( \vecr, \, \m \right) =  \frac{1}{4} \left( 1 + pl + \sqrt{1-p^2} \sqrt{1-l^2} \right)\,,
 \end{equation}
 where $p = \vecr \cdot \vec{n}_{\left( 123 \right)}$.
\end{theorem}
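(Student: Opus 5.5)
The plan is to obtain matching lower and upper bounds.

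\emph{Lower bound.} For $\vecr \in \Delta_{(123)}$, write $\vecr = \sum_{j=1}^3 q_j \vec{s}_j$ with $q_j \geq 0$ and $\sum_j q_j = 1$. This is possible because all $\vec{s}_j$ lie in the plane $\vec{x}\cdot\vec{n}_{(123)} = p$. Since $\vec{r}_j\cdot\vec{n}_{(123)} = l$, each $\vec{s}_j$ has height $(p-kl)+kl = p$, which confirms this. The radial part of $\vec{s}_j$ is $k(\vecr_j - l\vec{n}_{(123)})$, of norm $k\sqrt{1-l^2} = \sqrt{1-p^2}$, so $|\vec{s}_j|=1$ and the $\vec{s}_j$ are pure. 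In the decomposition \eqref{eqn: pguess} we assign $\rho_j = \vec{s}_j$ to outcome $j$ with weight $q_j$, and give outcome 4 zero weight. The success probability is $\sum_j q_j \tfrac14(1+\vec{s}_j\cdot\vecr_j)$. Now
\[
\vec{s}_j\cdot\vecr_j = (p-kl)l + k = pl + k(1-l^2) = pl + \sqrt{1-p^2}\sqrt{1-l^2},
\]
which is independent of $j$. This gives the claimed value as a lower bound.

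\emph{Upper bound.} I would use the dual of the SDP \eqref{eqn: pguess}. Written as $\max \sum_j \tr(\tilde\rho_j M_j)$ subject to $\sum_j \tilde\rho_j = \rho$ and $\tilde\rho_j \geq 0$, its dual is $\min \tr(\rho Y)$ subject to $Y \geq M_j$ for all $j$. Any feasible $Y$ gives an upper bound. The natural candidate is $Y = \tfrac14(a\id + \vec{b}\cdot\vec\sigma)$, built so that $Y - M_j$ is rank-deficient exactly along $\vec{s}_j$ for $j=1,2,3$ (complementary slackness). With $M_j = \tfrac14(\id + \vecr_j\cdot\vec\sigma)$, we need $Y - M_j = \tfrac14\big((a-1)\id + (\vec b - \vecr_j)\cdot\vec\sigma\big) \geq 0$, i.e.\ $|\vec b - \vecr_j| \leq a-1$, with equality for $j=1,2,3$ and the kernel vector pointing along $-\vec{s}_j$. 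That is, $\vec b - \vecr_j = (a-1)\vec{s}_j$. By symmetry, choose $\vec b = \beta\,\vec{n}_{(123)}$. Matching the component of $\vecr_j$ orthogonal to $\vec{n}_{(123)}$ forces $(a-1)k = -1$, so $a - 1 = -1/k$ would be negative. So the sign must be flipped: set $\vec b - \vecr_j = -(a-1)\vec{s}_j$. Then $a-1 = 1/k$ and $\beta = l - (p-kl)/k = 2l - p/k$.

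The objective is then
\[
\tr(\rho Y) = \tfrac14(a + \vec b\cdot\vecr) = \tfrac14\big(1 + 1/k + \beta p\big),
\]
using $\vecr\cdot\vec n_{(123)} = p$. Substituting $k$ and simplifying should reproduce $\tfrac14(1 + pl + \sqrt{1-p^2}\sqrt{1-l^2})$. I would verify this algebraically; if it fails, I would instead fix $\vec b$ by requiring $\tr(\rho Y)$ to equal the primal value, with the tangency conditions for $j = 1,2,3$ used to determine the remaining free parameter.

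\emph{Main obstacle.} The remaining constraint $Y \geq M_4$ requires $|\vec b - \vecr_4| \leq a - 1 = 1/k$. This is where the hypothesis that $\vecr$ lies in our region \eqref{eqn: MIC region main}, with $p \geq l$, must enter. We know $\vecr_4\cdot\vec n_{(123)} = -3l$, since the centroid of the disphenoid is the origin. Also $|\vecr_4 - l'\vec n| $ is determined by the geometry of an equifacial tetrahedron, where all vertices are equidistant from the centroid. The inequality should therefore reduce to a one-variable condition in $p$ and $l$ that holds exactly for $p \geq l$. I expect verifying this, including the equal-circumradius property of the face $(123)$ used to make $\vec b$ simultaneously tangent for $j = 1,2,3$, to be the hardest step. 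I would handle it by projecting the $\vecr_j$ onto the plane orthogonal to $\vec n_{(123)}$ and using that the projections of $\vecr_1, \vecr_2, \vecr_3$ all have length $\sqrt{1-l^2}$.
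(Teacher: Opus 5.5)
Your proof follows the paper's own: the lower bound comes from decomposing $\vecr$ into the pure states $\vec{s}_j$. The upper bound uses a dual variable whose Bloch vector lies along $\vec{n}_{(123)}$, with $Y-M_j$ vanishing on the state $\vec{s}_j$ for $j=1,2,3$. As written, however, one step fails. When you solve $\vec{b}-\vecr_j=-(a-1)\vec{s}_j$, you subtract $l$ (the $\vec{n}_{(123)}$-component of $\vecr_j$), but you take $p-kl$ as the $\vec{n}_{(123)}$-component of $\vec{s}_j$. Its full component is $p$, because $k\vecr_j$ contributes the extra $kl$. The correct value is therefore $\beta=-(p-kl)/k=l-p/k$, not $2l-p/k$. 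With this value, $a=1+1/k$, your $Y$ coincides with the paper's $X$, and
\[
\tr \mleft( \rho Y \mright)=\tfrac14\bigl(1+\tfrac1k+\beta p\bigr)=\tfrac14\Bigl(1+pl+\tfrac{1-p^2}{k}\Bigr)=\tfrac14\bigl(1+pl+\sqrt{1-p^2}\sqrt{1-l^2}\bigr).
\]
Your $\beta$ would overshoot this by $pl/4$. At $p=l$ it even violates $Y\sgeq M_4$, since then $|\beta\,\vec{n}_{(123)}-\vecr_4|^2=1+7l^2>1/k^2=1$.

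The step you flag as the main obstacle is then a single line. It also needs no circumradius argument: tangency at all three $\vec{s}_j$ is automatic, because $|\vecr_j|=1$ and $\vecr_j\cdot\vec{n}_{(123)}=l$ for $j=1,2,3$. Using only $|\vecr_4|=1$ and $\vecr_4\cdot\vec{n}_{(123)}=-3l$, you get
\[
\bigl|\beta\,\vec{n}_{(123)}-\vecr_4\bigr|^2=\bigl|\beta\,\vec{n}_{(123)}-\vecr_1\bigr|^2+8\beta l=\frac{1}{k^2}+8\beta l.
\]
So $Y\sgeq M_4$ is equivalent to $\beta\le 0$, i.e.\ $p\ge kl$. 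This holds because $p\ge l$ forces $k\le 1$. After multiplying through by $k^2$, this is exactly the paper's check $1-8kl(p-kl)\le 1$.
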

This form differs from \eqref{eqn: qubit result} and \eqref{eqn: trine result}, since $l$ cannot be zero if $\m$ is extremal. To express the result in terms of the explicit parameterization of $\m$ in \eqref{eqn: qubit UB MIC}, we can simply sub in \eqref{eqn: l param main} for $l$. Again we defer the proof for states satisfying \eqref{eqn: MIC region main} and $\vecr \notin \Delta_{\left( 123\right)}$ to Appendix \ref{ssec: 4 outcomes any}, remarking only that the eavesdropper's optimal decomposition is into two states, rather than three. Figure \ref{fig: dilations} shows the shape $\Delta_{\left(123 \right)}$ and a state $\vecr$ that lives inside it. Note that these results on qubit MICs hold for source-device-independent randomness generation, where the state is untrusted, as we could reconstruct $\vecr$ straight from the measurement data.

\section{Scissors and SIC measurements}\label{sec: scissors}
\begin{figure}
    \centering
    \begin{subfigure}{0.45\textwidth}
    \includegraphics[width=\textwidth]{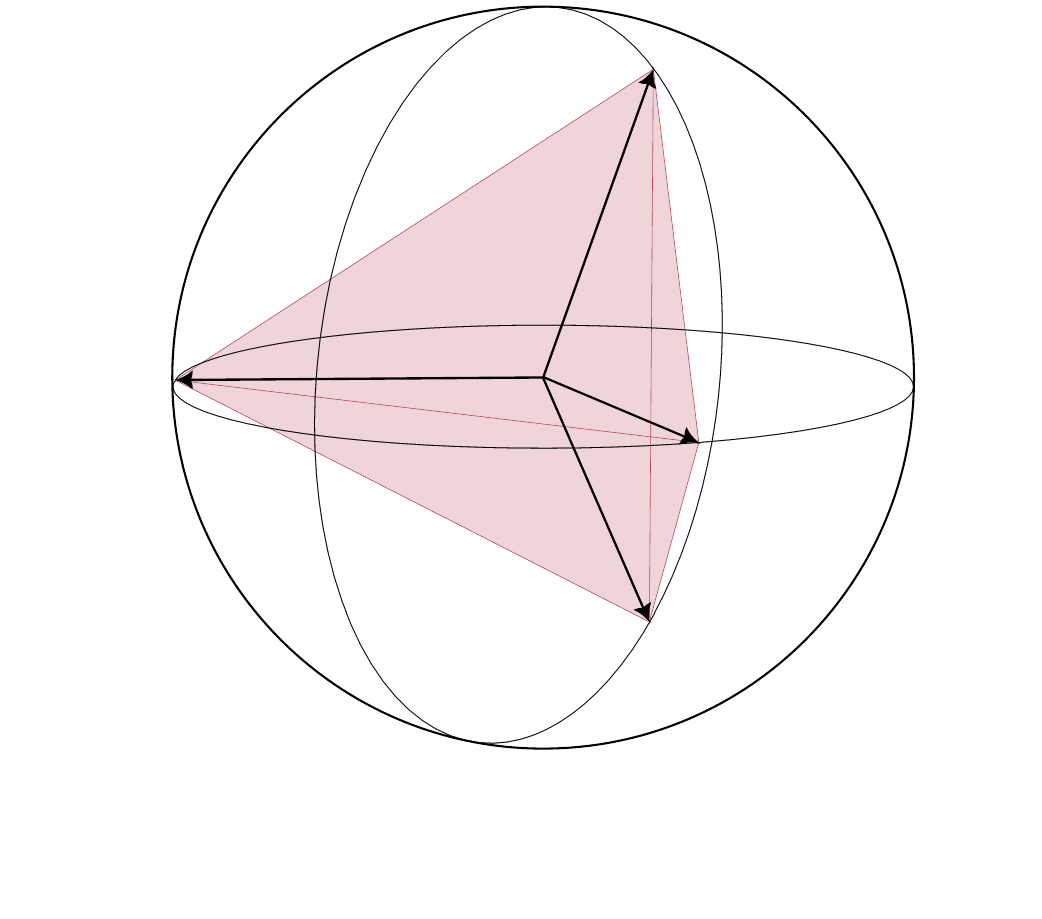}
    \caption{A SIC POVM.}
    \label{fig: SIC}
\end{subfigure}
\hfill
\begin{subfigure}{0.45\textwidth}
    \includegraphics[width=\textwidth]{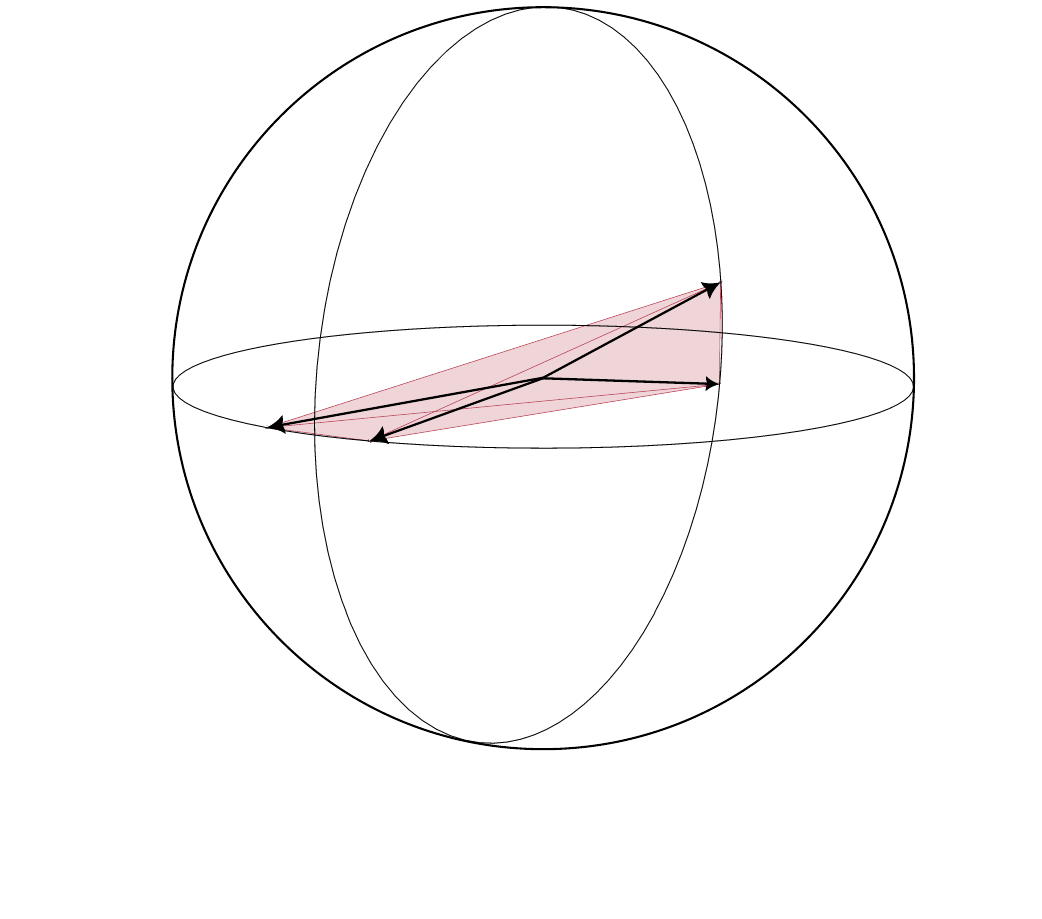}
    \caption{A general scissors POVM.}
    \label{fig: scissors}
\end{subfigure}
    \caption{Two scissors measurements in dimension two.}
    \label{fig: scissors_and_SIC}
\end{figure}
Let's return to the parameterization of unbiased qubit MICs given in \eqref{eqn: qubit UB MIC}. We show in Appendix \ref{ssec: scissors and SIC} that the squared area of any of the faces of the polytope $\mathcal{P}=\conv \mleft( \{\vecr_j\} \mright)$ is 
\begin{equation}
    A^2 = \sin^2 \delta \left( 4 \cos^2 \delta + \sin^2 \delta \cos^2 \alpha \right)\,.
\end{equation}
Since these POVMs are informationally complete, we can identify any state $\vecr$ by the probability distribution $\{p_j\}$ produced when it is measured by any such POVM. Instead of representing $\{p_j\}$ in standard Euclidean coordinates, we could represent it in baryonic coordinates on the Bloch sphere, choosing the states $\{\vecr_j\}$ as the basis (the resulting three-dimensional point is the average post-measurement state of $\vecr$). Applying this process to all states $\vecr$, we return an ellipsoid, whose properties depend on the POVM $\m$. In Appendix \ref{ssec: geometry}, we show that the volume of this ellipsoid is 
\begin{equation}
    \textnormal{Vol} = \frac{\pi}{3} \cos^2 \delta \sin^4 \delta \cos^2 \alpha\,.
\end{equation}
Interestingly, we find that the area $A$, the volume $\textnormal{Vol}$ and the perpendicular length $l$ from the origin to the faces of $\mathcal{P}$ are maximised exclusively by the SIC POVM, for which $\delta= \arccos \frac{1}{\sqrt{3}}$ and $\alpha=0$. This complements the known result \cite{debrota2021} that the Euclidean volume of the probability space of a SIC POVM in any dimension is strictly larger than that of any other MIC, be it unbiased or otherwise. 
Since we've shown in \eqref{eqn: max qubit four} that the maximal guessing probability of an unbiased qubit MIC depends only on the length $l$, the SIC is distinguished within this class as having the \emph{least} intrinsic randomness. 

We also consider a broader class of measurements, which we call \emph{scissors} measurements. Used in \cite{Woodhead_2020} to generate device-independent randomness, they are found by setting $\alpha=0$ in \eqref{eqn: qubit UB MIC}, so they're parameterized only by the angle $\delta$. Figure \ref{fig: scissors_and_SIC} shows the SIC POVM alongside a more general scissors measurement for comparison.
From \eqref{eqn: l param main}, the length $l$ is given by
\begin{equation}
l = \frac{\sin \delta \cos \delta}{ \sqrt{1 + 3 \cos^2 \delta }}  \,,  
\end{equation}
which tends to zero in the limit $\delta \rightarrow 0$.
If we consider a noisy state $\rho$ represented by $\vecr = p \vec{n}_{\left(123 \right)}$, Theorem \ref{thm: inside shape} tells us that the guessing probability tends towards
\begin{equation}
     \pg \mleft( \vecr, \, \m \mright) \rightarrow  \frac{1}{4} \left( 1 + \sqrt{1-p^2}  \right)\,,
 \end{equation}
such that, from \cite[Theorem B.4]{anco2024securerandomnessquantumstate}, we can come arbitrarily close to achieving maximal randomness from this state.

This one-parameter family of qubit measurements includes a SIC POVM and has a non-extremal measurement as its limit. We might ask if there exist families of scissors-like measurements in higher dimensions. The answer is yes, at least for dimensions three and four. In dimension three, we define the group of Weyl-Heisenberg operators
\begin{equation}
    D_{jk} = \eta^{\frac{jk}{2}} \sum_{m=0}^{2} \eta^{jm} \ketbra{k + m}{m}\,, \qquad \eta= e^{i \frac{2 \pi}{3}}\,,
\end{equation}
where addition is taken modulo $3$. Using \cite{D_Ariano_2004}, we show in Appendix \ref{ssec: qudit scissors} that we can define an unbiased MIC $\m=\{M_{jk}\}_{jk}$ with elements 
\begin{equation}
    M_{jk} = \frac{1}{3} \ketbra{\psi_{jk}}{\psi_{jk}}\,, \qquad \ket{\psi_{jk}} = D_{jk} \ket{\psi}  \qquad \textnormal{for all} \;\; j, \, k \in \{0, \, 1, \, 2\}
\end{equation}
using the fiducial state 
\begin{equation}
\ket{\psi} = \frac{1}{2\sqrt{3}} \left(  \left( 2 \cos \delta + \sqrt{2} \sin \delta\right) \left(\ket{0} + \ket{1} \right) + \left( 2 \cos \delta -2 \sqrt{2} \sin \delta\right) \ket{2}    \right)\,,   
\end{equation}
where $0 < \delta \leq \arccos \frac{1}{\sqrt{3}}$.
When $\cos \delta = \sqrt{\frac{2}{3}}$, we recover a SIC POVM \cite{Renes_2004}, while if we act on the state $\ket{0}$, the resulting probability distribution tends to the uniform one in the limit $\delta \rightarrow 0$. 
We defer the form of the scissors-like measurement in dimension four to Appendix \ref{ssec: qudit scissors}, only noting here that it is characterised by a single parameter $a$, which returns a SIC POVM from \cite{belovs2008, Appleby_2012} when $a = \sqrt{2 + \sqrt{5}}$ and gives arbitrarily close to uniform randomness when $a \rightarrow 1$. From Corollary \ref{corr: 1/d^2 rand}, these POVMs are the closest we can come to achieving maximal randomness without biasing the outcomes (e.g., by using a skewed SIC measurement); we leave the construction of scissors-like measurements in higher dimension for future work.

\section{Conclusions}\label{sec: conc}
We have investigated the intrinsic randomness of extremal rank-one quantum measurements in a device-dependent scenario. We focused particularly on the class of \emph{unbiased} extremal measurements, whose randomness we solved explicitly for any state in dimension two. In higher dimension $d$, we characterised, in terms of the quantum fidelity, the randomness generated by any unbiased extremal rank-one measurement of $n$ outcomes acting on any state, obtaining a state-independent lower bound of $\log \frac{n}{d}$ bits. We also proved that the maximal amount of intrinsic randomness, $ 2\log d$ bits, can be generated in any dimension $d$ in which there exists a SIC measurement. Our results for MICs apply equally to a source-device-dependent scenario, since these measurements reconstruct the untrusted quantum state.
Several open questions present themselves. One is whether the higher-dimensional measurements introduced in this work will, like their qubit analogues, be useful for device-independent randomness generation. Another question is whether SIC POVMs are distinguished in higher dimensions, like they are in dimension two, as having the least maximal intrinsic randomness among all the unbiased MICs. In future work, following \cite{curran2025maximalintrinsicrandomnessnoisy}, it would be interesting to explore how the randomness of the measurements considered here would change in the more realistic scenario where they were affected by noise.

\section*{Acknowledgments}
Thank you to Raja Yehia, Ranieri V. Nery and Antonio Acín for their feedback on this manuscript. I acknowledge funding
from the Government of Spain (Severo Ochoa CEX2019-
000910-S and TRANQI), Fundació Cellex, Fundació MirPuig, Generalitat de Catalunya (CERCA program), ERC
AdG CERQUTE and Ayuda PRE2022-101448 financiada por MCIN/AEI/ 10.13039/501100011033 y por el
FSE+.

\bibliographystyle{quantum}
\bibliography{refs2}

\appendix

\setcounter{theorem}{2}

\section{Optimisation problems}
\subsection{Vector optimisation}
Following \cite{Boyd2004}, we can write an optimisation problem in standard form as 
    \begin{equation}\label{eqn: opt stand}
    \begin{aligned}
         \underset{x}{\text{minimize}}  \, \qquad & f_0 \mleft( x \mright) && {}
         \\
          \text{subject to} \qquad & g_{i} \mleft( x \mright) = 0 \qquad &&  \textnormal{for all} \;\; i\,,
        \\
          \qquad & h_{j} \mleft( x \mright) \leq 0 \qquad &&  \textnormal{for all} \;\; j\,,
         \end{aligned}
\end{equation}
where the function $f_0$ returns a real value, and where the optimisation is over the intersection of the domains of the constraints $\{g_i\}$ and $\{h_j\}$. For our purposes, the variable $x$ is a real-valued vector of finite dimension $n$. The optimality, or KKT, conditions for the variable $x$ are 
\begin{align}
g_i \mleft( x \mright) &= 0\,, \qquad && \textnormal{for all} \;\; i\,,
    \\
    h_{j} \mleft( x \mright) & \leq 0\,, \qquad && \textnormal{for all} \;\; j\,, \label{eqn: KKT f ineq}
    \\
    \lambda_j & \geq 0 \,, \qquad && \textnormal{for all} \;\; j \,, \label{eqn: KKT lambda ineq}
    \\
    \lambda_j  h_j \mleft( x \mright) &= 0\,, \qquad && \textnormal{for all} \;\; j\,, \label{eqn: KKT comp slack}
    \\
    \nabla f_0 \mleft( x \mright) + \sum_{i} \nu_i \nabla g_i \mleft( x \mright) + \sum_{j} \lambda_j \nabla h_j \mleft( x \mright)  &= 0\,, \qquad && {} \label{eqn: KKT stat}
\end{align}
where we call $\{\nu_i\}$ and $\{\lambda_j\}$ the dual variables. We will call \eqref{eqn: KKT comp slack} the \emph{complementary slackness} condition and \eqref{eqn: KKT stat} the \emph{stationarity} condition. The dual variables are real, and the functions $\{g_i\}$ and $\{h_j\}$ return real values for all $i$ and $j$. The KKT conditions are necessary and sufficient when the primal problem is convex i.e., when all $\{g_i\}$ are affine and all $\{h_j\}$ are convex; otherwise, they are only necessary \cite{Boyd2004}.

\subsection{Semidefinite programs}
Following \cite{Skrzypczyk_2023, waltrous2018info},
we consider Hermitian operators $F$, which satisfy $F=F^{\dagger}$, and hermiticity-preserving maps $\Lambda \mleft(\cdot \mright)$, such that $\Lambda \mleft( F \mright)$ is Hermitian when $F$ is Hermitian. The adjoint map $\Lambda^{\dagger} \mleft( \cdot \mright)$ is the unique map satisfying
\begin{equation}
    \tr \mleft( \Lambda \mleft( X \mright) \, Y  \mright) = \tr \mleft( X \, \Lambda^{\dagger} \mleft( Y \mright)  \mright)\qquad \textnormal{for all} \;\; X, \, Y\,.   
\end{equation}
For a set of Hermitian operators $A$, $\{B_i\}$ and $\{C_j\}$ and hermiticity-preserving maps $\{\Phi_{i} \mleft( \cdot \mright) \}$ and $\{\Gamma_j \mleft( \cdot \mright)\}$, the primal and dual semidefinite programming problems (SDPs) are given by
\begin{align}\label{eqn: standard SDP}
  \begin{array}{rlcrl}
    \multicolumn{2}{c}{} & \qquad \quad  & 
      \multicolumn{2}{c}{} \vspace{0.1cm} \\
  \underset{  X   }{\text{maximize}}  &  \tr \mleft( A X \mright) & & \underset{ \{ Y_i \}\,, \,\{  Z_j     \}   }{\text{minimize}}    &  \underset{i}{\sum} \tr \mleft( Y_i B_i  \mright) - \underset{j}{\sum} \tr \mleft( Z_j C_j  \mright)  \vspace{0.1cm}
    \\
    \textrm{subject to} &  \Phi_i \mleft( X \mright) = B_i   \quad \text{for all} \;\; i   \vspace{0.1cm}  & & \textrm{subject to} &  \underset{j }{\sum  } \, \Gamma_j^{\dagger} \mleft( Z_j \mright) =   \underset{i }{\sum  } \Phi_i^{\dagger} \mleft( Y_i \mright) - A 
    \\
    & \Gamma_j \mleft( X \mright) \sgeq C_j \; \;\;\, \textnormal{for all} \;\; j \,    &&  &  Z_j \sgeq 0 \qquad \; \;\,  \textnormal{for all} \;\; j  \,.
  \end{array} 
\end{align}
The primal problem is strictly feasible if there exists a valid variable $X$ such that $\Gamma_j \mleft( X\mright)- C_j$ is \emph{strictly positive} for all $j$; similarly, the dual problem is strictly feasible if there exist valid variables $\{Y_i\}$ and $\{Z_j\}$ such that $Z_j$ is strictly positive for all $j$. From \cite[Conic Duality Theorem]{Nesterov1994}, if both the problems are strictly feasible then their solutions coincide, and we are free to use either the primal or dual formulations to solve our problem. The complementary slackness conditions, which are necessary for the optimal solution, are 
\begin{equation}\label{eqn: SDP comp slack}
    Z_j \left( \Gamma_j \mleft( X\mright) - C_j \right) = 0 \qquad \textnormal{for all} \;\; j \,.
\end{equation}
The 
intrinsic randomness of a (possibly mixed) state $\rho$ and an extremal POVM $\m=\{M_j\}_j$ with $n$ outcomes is given by
\begin{equation}\label{eqn: SDP rand}
\begin{aligned}
  \pg \mleft( \rho, \, \m \mright) = \;\;     \underset{\{ p_j, \, \rho_j \} }{\text{maximize}}  \, \quad & \sum_{j=1}^{n} p_j \tr \mleft( \rho_j M_j \mright)  && {}
        \\
         \text{subject to} \quad & \rho_j \sgeq 0 \quad \textnormal{for all} \;\; j \in \{1, \, \hdots, \, n\} &&   
         \\
          \quad & \sum_{j=1}^{n} p_j \rho_j = \rho \,, \quad &&  {}
    \end{aligned}
\end{equation}
where $\{p_j\}$ is a probability distribution. To cast this as a standard SDP of the form \eqref{eqn: standard SDP}, introduce
\begin{equation}
    \tilde{M} = \sum_{j=1}^{n} M_j \otimes \ketbra{j}{j}
\end{equation}
and a variable $\tilde{\sigma}$ such that
\begin{equation}
\begin{aligned}
  \pg \mleft( \rho, \, \m \mright) = \;\;     \underset{ \tilde{\sigma} }{\text{maximize}}  \, \quad & \tr \mleft( \tilde{\sigma} \; \tilde{M} \mright)  && {}
        \\
         \text{subject to} \quad & \tilde{\sigma} \sgeq 0 &&   
         \\
          \quad & \Phi \mleft( \tilde{\sigma}\mright) = \rho\,,  \quad &&  {}
    \end{aligned}
\end{equation}
where
\begin{equation}\label{eqn: phi eq}
  \Phi \mleft( \cdot \mright)  = \sum_{j=1}^{n} \left( \id \otimes \bra{j}\right) \cdot \left( \id \otimes \ket{j}\right)  \,.
\end{equation}
We can return to the original form \eqref{eqn: SDP rand} by choosing
\begin{equation}
    \tilde{\sigma} = \sum_{j=1}^{n} p_j \rho_j \otimes \ketbra{j}{j}
\end{equation}
for some valid decomposition $\{p_j, \, \rho_j\}$. Identifying the adjoint map $\Phi^{\dagger} \mleft( \cdot \mright) = \left( \cdot \right) \otimes \id$, we can write the dual in standard form as
\begin{equation}
\begin{aligned}
  \pg \mleft( \rho, \, \m \mright) = \;\;     \underset{ X, \, Z }{\text{minimize}}  \, \quad & \tr \mleft( X \rho \mright)  && {}
        \\
         \text{subject to} \quad & Z = X \otimes \id - \tilde{M} &&   
         \\
          \quad & Z \sgeq 0\,,  \quad &&  {}
    \end{aligned}
\end{equation}
or, more simply, as
\begin{equation}\label{eqn: rand dual}
\begin{aligned}
  \pg \mleft( \rho, \, \m \mright) = \;\;     \underset{ X }{\text{minimize}}  \, \quad & \tr \mleft( X \rho \mright)  && {}
        \\
         \text{subject to} \quad &X \sgeq M_j &&  \quad \textnormal{for all} \;\; j \in \{1, \, \hdots, \, n\}\,.  
    \end{aligned}
\end{equation}
Similarly, let $\{\nu_j\}$ be a known ensemble of $n$ quantum states $\{\nu_j\}$, subnormalised such that the probability of receiving the state labelled by $j$ is $p_j= \tr \nu_j$, and define
\begin{equation}
    \nu = \sum_{j=1}^{n} \nu_j\,.
\end{equation}
The probability of correctly discriminating the states is 
\begin{equation}\label{eqn: SDP SD}
\begin{aligned}
  \pg \mleft( \{\nu_j\} \mright) = \;\;     \underset{\{ E_j \}_j }{\text{maximize}}  \, \quad & \sum_{j=1}^{n}  \tr \mleft( \nu_j E_j \mright)  && {}
        \\
         \text{subject to} \quad & E_j \sgeq 0 \quad \textnormal{for all} \;\; j \in \{1, \, \hdots, \, n\} &&   
         \\
          \quad & \sum_{j=1}^{n} E_j = \id \,.\quad &&  {}
    \end{aligned}
\end{equation}
To cast this as an SDP in standard form \eqref{eqn: standard SDP}, we introduce the operator
\begin{equation}
    \tilde{\nu} = \sum_{j=1}^{n} \nu_j \otimes \ketbra{j}{j}\,
\end{equation}
and a variable $\tilde{E}$ such that 
\begin{equation}
\begin{aligned}
  \pg \mleft( \{\nu_j\} \mright) = \;\;     \underset{ \tilde{E} }{\text{maximize}}  \, \quad & \tr \mleft( \tilde{\nu} \; \tilde{E} \mright)  && {}
        \\
         \text{subject to} \quad & \tilde{E} \sgeq 0 &&   
         \\
          \quad & \Phi \mleft( \tilde{E}\mright) = \id \,,  \quad &&  {}
    \end{aligned}
\end{equation}
with $\Phi \mleft( \cdot \mright)$ given by \eqref{eqn: phi eq}. We can recover \eqref{eqn: SDP SD} by choosing 
\begin{equation}
    \tilde{E} = \sum_{j=1}^{n} E_j \otimes \ketbra{j}{j}
\end{equation}
for some valid POVM $\{E_j\}_j$. Using $\Phi^{\dagger} \mleft( \cdot \mright) = \left(\cdot \right) \otimes \id$ as before, the dual problem is
\begin{equation}
\begin{aligned}
  \pg \mleft( \{\nu_j\} \mright) = \;\;     \underset{ Y, \, Z }{\text{minimize}}  \, \quad & \tr  Y    && {}
        \\
         \text{subject to} \quad & Z = Y \otimes \id - \tilde{\nu} &&   
         \\
          \quad & Z \sgeq 0\,,  \quad &&  {}
    \end{aligned}
\end{equation}
or, more simply, 
\begin{equation}\label{eqn: SD dual}
\begin{aligned}
  \pg \mleft( \{\nu_j\} \mright) = \;\;     \underset{ Y }{\text{minimize}}  \, \quad & \tr  Y  && {}
        \\
         \text{subject to} \quad &Y \sgeq \nu_j &&  \quad \textnormal{for all} \;\; j \in \{1, \, \hdots, \, n\}\,.  
    \end{aligned}
\end{equation}

\section{Maximally random measurements}\label{app: max rand meas}

Let's start with any SIC POVM $\m= \{M_j\}_j$ in dimension $d$, where $M_j= \frac{1}{d} \ketbra{\psi_j}{\psi_j}$. For convenience, we use the notation $\Pi_j = \ketbra{\psi_j}{\psi_j}$ such that 
\begin{equation}
\tr \mleft( \Pi_i \Pi_j \mright) = \frac{d \delta_{i, \, j} +1}{  d+1} \qquad \textnormal{for all} \;\; i, \, j \in \{1, \, \hdots, \, d^2\}\,.
\end{equation}
The set $\{\Pi_j\}$ is a linearly independent basis for the operator space of $d \times d$ Hermitian matrices. We now define a parameter $0 < \gamma < 1$ and, singling out the element $\Pi_1$, we construct a new set of positive semidefinite linearly independent operators $\{A_j\}$ given by 
\begin{equation}
    A_j = \begin{cases}
        \gamma \Pi_1\,,  \qquad & j =1\,,
        \\[5pt]
        \frac{1- \gamma}{d-1} \Pi_j\,, \qquad & j \in \{2, \, \hdots, \, d^2 \} \,.
    \end{cases}
\end{equation}
These operators do not in general form a POVM as they do not sum to the identity (except when $\gamma= \frac{1}{d}$, which recovers the original SIC), but we can create a POVM $\mathcal{N}=\{N_j\}_j$ (see \cite[Section IV]{Caves_2002}) by defining 
\begin{equation}
    \Omega= \sum_{j=1}^{d^2} A_j
\end{equation}
and setting
\begin{equation}\label{eqn: POVM Nj}
    N_j = \Omega^{-\frac{1}{2}} A_j \Omega^{- \frac{1}{2}}\,.
\end{equation}
We can show by contradiction that the operator $\Omega$ is positive definite \cite{Caves_2002}. If it were not, there would exist at least one state $\ket{\psi}$ such that $\langle \psi | \Omega | \psi\rangle=0$. Since the operators $\{A_j\}$ are positive semidefinite, this would imply that $\langle \psi | A_j | \psi\rangle=0$ for all $j$, so $\ket{\psi}$ must live outside the support of $\{A_j\}$. This contradicts the assumption that the set $\{A_j\}$ spans the operator space, however, so $\Omega$ must have full rank, and we are free to invert it as in \eqref{eqn: POVM Nj}. Introducing the notation $\Pi_{\neq 1}= \id- \Pi_{1}$, we can write 
\begin{equation}
    \Omega = \sum_{j=1}^{d^2} A_j
 = \gamma \Pi_1 + \frac{1- \gamma}{d-1} \left( d \id - \Pi_1 \right) = \Pi_1 + \frac{d \left( 1- \gamma\right)}{d-1} \Pi_{\neq 1}\,.
 \end{equation}
 We can then solve $\Omega^{-\frac{1}{2}}$ explicitly as
 \begin{equation}
     \Omega^{- \frac{1}{2}} = \Pi_{1} + \sqrt{ \frac{d-1}{d \left( 1 - \gamma  \right)} } \Pi_{\neq 1} = \left(1 - \Gamma  \right)\Pi_{1} + \Gamma \id   \,,
 \end{equation}
 where 
 \begin{equation}
     \Gamma = \sqrt{ \frac{d-1}{d \left( 1 - \gamma  \right)} }\,.
 \end{equation}
The elements of the POVM $\mathcal{N}$ are then
\begin{equation}\label{eqn: max rand N}
    N_{j} = \begin{cases}
        \gamma \Pi_{1}\,, \qquad & j =1\,,
        \\[5pt]
      \frac{1- \gamma}{d-1} \left(  \frac{\left(1- \Gamma \right)^2}{d+1} \Pi_1 + \Gamma \left( 1 - \Gamma \right) \left( \Pi_1 \Pi_j + \Pi_j \Pi_1 \right) + \Gamma^{2} \Pi_j \right)\,, \qquad & j \in \{2, \, \hdots, \, d^2\}\,.
    \end{cases}
\end{equation}
If we performed $\mathcal{N}=\{N_j\}_j$ on the state $\rho = \Pi_1$, we would obtain the following probability distribution,
\begin{equation}
    p_j = \tr \mleft( \rho \,
    N_j\mright) =  
    \begin{cases}
     \gamma\,, \qquad & j=1\,,
     \\[5pt]
     \frac{1- \gamma}{d^2-1}\,, \qquad & j \in \{2, \, \hdots, \, d^2  \}\,.
    \end{cases}
\end{equation}
If we choose $\gamma=\frac{1}{d^2}$, the probability distribution is uniform, so we can achieve the bound \eqref{eqn: rand bound pure state}. Furthermore, imagine we have a state $\rho_\ve$ that's affected by an amount $0 < \varepsilon \leq 1$ of isotropic noise, i.e.,
\begin{equation}
    \rho_\ve  = \left( 1 - \varepsilon \right) \Pi_1 +\varepsilon \frac{\id}{d} \,.
\end{equation}
We have 
\begin{equation}
    \sqrt{\rho_\ve} = \sqrt{ \frac{d- \varepsilon \left( d-1 \right)}{d} } \Pi_1 + \sqrt{\frac{\varepsilon}{d}} \Pi_{\neq 1}\,,
\end{equation}
such that
\begin{equation}
    \tr \mleft(  \sqrt{\rho_\ve} \, N_j \mright) = 
    \begin{cases}
    \gamma \sqrt{\frac{d - \varepsilon \left( d-1\right)}{d} }\,, \qquad & j=1\,,
    \\[5pt]
     \frac{1}{d^2-1} \left( \tr \sqrt{\rho_\ve} - \gamma \sqrt{\frac{d- \varepsilon \left( d-1\right)}{d} } \right)\,, \qquad & j \in \{2, \, \hdots, \, d^2\}\,.
    \end{cases}
\end{equation}
If we choose 
\begin{equation}
    \gamma = \frac{1}{d^2} \left( 1+ \left( d-1\right) \sqrt{ \frac{\varepsilon}{d - \varepsilon \left( d-1 \right)} } \right) = \sqrt{\frac{d}{d - \varepsilon \left( d-1 \right)} }  \,\frac{\tr \sqrt{\rho_\ve}}{d^2}\,, 
\end{equation}
we have 
\begin{equation}
    \tr \mleft( \sqrt{\rho_\ve} \, N_j\mright) = \frac{\tr \sqrt{\rho_\ve}}{d^2} \qquad \textnormal{for all} \;\; j \in \{1, \, \hdots, \, d^2\}\,.
\end{equation}
From \cite[Theorem B.4]{anco2024securerandomnessquantumstate}, then, we can saturate the bound \eqref{eqn: rand bound state} and achieve maximal intrinsic randomness from $\rho_\ve$. When $\rho_\ve$ is maximally mixed, $\varepsilon=1$, the optimal measurement is the original SIC POVM.

We can find other interesting biased POVMs by choosing different values of $\gamma$. When we approach the limit $\gamma \rightarrow 1$, the probability distribution using the state $\rho=\Pi_1$ tends towards 
\begin{equation}
    p_j = \tr \mleft( \rho \,
    N_j\mright) \rightarrow 
    \begin{cases}
     1\,, \qquad & j = 1\,,
     \\[5pt]
     0\,, \qquad & j \in \{2, \, \hdots, \, d^2\}\,,
    \end{cases}
\end{equation}
so, although we can never achieve complete certainty in any of the outcomes of an informationally complete measurement (see e.g. \cite[Corollary 3]{debrota2021}), we can come arbitrarily close. Similarly, taking the limit $\gamma \rightarrow 0$ with the state $\rho=\Pi_1$, the probability distribution tends towards  
\begin{equation}
    p_j = \tr \mleft( \rho \,
    N_j\mright) \rightarrow 
    \begin{cases}
     0\,, \qquad & j=1\,,
     \\[5pt]
     \frac{1}{d^2-1}\,, \qquad & j \in \{2, \, \hdots, d^2\}\,.
    \end{cases}
\end{equation}
In dimension two, the limit of the POVM as $\gamma \rightarrow 0$ is the trine measurement.

\section{Unbiased extremal measurements in any dimension}

\subsection{Randomness}\label{ssec: fidelity proof}
\setcounter{theorem}{0}
\begin{theorem}
Let $\m = \{M_j\}_j$ be an unbiased extremal rank-one POVM with $n$ outcomes in dimension $d$, such that $M_j= \frac{d}{n} \ketbra{\psi_j}{\psi_j}$, let $\rho$ be a quantum state and let $\p$ be the convex hull of the set $\{\ketbra{\psi_j}{\psi_j}\}$. Then
\begin{equation}\label{eqn: fidelity rand}
    \pg \mleft( \rho, \, \m \mright) = \frac{d}{n} \max_{\sigma \in \p} \; F \mleft( \rho, \, \sigma \mright)\,,
\end{equation}
where $F \mleft( \rho, \, \sigma\mright)$ is the quantum fidelity 
\begin{equation}
    F \mleft( \rho, \, \sigma\mright) = \left( \tr \sqrt{\sqrt{\rho} \, \sigma \sqrt{\rho}} \right)^2\,.
\end{equation}
\end{theorem}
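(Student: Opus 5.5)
The plan is to work entirely with the dual SDP \eqref{eqn: rand dual}, namely $\pg(\rho,\m)=\min\{\tr(X\rho) : X\sgeq M_j \ \forall j\}$, and to compare it with Alberti's variational formula for the fidelity,
\begin{equation*}
F(\rho,\sigma)=\inf_{Y\succ 0}\, \tr(Y\rho)\,\tr(Y^{-1}\sigma).
\end{equation*}
First, strong duality holds. The dual is strictly feasible: take $X=2\id$. The primal is strictly feasible after restricting to the support of $\rho$; alternatively, the value is continuous in $\rho$, so one can first treat full-rank $\rho$. Hence the dual value equals \eqref{eqn: pguess}.

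Next, every dual-feasible $X$ is positive definite. The vectors $\{\ket{\psi_j}\}$ span $\mathbb{C}^d$, since $\sum_j M_j=\id$, and $X\sgeq \frac{d}{n}\ketbra{\psi_j}{\psi_j}$ forces each $\ket{\psi_j}$ into the support of $X$. For $X\succ0$ we then use the standard Schur-complement fact that $X\sgeq c\ketbra{\psi}{\psi}$ iff $c\bra{\psi}X^{-1}\ket{\psi}\le 1$. The dual therefore becomes
\begin{equation*}
\pg(\rho,\m)=\min_{X\succ0}\Big\{\tr(X\rho) : \bra{\psi_j}X^{-1}\ket{\psi_j}\le \tfrac{n}{d}\ \ \forall j\Big\}.
\end{equation*}
Write $X=tY$ with $t>0$ and optimise over the scale $t$; the optimum is $t=\frac{d}{n}\max_j\bra{\psi_j}Y^{-1}\ket{\psi_j}$. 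This gives
\begin{equation*}
\pg(\rho,\m)=\frac{d}{n}\inf_{Y\succ0}\ \tr(Y\rho)\,\max_j \bra{\psi_j}Y^{-1}\ket{\psi_j}
=\frac{d}{n}\inf_{Y\succ0}\ \max_{\sigma\in\p}\ \tr(Y\rho)\,\tr(Y^{-1}\sigma).
\end{equation*}
The second equality holds because a linear function of $\sigma$ is maximised over the simplex $\p$ at a vertex. This is precisely where unbiasedness enters: all elements carry the same weight $\frac{d}{n}$, so the constraint set is exactly the vertex set of $\p$.

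It remains to swap $\inf_Y$ and $\max_\sigma$, after which Alberti's formula gives $\frac{d}{n}\max_{\sigma\in\p}F(\rho,\sigma)$. I expect this swap to be the main obstacle. The product $\tr(Y\rho)\tr(Y^{-1}\sigma)$ is not jointly convex in $Y$. To fix this, I will use scale invariance to restrict $Y$ to the convex set $\mathcal{Y}=\{Y\succ0 : \tr(Y\rho)=1\}$, on which the objective $\tr(Y^{-1}\sigma)$ is convex in $Y$ (operator convexity of the inverse) and linear in $\sigma$. Since $\p$ is compact and convex, Sion's minimax theorem applies and allows the swap. The remaining technicalities are that $\mathcal{Y}$ is neither closed nor bounded, which Sion tolerates on the minimising side, and that for singular $\rho$ Alberti's infimum is not attained. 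I would handle singular $\rho$ by first proving the identity for full-rank $\rho$ and then passing to the limit $\rho_\epsilon=(1-\epsilon)\rho+\epsilon\frac{\id}{d}$. This limit is justified because both sides are continuous in $\rho$: the fidelity is continuous and $\p$ is compact, and the SDP value is continuous as well.

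Note that extremality of $\m$ is not used in this computation; it only serves to justify \eqref{eqn: pguess} as the correct measure of randomness, independent of how the measurement is implemented.
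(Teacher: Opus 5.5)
Your proof is correct, and it takes a genuinely different route from the paper. The paper reduces to the projective case. It dilates $\m$ to a projective measurement $\Pi_{SA}$ acting on $\rho\otimes\ketbra{1}{1}$ and imports the Coles et al.\ formula $\pg(\rho,\Pi)=\max_\sigma F\bigl(\rho,\sum_j\Pi_j\sigma\Pi_j\bigr)$ for projective measurements. It then uses extremality, through the dilation-independence results of Dai et al.\ and Senno et al., to identify $\pg(\rho,\m)$ with the dilated guessing probability. Finally it checks that the compressed operators $\sum_j\tr\bigl((\ketbra{\psi_j}{\psi_j}\otimes\ketbra{j}{j})\,U\sigma_{SA}U^{\dagger}\bigr)\ketbra{\psi_j}{\psi_j}$ are of the form $c\,\sigma$ with $\sigma\in\p$ and $0\le c\le 1$, and that they cover all of $\p$.

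You never leave the dual SDP \eqref{eqn: rand dual}. The Schur-complement step turns each rank-one constraint into $\frac{d}{n}\bra{\psi_j}X^{-1}\ket{\psi_j}\le 1$. Unbiasedness turns the maximum over $j$ into a maximum over $\p$. Alberti's variational formula plus Sion's theorem then close the argument, using convexity in $Y$ on the slice $\tr(Y\rho)=1$ and linearity in $\sigma$ on the compact set $\p$.

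\textbf{Comparison.} Your route is self-contained at the level of convex duality. It needs neither the Coles et al.\ theorem nor dilation-independence, and it shows exactly where rank-one-ness and unbiasedness enter. As you note, it proves the identity for the decomposition-based quantity \eqref{eqn: pguess} for every unbiased rank-one POVM, extremal or not. With weights $M_j=w_j\ketbra{\psi_j}{\psi_j}$, the same computation gives $\max_{\tau\in\conv(\{M_j\})}F(\rho,\tau)$, with $F$ extended homogeneously to unnormalised $\tau$, which also covers biased POVMs such as the skewed SICs. The paper's route is shorter once its imported results are granted, and it keeps the operational picture of an eavesdropper holding a purification in view.

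\textbf{Simplification.} Your detour through full-rank $\rho$ and continuity is unnecessary. Strong duality already follows because the dual is strictly feasible (take $X$ equal to twice the identity) and its value is bounded below by zero. The slice $\tr(Y\rho)=1$ is reachable for every state, since $\tr(Y\rho)>0$ whenever $Y\succ0$. Sion's theorem requires no compactness on the $Y$ side. Alberti's formula holds as an infimum for singular states too. So nothing in your chain needs the infimum to be attained.
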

\begin{proof}
    From \cite[Theorem 1 (iii)]{Coles_2012}, the optimal guessing probability for an $n$-outcome projective measurement $\Pi=\{\Pi_{j}\}_j$ and any state $\rho$ is 
\begin{equation}\label{eqn: og fid}
        \pg \mleft( \rho, \, \Pi\mright) = \max_{\sigma} \; F \mleft( \rho, \, \sum_{j=1}^{n} \Pi_{j}\, \sigma \, \Pi_j  \mright)\,,
    \end{equation}
    where the maximisation is over all quantum states $\sigma$. Any $n$-outcome \emph{non}-projective measurement $\m= \{M_j\}_j$ on a state $\rho_S$ can be implemented using a pure auxiliary state $\ket{1}_A$ and the global projective measurement
\begin{equation}
\Pi_{j, \, SA} = U^{\dagger} \left( \id_S \otimes \ketbra{j}{j}_A \right) U \qquad \textnormal{for all} \;\; j \in \{1, \, \hdots, \, n\}\,,    
\end{equation}
where $U$ is a unitary such that    
\begin{equation}
    U \ket{\psi, \, 1}_{SA}= \sum_{j=1}^{n} \sqrt{M_{j}} \ket{\psi, \, j} \quad \text{for all} \;\; \ket{\psi}\,.
\end{equation}
Since the POVMs we consider are extremal, the guessing probability is not affected by the choice of Naimark dilation, so we have
\begin{equation}
    \pg \mleft(\rho, \, \m \mright) = \pg \mleft( \rho_S \otimes \ketbra{1}{1}_A, \, \Pi_{SA} \mright)\,.
\end{equation}
Note the following property of the fidelity for any state $\mu_{SA}$,
\begin{align}
    F \mleft( \rho_S \otimes \ketbra{1}{1}_A, \, \mu_{SA} \mright) &= \left( \tr \sqrt{ \left( \sqrt{\rho}_S \otimes \ketbra{1}{1}_{A} \right) \, \mu_{SA} \, \left( \sqrt{\rho}_S \otimes \ketbra{1}{1}_{A} \right)  } \right)^2 
    \\
    &= \left( \tr \sqrt{ \sqrt{\rho}_S \, \tr_A \left( \id_{S} \otimes \ketbra{1}{1}_A \, \mu_{SA}\right) \sqrt{\rho}_S } \right)^2\,.
\end{align}
Our guessing probability is then reduced to 
\begin{equation}
\begin{aligned}
    \pg \left( \rho, \, \m\right) &= \max_{\sigma_{SA}} F \mleft( \rho_S \otimes \ketbra{1}{1}_A, \ \sum_{j=1}^{n} \Pi_{j, \, SA} \, \sigma_{SA} \, \Pi_{j, \, SA}\mright)
    \\[5pt]
    &= \max_{\sigma_{SA}} \left( \tr \sqrt{ \sqrt{\rho}_S \, \sum_{j=1}^{n} \tr_A \mleft( \id_{S} \otimes \ketbra{1}{1}_A \,  \Pi_{j, \, SA} \, \sigma_{SA} \, \Pi_{j, \, SA} \,\mright) \sqrt{\rho}_S } \right)^2
    \\[5pt]
    &= \max_{\sigma_{SA}} \left( \tr \sqrt{ \sqrt{\rho}_S \, \sum_{j=1}^{n}  \sqrt{M_j} \tr_A \mleft( \left( \id_{S} \otimes \ketbra{1}{j}\right)  U \sigma_{SA} U^{\dagger}  \left(  \id_{S} \otimes \ketbra{j}{1} \right) \mright) \sqrt{M_j} \, \sqrt{\rho}_S } \right)^2
    \\[5pt]
    &= \max_{\sigma_{SA}} \frac{d}{n}\left( \tr \sqrt{ \sqrt{\rho}_S \, \underbrace{\sum_{j=1}^{n}  \tr \mleft( \ketbra{\psi_j}{\psi_j}_S \otimes \ketbra{j}{j}_A \; U \sigma_{SA}U^{\dagger}\mright) \ketbra{\psi_j}{\psi_j}_S}_{\left( * \right)} \, \sqrt{\rho}_S } \right)^2\,, \label{eqn: lastline}
\end{aligned}
\end{equation}
where in the final equality we use $M_j= \frac{d}{n}\ketbra{\psi_j}{\psi_j}$. We can see that the operator $\left( * \right)$ is proportional to some state in $\p$ up to a positive constant. Since $\{\ketbra{\psi_j}{\psi_j}_S \otimes \ketbra{j}{j}_A\}$ is a set of orthogonal projectors, we have 
\begin{equation}
    \sum_{j=1}^{n} \ketbra{\psi_j}{\psi_j}_S \otimes \ketbra{j}{j}_A \leq \id_{SA}\,, 
\end{equation}
so the trace of $\left( * \right)$ is upper bounded by one,
\begin{equation}
  \tr \mleft( * \mright) =   \tr \mleft( \left(\sum_{j=1}^{n} \ketbra{\psi_j}{\psi_j}_S \otimes \ketbra{j}{j}_A\right) \; U \sigma_{SA}U^{\dagger}\mright) \leq \tr \mleft( U \sigma_{SA} U^{\dagger}\mright)=1\,. 
\end{equation}
We can also show that the set of normalised states $\left( * \right)$ fills the whole set $\p$, by considering, for any probability distribution $\{p_j\}$, the state
\begin{equation}
    U \sigma_{SA} U^{\dagger} = \sum_{j=1}^{n} p_j \ketbra{\psi_j}{\psi_j}_S \otimes \ketbra{j}{j}_A\,,
\end{equation}
which results in $\left( *\right)= \sum_{j=1}^{n} p_j \ketbra{\psi_j}{\psi_j}_S$. We are then free to replace $\left( *\right)$ and the optimisation over $\sigma_{SA}$ in \eqref{eqn: lastline} with an optimisation over states $\sigma \in \p$,
\begin{equation}
\begin{aligned}
\pg \mleft( \rho, \, \m\mright) &=  \frac{d}{n} \max_{\sigma_{SA}} \mleft( \tr \sqrt{ \sqrt{\rho}_S \, \underbrace{\sum_{j=1}^{n}  \tr \mleft( \ketbra{\psi_j}{\psi_j}_S \otimes \ketbra{j}{j}_A \; U \sigma_{SA}U^{\dagger}\mright) \ketbra{\psi_j}{\psi_j}_S}_{\left( * \right)} \, \sqrt{\rho}_S } \mright)^2
\\[5pt]
&= \frac{d}{n} \max_{\sigma \in \p} \left( \tr \sqrt{ \sqrt{\rho}\, \sigma \sqrt{\rho} } \right)^2 = \frac{d}{n} \max_{\sigma \in \p}  \, F \mleft( \rho, \, \sigma \mright)\,.
\end{aligned}
\end{equation}
\end{proof}
We will now prove Corollary \ref{corr: notd^2 rand} from the main text. 
\setcounter{corollary}{2}
\begin{corollary}
    Let $\m = \{M_j\}_j$ be an unbiased extremal rank-one POVM with $d^2$ outcomes in dimension $d$, such that $M_j= \frac{1}{d} \ketbra{\psi_j}{\psi_j}$, and let $\mathcal{P}$ be the convex hull of the set $\{\ketbra{\psi_j}{\psi_j}\}$. When $\rho \notin \mathcal{P}$, the optimal decomposition $\{p_j , \, \rho_j\}$ for an eavesdropper contains at most $d^2-1$ nonzero elements.
\end{corollary}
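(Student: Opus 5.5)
The plan is to argue by contradiction through the dual SDP \eqref{eqn: rand dual} and its complementary slackness conditions. Suppose $\rho \notin \p$ and that an optimal decomposition $\{p_j, \, \rho_j\}$ has $p_j > 0$ for all $d^2$ outcomes. By Theorem \ref{thm: fidelity rand}, $\pg(\rho, \m) = \frac{1}{d}\max_{\sigma\in\p}F(\rho,\sigma) < \frac{1}{d}$, because the fidelity is strictly below one when $\rho\notin\p$. I will turn the assumption that all $d^2$ weights are nonzero into a strong rigidity statement about the optimal dual variable $X$, and show that this forces $\pg(\rho, \m) \geq \frac{1}{d}$.

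First, I establish strong duality and attainment. The dual is strictly feasible: take $X = \id$, since $\id - M_j = \id - \frac{1}{d}\ketbra{\psi_j}{\psi_j} \succ 0$. The primal feasible set is compact and nonempty. Hence there is no duality gap and the primal optimum is attained. Let $X$ be an optimal dual variable. Complementary slackness \eqref{eqn: SDP comp slack} for $Z = X\otimes\id - \tilde M$ and $\tilde\sigma = \sum_j p_j\rho_j\otimes\ketbra{j}{j}$ gives $(X - M_j)\rho_j = 0$ for every $j$ with $p_j>0$. So, under our assumption, each $X - M_j \succcurlyeq 0$ has a nontrivial kernel for all $j = 1,\dots,d^2$.

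Second, I show that $X$ is invertible. If $X\ket{v}=0$, then $0 \le \bra{v}(X-M_j)\ket{v} = -\frac{1}{d}|\langle v|\psi_j\rangle|^2$, so $\ket{v}$ is orthogonal to every $\ket{\psi_j}$. The $\ket{\psi_j}$ span $\mathbb{C}^d$ because $\sum_j M_j=\id$, so $\ket{v}=0$.

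Third, for invertible $X$, a Schur-complement argument shows that $X - \frac{1}{d}\ketbra{\psi_j}{\psi_j}$ is positive semidefinite and singular if and only if $\frac1d\bra{\psi_j}X^{-1}\ket{\psi_j} = 1$. Its kernel is then spanned by $X^{-1}\ket{\psi_j}$. Therefore $\tr(X^{-1}M_j) = 1 = \tr(\id\, M_j)$ for all $j$. Since $\m$ is an MIC, the $M_j$ span the Hermitian operators, and this forces $X^{-1}=\id$, i.e. $X=\id$. The dual value is then $\tr(X\rho)=1$. This contradicts optimality, because $X=\frac{1}{d}\id$ is feasible with value $\frac1d$. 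As an independent check, the kernel statement also gives $\rho_j \propto X^{-1}\ketbra{\psi_j}{\psi_j}X^{-1} = \ketbra{\psi_j}{\psi_j}$, so $\rho=\sum_j p_j\rho_j\in\p$, again a contradiction.

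The main obstacle is dual attainment. Strict feasibility of the dual only guarantees that the primal optimum is attained, whereas the complementary slackness step needs an optimal $X$. When $\rho$ has full rank, the primal is also strictly feasible, since $\tilde\sigma = \sum_j \frac{1}{d^2}\rho\otimes\ketbra{j}{j}\succ 0$, and both optima are attained. When $\rho$ is rank-deficient, I would first try to restrict the whole problem to the support of $\rho$: every $\rho_j$ is supported there, and one can replace $M_j$ by its compression to that subspace. The difficulty is that the compressed operators need no longer form an MIC there, so the span step in the third stage must be rechecked. Failing that, I would take a limiting argument, perturbing $\rho$ to $(1-t)\rho+t\frac{\id}{d}$, using that each perturbed state still lies outside $\p$ for small $t$, and passing to limits of the optimal dual variables.
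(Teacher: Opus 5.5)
Your argument follows the paper's proof: invertibility of $X$, then the rank-one condition $\tr(X^{-1}M_j)=1$ from complementary slackness, then uniqueness from the MIC property. The decisive step, however, contains an error that breaks the contradiction as you state it.

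Since $n=d^2$, we have $\tr(\id\,M_j)=\tr M_j=\frac1d$, not $1$. The equations $\tr(X^{-1}M_j)=1$ are therefore solved by $X^{-1}=d\id$, and the MIC property forces $X=\frac{\id}{d}$, not $X=\id$. As a sanity check, $\id-M_j$ is positive definite, so $X=\id$ could never satisfy your own singularity condition. With the correct $X$, your stated contradiction disappears: $\frac{\id}{d}$ is exactly the feasible point you compare against, and a value of $\frac1d$ is consistent with dual optimality.

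The contradiction has to come from elsewhere, and you have two options:
\begin{itemize}
\item As in the paper, strong duality gives $\pg(\rho,\m)=\tr(\rho\,\id/d)=\frac1d$. This contradicts the strict bound $\pg(\rho,\m)<\frac1d$ from Theorem~\ref{thm: fidelity rand}, which you set up in your first paragraph.
\item Your ``independent check'' survives the correction. The kernel of $\frac{\id}{d}-M_j=\frac1d(\id-\ketbra{\psi_j}{\psi_j})$ is spanned by $\ket{\psi_j}$. So $(X-M_j)\rho_j=0$ forces $\rho_j=\ketbra{\psi_j}{\psi_j}$, and hence $\rho\in\mathcal{P}$.
\end{itemize}
The second ending is slightly cleaner than the paper's, because it uses only complementary slackness and not Theorem~\ref{thm: fidelity rand}. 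Make it your main argument.

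On dual attainment: the paper invokes complementary slackness without comment, so you are more careful than the source here. Your first idea already settles the rank-deficient case, and no limiting argument is needed.
\begin{itemize}
\item Let $P$ project onto $S=\operatorname{supp}\rho$. Every $\rho_j$ with $p_j>0$ is supported on $S$, so the problem is the same SDP on $S$ with $M_j$ replaced by $PM_jP$.
\item This reduced pair is strictly feasible on both sides: take $\frac{1}{d^2}\rho|_S\otimes\id$ for the primal and $X_S=\id_S$ for the dual. So an optimal $X_S$ exists with $(X_S-PM_jP)\rho_j=0$.
\item Summing the constraints gives $X_S\succcurlyeq \id_S/d^2$. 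The same rank-one argument then gives $\frac1d\bra{\psi_j}PX_S^{-1}P\ket{\psi_j}=1$ for all $j$; singularity of $X_S-PM_jP$ rules out $P\ket{\psi_j}=0$.
\item You do not need the compressions to form an MIC on $S$. The operator $Y=PX_S^{-1}P$ on $\mathbb{C}^d$ satisfies $\tr(YM_j)=1$ for all $j$, so the MIC property on the full space forces $Y=d\id$.
\item This is impossible, because $\operatorname{rank}Y=\operatorname{rank}\rho<d$.
\end{itemize}
So for rank-deficient $\rho$, the assumption that all $d^2$ weights are nonzero is contradictory outright. For full-rank $\rho$, your argument goes through once you use the corrected $X=\frac{\id}{d}$.
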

\begin{proof}
We prove the corollary by contradiction, starting with the assumption that all $d^2$ elements $p_j \rho_j$ are non-zero. Summing the condition $X-M_j \sgeq 0$ over all $j \in \{1, \, \hdots, \, d^2\}$, we find $d^2 X \sgeq \id$, so $X$ must be full rank, and therefore invertible. The complementary slackness condition \eqref{eqn: SDP comp slack} tells us that, if every $p_j \rho_j$ is non-zero, each $X- M_j$ must have non-full rank, and therefore determinant zero.
From the matrix determinant lemma, since $X$ is invertible and every $M_j$ is rank one, this implies that
\begin{equation}
    0 = \det \mleft( X - M_j \mright) = \left( 1 - \tr \left( X^{-1} M_j\right) \right) \det \mleft(X\mright) \qquad \textnormal{for all} \;\; j \in \{1, \, \hdots, \, d^2\}\,,
\end{equation}
so 
\begin{equation}
     \tr \mleft( X^{-1} M_j\mright) = 1 \qquad \text{for all} \;\; j \in \{1, \, \hdots, \, d^2\}\,.
\end{equation}
The operator $X^{-1}$ is positive semidefinite, so it is proportional to a quantum state up to a positive constant. Since the POVM is an MIC, the overlaps $\tr \mleft( X^{-1} M_j\mright)$ uniquely specify the matrix $X^{-1}$. The above distribution is satisfied by $X^{-1}= d \id$, which gives $X= \frac{\id}{d}$. We then conclude that $X= \frac{\id}{d}$ is the unique optimal dual variable when there are $d^2$ nontrivial states. However, $X= \frac{\id}{d}$ gives a guessing probability of $\frac{1}{d}$ and we know from Theorem \ref{thm: fidelity rand} that $\pg \mleft( \rho, \, \m\mright) < \frac{1}{d}$ for states outside $\p$, so the optimal decomposition for Eve must have $d^2 -1$ or fewer nonzero states.
\end{proof}
\subsection{State discrimination}\label{app: SD}
Let's say an eavesdropper receives an ensemble of $n$ subnormalised states $\{\nu_j\}$ which sum to the state $\nu$, i.e, $\nu = \sum_{j=1}^{n}p_j \nu_j$, with $p_j = \tr \nu_j$ for some probability distribution $\{p_j\}$. We work in the subspace in which $\nu$ is full rank, and denote by $d$ the dimension of this space, $\rank \mleft( \nu \mright)=d$. If all the states $\nu_j$ are rank one and linearly independent on the operator space \emph{and} they satisfy the following equality,
\begin{equation}
    \tr \mleft( \nu^{-1} \, \nu_j \mright) = \frac{d}{n} \qquad \text{for all} \;\; j \in \{1, \, \hdots, \, n\}\,,
\end{equation}
then the ensemble $\{\nu_j\}$ can be considered to arise from the following unbiased extremal rank-one POVM $\m$ with $n$ outcomes acting on a purification of the state $\nu$,
\begin{equation}
    \m = \{M_j\}_j\,, \qquad M_j = \nu^{-\frac{1}{2}} \, \nu_j^{*} \, \nu^{-\frac{1}{2}}\,,
\end{equation}
where $\left(*\right)$ denotes conjugation in the eigenbasis of $\nu$.
Since the $n$ states $\{\nu_j\}$ are linearly independent on an operator space of dimension $d$, we must have $n \leq d^2$. Further, since the rank of a sum of matrices cannot be greater than the sum of the individual ranks, we have
\begin{equation}
    d = \rank \nu = \rank \mleft( \sum_{j=1}^{n} \nu_j \mright) \leq \sum_{j=1}^{n} \rank \nu_j =n\,,
\end{equation}
so we conclude that $d \leq n \leq d^2$. With this equivalence between the state discrimination and randomness pictures in mind, we translate the results of Section \ref{sec: ub any dim} from the main text to the language of state discrimination.

\begin{theorem}\label{thm: fidelity SD}
Let $\{\nu_j\}$ be a ensemble of $n$ rank-one states $\nu_j= p_j \ketbra{\phi_j}{\phi_j}$ that are linearly independent on the operator space and satisfy 
\begin{equation}
    \tr \mleft( \nu^{-1} \, \nu_j \mright) = \frac{d}{n} \qquad \textnormal{for all} \;\; j \in \{1, \, \hdots, \, n\}\,,
\end{equation}
where $\sum_{j=1}^{n} \nu_j = \nu$ and $d=\rank \nu$. Then
\begin{equation}\label{eqn: fidelity SD}
    \pg \mleft( \{\nu_j\} \mright) =  \max_{ \{q_j\} } \; \left( \tr \sqrt{  \sum_{j=1}^{n} q_j \, \nu_j    } \right)^2 \,,
\end{equation}
where $\{q_j\}$ is a probability distribution. 
\end{theorem}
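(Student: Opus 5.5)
The plan is to reduce the statement to Theorem \ref{thm: fidelity rand} through the correspondence between state discrimination and randomness described above, and then rewrite the fidelity in terms of the ensemble. Throughout I work on the $d$-dimensional support of $\nu$, so that $\nu$ is invertible, and I fix its eigenbasis so that $\nu^{*}=\nu$ there.

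First I check that $M_j=\nu^{-\frac12}\nu_j^{*}\nu^{-\frac12}$ is an unbiased extremal rank-one POVM with $n$ outcomes.
\begin{itemize}
\item \emph{Positivity and rank:} each $M_j$ is positive and rank one, since $\nu_j^{*}=p_j\ketbra{\phi_j^{*}}{\phi_j^{*}}$.
\item \emph{Completeness:} $\sum_j M_j=\nu^{-\frac12}\nu^{*}\nu^{-\frac12}=\id$.
\item \emph{Unbiasedness:} $\tr M_j=\tr(\nu^{-1}\nu_j^{*})=\overline{\tr(\nu^{-1}\nu_j)}=\frac{d}{n}$, using $(\nu^{-1})^{*}=\nu^{-1}$. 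Hence $M_j=\frac{d}{n}\ketbra{\psi_j}{\psi_j}$ with $\ket{\psi_j}\propto\nu^{-\frac12}\ket{\phi_j^{*}}$.
\item \emph{Extremality:} the map $A\mapsto\nu^{-\frac12}A^{*}\nu^{-\frac12}$ is invertible and real-linear on Hermitian operators, so the $M_j$ are linearly independent. By the characterisation of \cite{D_Ariano_2005}, rank-one POVMs with linearly independent elements are extremal.
\end{itemize}

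Next I show that $\pg(\{\nu_j\})=\pg(\nu,\m)$ by matching the two dual SDPs \eqref{eqn: SD dual} and \eqref{eqn: rand dual}. The congruence $X\mapsto Y=(\nu^{\frac12}X\nu^{\frac12})^{*}$ is a bijection on Hermitian operators. Under it, $X\sgeq M_j$ holds if and only if $\nu^{\frac12}X\nu^{\frac12}\sgeq\nu_j^{*}$, which holds if and only if $Y\sgeq\nu_j$. The objectives also match, since $\tr Y=\overline{\tr(X\nu)}=\tr(X\nu)$.

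Both problems are strictly feasible. For the primal problems, take the decomposition of $\nu$ into full-rank states and the trivial POVM $E_j=\id/n$. For the dual problems, take a large multiple of the identity. Strong duality therefore holds, and the optimal values coincide. Of the steps so far, this identification is the one I expect to require the most care. The issues are keeping the complex conjugation consistent and handling the restriction to the support of $\nu$. If the SDP route becomes messy, I will instead argue directly: Eve holding a purification of $\nu$ steers exactly the ensemble $\{\nu_j\}$ (up to transpose) when $\m$ is applied.

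Finally I apply Theorem \ref{thm: fidelity rand} with $\rho=\nu$, which gives $\pg=\frac{d}{n}\max_{\sigma\in\p}F(\nu,\sigma)$. Writing $\sigma=\sum_j q_j\ketbra{\psi_j}{\psi_j}$ for a probability distribution $\{q_j\}$, I get
\[
\frac{d}{n}\,\sigma=\sum_j q_jM_j=\nu^{-\frac12}\Big(\sum_j q_j\nu_j^{*}\Big)\nu^{-\frac12}.
\]
Since $\frac{d}{n}F(\nu,\sigma)=\big(\tr\sqrt{\sqrt{\nu}\,\tfrac{d}{n}\sigma\,\sqrt{\nu}}\big)^2$, this yields
\[
\frac{d}{n}F(\nu,\sigma)=\Big(\tr\sqrt{\textstyle\sum_j q_j\nu_j^{*}}\Big)^2.
\]
Complex conjugation preserves the spectrum, so the right-hand side equals $\big(\tr\sqrt{\sum_j q_j\nu_j}\big)^2$. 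Maximising over $\{q_j\}$ gives \eqref{eqn: fidelity SD}.
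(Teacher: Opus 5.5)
Your proposal is correct and follows the paper's route. The paper also realises the ensemble as the unbiased extremal POVM $M_j=\nu^{-1/2}\nu_j^{*}\nu^{-1/2}$ acting on $\nu$, then substitutes $\sigma=\frac{n}{d}\sum_j q_j M_j$ into Theorem~\ref{thm: fidelity rand}; the only difference is that the paper just asserts the equivalence $\pg(\{\nu_j\})=\pg(\nu,\m)$, whereas you verify it explicitly (POVM properties, extremality, and the congruence matching the two dual SDPs), which fills in detail the paper leaves implicit.
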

\begin{proof}
    We have shown that our state discrimination scenario is equivalent to a randomness scenario in which Alice holds the state $\nu$ and performs the unbiased extremal measurement
    \begin{equation}
    \m = \{M_j\}_j\,, \qquad M_j = \nu^{-\frac{1}{2}} \, \nu_j^{*} \, \nu^{-\frac{1}{2}}\,, \qquad j \in \{1, \, \hdots, \, n\}
\end{equation}
with $n$ outcomes in dimension $d$. Then, taking Theorem \ref{thm: fidelity rand} and plugging in 
\begin{equation}
    \sigma = \frac{n}{d}\sum_{j=1}^{n} q_j M_j 
\end{equation}
for the state $\sigma$ inside $\p$, where $\{q_j\}$ is any probability distribution, we obtain \eqref{eqn: fidelity SD}.
\end{proof}
We will now make some observations on this state discrimination scenario.
\begin{observation}\label{thm: inP SD}
Let $\{\nu_j\}$ be a ensemble of $n$ rank-one states $\nu_j= p_j \ketbra{\phi_j}{\phi_j}$ that are linearly independent on the operator space and satisfy 
\begin{equation}
    \tr \mleft( \nu^{-1} \, \nu_j \mright) = \frac{d}{n} \qquad \textnormal{for all} \;\; j \in \{1, \, \hdots, \, n\}\,,
\end{equation}
where $\sum_{j=1}^{n} \nu_j = \nu$ and $d=\rank \nu$. If it is possible to write the operator $\nu^2$ as
\begin{equation}\label{eqn: nu squared}
    \nu^2 = \sum_{j=1}^{n} c_j \nu_j\,, \qquad c_j \geq 0 \qquad \textnormal{for all} \;\; j \in \{1, \, \hdots, \, n\}\,,
\end{equation}
then the optimal probability for state discrimination is
\begin{equation}
    \pg \mleft( \{\nu_j\}\mright) = \frac{d}{n}\,.
\end{equation}
\end{observation}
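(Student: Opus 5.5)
The plan is to use the fidelity formula of Theorem \ref{thm: fidelity SD} and bound $\pg \mleft( \{\nu_j\} \mright)$ from both sides. For the lower bound we exhibit an explicit distribution $\{q_j\}$ built from the coefficients $\{c_j\}$. For the upper bound we invoke the general bound $\frac{d}{n}$ coming from Theorem \ref{thm: fidelity rand}.

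\emph{Normalising the coefficients.} Set $C = \sum_{j=1}^{n} c_j$. Throughout we work on the support of $\nu$, where $\nu^{-1}$ is defined. Multiply \eqref{eqn: nu squared} by $\nu^{-1}$ and take the trace. The left-hand side gives $\tr \mleft( \nu^{-1} \nu^2 \mright) = \tr \nu = 1$. The right-hand side gives $\sum_j c_j \tr \mleft( \nu^{-1} \nu_j \mright) = \frac{d}{n} C$, using the hypothesis $\tr \mleft( \nu^{-1} \nu_j \mright) = \frac{d}{n}$. Hence $C = \frac{n}{d} > 0$.

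\emph{Lower bound.} Since $C>0$ and each $c_j \geq 0$, the numbers $q_j = c_j / C = \frac{d}{n} c_j$ form a valid probability distribution. With this choice,
\begin{equation}
\sum_{j=1}^{n} q_j \nu_j = \frac{d}{n}\, \nu^2\,.
\end{equation}
Because $\nu \sgeq 0$, we have $\sqrt{\nu^2} = \nu$, so
\begin{equation}
\tr \sqrt{\tfrac{d}{n} \nu^2} = \sqrt{\tfrac{d}{n}}\, \tr \nu = \sqrt{\tfrac{d}{n}}\,.
\end{equation}
Plugging this into \eqref{eqn: fidelity SD} gives $\pg \mleft( \{\nu_j\} \mright) \geq \frac{d}{n}$.

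\emph{Upper bound.} By the equivalence established above, $\pg \mleft( \{\nu_j\} \mright) = \pg \mleft( \nu, \m \mright)$ with $M_j = \nu^{-1/2} \nu_j^{*} \nu^{-1/2}$. Theorem \ref{thm: fidelity rand} then gives $\pg \mleft( \{\nu_j\} \mright) = \frac{d}{n} \max_{\sigma \in \p} F$, and since fidelity is at most one, this is at most $\frac{d}{n}$; this is also the content of Corollary \ref{cor: hmin}. Combining the two bounds yields equality.

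Interpretively, the chosen $q_j$ correspond to the state $\sigma \in \p$ that coincides with the state on which $\m$ acts, so the fidelity equals one. This mirrors the remark after Theorem \ref{thm: fidelity rand} that $\pg = \frac{d}{n}$ whenever $\rho \in \p$.

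The main obstacle is only bookkeeping. One must make sure that $\nu^{-1}$ and the square roots are taken on the $d$-dimensional support of $\nu$. One must also check that the normalisation $C = \frac{n}{d}$ is forced by the hypotheses rather than assumed; the trace identity above does this.
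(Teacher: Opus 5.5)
Your proof is correct, but it takes a different route from the paper's. You get both bounds from the fidelity characterisation. The lower bound comes from the choice $q_j = \frac{d}{n} c_j$ in Theorem \ref{thm: fidelity SD}, which, as you observe, corresponds to $\sigma = \nu \in \p$ and hence fidelity one. The upper bound comes from $F \leq 1$ in Theorem \ref{thm: fidelity rand}.

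The paper instead works directly with the state-discrimination SDP. For the lower bound it exhibits the explicit POVM $K_j = c_j \nu^{-1} \nu_j \nu^{-1}$, the maximum-confidence measurement, which sums to the identity precisely because of \eqref{eqn: nu squared}. It then computes $\sum_j \tr(K_j \nu_j) = \sum_j c_j \left(\tr(\nu^{-1}\nu_j)\right)^2 = \frac{d}{n}$. For the upper bound it uses the dual variable $Y = \frac{d}{n}\nu$, checking $Y \sgeq \nu_j$ by noting that $\nu^{-1}\ket{\phi_j}$ lies in the kernel of $Y - \nu_j$, while $Y$ is strictly positive and $\nu_j$ has rank one. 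In both arguments the upper bound $\frac{d}{n}$ holds without \eqref{eqn: nu squared}; that hypothesis enters only the lower bound.

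Your version is shorter and makes the geometric meaning transparent. However, it inherits the heavier machinery behind Theorem \ref{thm: fidelity rand}: the projective-measurement result of \cite{Coles_2012}, the Naimark dilation, and the extremality needed for dilation-independence. It also relies on the discrimination--randomness equivalence of Appendix \ref{app: SD}. The paper's argument is self-contained and elementary, and it produces the optimal measurement explicitly. That primal form is also what the converse, Observation \ref{thm: outP SD}, relies on: there, Cauchy--Schwarz saturation forces exactly $K_j = c_j \nu^{-1}\nu_j\nu^{-1}$ with $c_j \geq 0$.
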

\begin{proof}
    If condition \eqref{eqn: nu squared} holds then it is possible to define a valid POVM $\mathcal{K}=\{K_j\}_j$ with elements
    \begin{equation}
        K_j = c_j \nu^{-1} \nu_j \nu^{-1}\,,
    \end{equation}
    satisfying
    \begin{equation}
      \tr \mleft( K_j \, \nu \mright) = c_j \tr \mleft( \nu^{-1} \, \nu_j \mright) = \frac{d}{n} c_j \qquad j \in \{1, \, \hdots, \, n\}
    \end{equation}
    such that $\sum_{j=1}^{n} c_j = \frac{n}{d}$. This POVM, when it exists, is a maximum confidence measurement for the set of states $\{\nu_j\}$ \cite{Croke_2006}. Using $\mathcal{K}$, the optimal guessing probability can be lower bounded by 
    \begin{equation}
\begin{aligned}
  \pg \left( \{\nu_j\}\right) & \geq    \sum_{j=1}^{n} \tr \mleft( K_j \, \nu_j\mright) = \sum_{j=1}^{n} c_j \tr \mleft( \nu^{-1} \nu_j \nu^{-1} \nu_j \mright)
     \\
     &= \sum_{j=1}^{n} c_j \, p_j^2 \langle \phi_j | \nu^{-1} | \phi_j \rangle^2 = \sum_{j=1}^{n} c_j \mleft( \tr \mleft( \nu^{-1} \nu_j  \mright) \mright)^2 = \frac{d^2}{n^2} \sum_{j=1}^{n} c_j = \frac{d}{n}\,.
    \end{aligned}
    \end{equation}
    We could then define the dual variable $Y = \frac{d}{n}\nu$, which is strictly positive. Since each $\nu_j$ is rank one, the terms $Y-\nu_j$ can have at most one non-positive eigenvalue. The corresponding eigenvector is $\nu^{-1}\ket{\phi_j}$, because
    \begin{equation}
        \left( Y - \nu_j\right) \nu^{-1} \ket{\phi_j} = \left( \frac{d}{n} \nu - p_j\ketbra{\phi_j}{\phi_j}\right) \nu^{-1} \ket{\phi_j} = \left( \frac{d}{n} - \tr \mleft( \nu^{-1} \nu_j\mright)\right) \ket{\phi_j} =0
    \end{equation}
    for all $j \in \{1, \, \hdots, \, n\}$,
    so $Y$ is a valid variable. It upper bounds the guessing probability as
    \begin{equation}
     \pg \mleft( \{\nu_j\} \mright) \leq   \tr Y = \frac{d}{n}\,,
    \end{equation}
    which completes the proof.
\end{proof}
Note that when $n=d^2$, since the states $\{\nu_j\}$ are linearly independent, they necessarily form a basis for all operators in dimension $d$. This means we can always find a set of real coefficients $\{c_j\}$ such that $\nu^2= \sum_{j=1}^{d^2} c_j \nu_j$, but they might not all be non-negative. 
\begin{corollary}\label{cor: d^2 SD}
Let $\{\nu_j\}$ be a ensemble of $n$ rank-one states $\nu_j= p_j \ketbra{\phi_j}{\phi_j}$ that are linearly independent on the operator space and satisfy 
\begin{equation}
    \tr \mleft( \nu^{-1} \, \nu_j \mright) = \frac{d}{n} \qquad \textnormal{for all} \;\; j \in \{1, \, \hdots, \, d^2\}\,,
\end{equation}
where $\sum_{j=1}^{d^2} \nu_j = \nu$ and $d=\rank \nu$. Then
\begin{equation}\label{eqn: ineq 1/d^2}
    \pg \left( \{\nu_j\}\right) > \frac{1}{d^2}\,.
\end{equation}
\end{corollary}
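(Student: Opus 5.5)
The plan is to translate the statement back into the randomness picture and then reuse the argument behind Corollary \ref{corr: 1/d^2 rand}. Here $n=d^2$, so the condition reads $\tr \mleft( \nu^{-1} \nu_j \mright) = \frac{1}{d}$. As shown at the start of this appendix, the ensemble $\{\nu_j\}$ arises from Alice measuring a purification of $\nu$ with the unbiased extremal rank-one POVM $M_j = \nu^{-\frac{1}{2}} \nu_j^{*} \nu^{-\frac{1}{2}}$, which has $d^2$ outcomes and is therefore an unbiased MIC. The discrimination probability $\pg \mleft( \{\nu_j\} \mright)$ then equals $\pg \mleft( \nu^{*}, \, \m \mright)$, the eavesdropper's guessing probability for this POVM acting on $\nu$, up to the conjugation, which preserves spectra and fidelities. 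So it is enough to show $\pg \mleft( \nu^{*}, \m \mright) > \frac{1}{d^2}$. This is exactly Corollary \ref{corr: 1/d^2 rand}, and the plan is to repeat its logic in the present language for completeness.

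Next I would establish the non-strict bound. Any strategy does at least as well as always guessing the most likely outcome, so $\pg \mleft( \{\nu_j\} \mright) \geq \max_j \tr \nu_j \geq \frac{1}{d^2}$, since the $\tr \nu_j = p_j$ sum to one. Equality would then force $p_j = \frac{1}{d^2}$ for every $j$. In the randomness picture this means $\tr \mleft( \nu^{*} M_j \mright) = \frac{1}{d^2}$ for all $j$. Because $\m$ is an unbiased MIC, the maximally mixed state is the unique state producing the uniform distribution, so $\nu$ must be $\frac{\id}{d}$ on its support.

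Finally I would rule out that equality case directly in the discrimination picture. Suppose $\nu = \frac{\id}{d}$. Then $\nu^2 = \frac{1}{d}\nu = \sum_j \frac{1}{d} \nu_j$, so the hypothesis \eqref{eqn: nu squared} of Observation \ref{thm: inP SD} holds with $c_j = \frac{1}{d} \geq 0$. The observation then gives $\pg = \frac{d}{n} = \frac{1}{d}$, which is strictly greater than $\frac{1}{d^2}$ for $d \geq 2$. This contradicts equality, so the inequality \eqref{eqn: ineq 1/d^2} is strict.

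The step I expect to need the most care is the uniqueness claim: that only $\nu \propto \id$ yields the uniform distribution. In the discrimination language this says that $p_j = \frac{1}{d^2}$ for all $j$ forces $\nu = \frac{\id}{d}$. I would argue it as follows. Uniform $p_j$ combined with $\tr \mleft( \nu^{-1}\nu_j \mright) = \frac{1}{d}$ gives $\langle \phi_j | \nu^{-1} | \phi_j \rangle = d$ for every $j$. The operator $d\nu$ satisfies the same equations, because $\langle \phi_j | d\nu | \phi_j \rangle \cdot p_j$ summed appropriately reproduces $\tr \nu_j$. Since the $\ketbra{\phi_j}{\phi_j}$ are linearly independent and span the operator space, these overlaps determine $\nu^{-1}$ uniquely, so $\nu^{-1} = d\,\id$ on the support. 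Alternatively, I would simply invoke the unbiased-MIC uniqueness already used in Corollary \ref{corr: 1/d^2 rand}.
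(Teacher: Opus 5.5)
Your proof is correct and is essentially the paper's. Both arguments bound $\pg \geq \max_j \tr \nu_j \geq \frac{1}{d^2}$, observe that equality forces $p_j = \frac{1}{d^2}$, deduce $\nu = \frac{\id}{d}$, and then use $\nu^2 = \sum_j \frac{1}{d} \nu_j$ to get $\pg = \frac{1}{d} > \frac{1}{d^2}$. The only variation is that you derive $\nu = \frac{\id}{d}$ from informational completeness, while the paper expands $\nu^2 = \sum_j c_j \nu_j$, finds $\sum_j c_j = d$, and concludes from $\tr \nu^2 = \frac{1}{d}$.

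One slip needs fixing. The operator that shares the overlaps $\langle \phi_j | \nu^{-1} | \phi_j \rangle = d$ is $d\,\id$ on the support of $\nu$, not $d\nu$. That support contains every $\ket{\phi_j}$, since $\nu \sgeq \nu_j$. With this correction, the spanning property of $\{\ketbra{\phi_j}{\phi_j}\}$ gives $\nu^{-1} = d\,\id$ directly.
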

\begin{proof}
The probability of discriminating $n$ states is always lower bounded by $\frac{1}{n}$, so we need only consider the ensembles for which $n=d^2$. In this case, this bound can only be saturated if $\tr \nu_j = p_j=\frac{1}{d^2}$ for all $j \in \{1, \, \hdots, \, d^2\}$. 
Since $\{\nu_j\}$ form a basis for the operator space when $n=d^2$, we can uniquely define $\nu^2= \sum_{j=1}^{d^2}c_j \nu_j$ for some set of real coefficients $\{c_j\}$. These coefficients will satisfy
\begin{equation}
    1= \tr \left(\nu^2 \, \nu^{-1}  \right) = \sum_{j=1}^{d^2} c_j \tr \mleft( \nu_j \nu^{-1} \mright) = \frac{1}{d} \sum_{j=1}^{d^2} c_j\,,
\end{equation}
so we find that $\sum_{j=1}^{d^2} c_j = d$. But this implies that
\begin{equation}
    \tr  \nu^2  = \sum_{j=1}^{d^2} c_j \tr \nu_j = \frac{1}{d^2} \sum_{j=1}^{d^2} c_j = \frac{1}{d}\,,
\end{equation}
which is only possible if $\nu = \frac{\id}{d}$. Then, since
\begin{equation}
    \sum_{j=1}^{d^2} \nu_j = \nu = \frac{\id}{d}\,,
\end{equation}
the condition
\begin{equation}
    \sum_{j=1}^{d^2} c_j \nu_{j} = \nu^2 = \frac{\id}{d^2}
\end{equation}
is satisfied uniquely by the decomposition
\begin{equation}
    c_j = \frac{1}{d} \qquad \textnormal{for all} \;\; j \in \{1, \, \hdots, \, d^2\}\,.
\end{equation}
Since all the coefficients $c_j$ are positive, by Theorem \ref{thm: fidelity SD} we must have $\pg \mleft(\{\nu_j\} \mright)= \frac{1}{d}$, which means
\begin{equation}
    \pg \mleft( \{\nu_j\}\mright) > \frac{1}{d^2}\,.
\end{equation}
\end{proof}

\begin{observation}\label{thm: outP SD}
Let $\{\nu_j\}$ be a ensemble of $n$ rank-one states $\nu_j= p_j \ketbra{\phi_j}{\phi_j}$ that are linearly independent on the operator space and satisfy 
\begin{equation}
    \tr \mleft( \nu^{-1} \, \nu_j \mright) = \frac{d}{n} \qquad \textnormal{for all} \;\;  j \in \{1, \, \hdots, \, n\}\,,
\end{equation}
where $\sum_{j=1}^{n} \nu_j = \nu$ and $d=\rank \nu$. If it is \emph{not} possible to write the operator $\nu^2$ as 
\begin{equation}\label{eqn: nu squared 2}
    \nu^2 = \sum_{j=1}^{n} c_j \nu_j\,, \qquad c_j \geq 0 \qquad \textnormal{for all} \;\; j \in \{1, \, \hdots, \, n\}\,,
\end{equation}
then the optimal probability for state discrimination is upper bounded by
\begin{equation}\label{eqn: K POVM 2}
    \pg \mleft( \{\nu_j\}\mright) <  \frac{d}{n}\,.
\end{equation}
\end{observation}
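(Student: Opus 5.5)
We translate the problem back to the randomness picture and apply Theorem \ref{thm: fidelity rand}. As in the proof of Theorem \ref{thm: fidelity SD}, the ensemble $\{\nu_j\}$ corresponds to Alice holding $\nu$ and performing the unbiased extremal rank-one POVM $M_j = \nu^{-\frac{1}{2}} \nu_j^{*} \nu^{-\frac{1}{2}}$. Write $M_j = \frac{d}{n}\ketbra{\psi_j}{\psi_j}$ and $\p = \conv\mleft(\{\ketbra{\psi_j}{\psi_j}\}\mright)$. Then
\begin{equation*}
    \pg \mleft( \{\nu_j\} \mright) = \frac{d}{n} \max_{\sigma \in \p} F \mleft( \nu, \, \sigma \mright),
\end{equation*}
and this is $<\frac{d}{n}$ unless some $\sigma \in \p$ has unit fidelity with $\nu$. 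Unit fidelity holds iff $\sigma = \nu$, up to the complex conjugation in the eigenbasis of $\nu$, which is harmless since $\nu^{*}=\nu$ in that basis. The maximum is attained because $\p$ is compact and $F$ is continuous. So it suffices to show that $\nu \notin \p$, equivalently $\nu^{*}\notin\p$, whenever \eqref{eqn: nu squared 2} fails.

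We prove the contrapositive: suppose $\nu = \sum_j q_j \ketbra{\psi_j}{\psi_j}$ with $q_j \geq 0$. Since $\ketbra{\psi_j}{\psi_j} = \frac{n}{d} M_j = \frac{n}{d}\nu^{-\frac12}\nu_j^{*}\nu^{-\frac12}$, this reads
\begin{equation*}
    \nu = \frac{n}{d}\sum_j q_j\, \nu^{-\frac12}\nu_j^{*}\nu^{-\frac12}.
\end{equation*}
Conjugating by $\nu^{\frac12}$ on both sides gives $\nu^2 = \frac{n}{d}\sum_j q_j \nu_j^{*}$. Taking complex conjugates in the eigenbasis of $\nu$, where $\nu$ is real diagonal, yields $\nu^2 = \sum_j c_j \nu_j$ with $c_j = \frac{n}{d} q_j \geq 0$. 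This contradicts the hypothesis. Hence $\nu\notin\p$, every $\sigma\in\p$ has $F(\nu,\sigma)<1$, and by compactness the maximum is strictly below one. This gives $\pg\mleft(\{\nu_j\}\mright) < \frac{d}{n}$.

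The main obstacle is bookkeeping rather than substance. First, the conjugation $(*)$ must be handled consistently. Second, everything must be restricted to the support of $\nu$, so that $\nu^{\pm\frac12}$ make sense and the dimension is $d=\rank\nu$. Third, the fidelity-one characterisation must be stated carefully for states on that support. One must also check that the equivalence with the randomness picture gives exactly the discrimination probability, with no extra factor; this was already established for Theorem \ref{thm: fidelity SD}.
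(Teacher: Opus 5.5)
Your proof is correct, but it takes a different route from the paper's. You pass back to the randomness picture and invoke Theorem~\ref{thm: fidelity rand}, which reduces the claim to showing $\nu\notin\p$. Your contrapositive step, $\nu=\sum_j q_j\ketbra{\psi_j}{\psi_j}\Rightarrow\nu^2=\frac{n}{d}\sum_j q_j\nu_j$, is right. Read backwards, it shows that the condition on $\nu^2$ is exactly equivalent to $\nu\in\p$, so you also recover Observation~\ref{thm: inP SD} and get an ``if and only if''.

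The paper never leaves the discrimination problem. For the optimal POVM $\{K_j\}$ it writes
$\tr(\nu_jK_j)=\tr\bigl(\nu^{-\frac12}\nu_j\nu^{-\frac12}\,\sqrt{\nu}K_j\sqrt{\nu}\bigr)\le\tr(\nu^{-1}\nu_j)\tr(K_j\nu)$,
and these bounds sum to $\frac{d}{n}$. Because $\nu^{-\frac12}\nu_j\nu^{-\frac12}$ is rank one, equality for every $j$ forces $K_j=c_j\nu^{-1}\nu_j\nu^{-1}$ with $c_j\ge0$. Since the $K_j$ sum to the identity, this would give the forbidden decomposition $\nu^2=\sum_jc_j\nu_j$.

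The paper's argument is elementary and self-contained. It needs neither the Naimark-dilation and fidelity machinery behind Theorem~\ref{thm: fidelity rand}, nor the conjugation bookkeeping of the randomness--discrimination dictionary. It also shows that only a maximum-confidence measurement could saturate $\frac{d}{n}$.

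Your argument is shorter once Theorem~\ref{thm: fidelity rand} is granted, and it makes the geometry transparent. The observation becomes just the remark ``$\rho\notin\p\Rightarrow\pg<\frac{d}{n}$'', translated through $M_j=\nu^{-\frac12}\nu_j^{*}\nu^{-\frac12}$. The cost is that you import a heavier theorem and rely on the exact equivalence of the two pictures, which you handle correctly.
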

\begin{proof}
Let's define $\K= \{K_j\}_j$ as the optimal POVM for state discrimination for the set $\{\nu_j\}$.
    The guessing probability is then upper bounded by the Cauchy-Schwartz inequality,
    \begin{equation}
    \begin{aligned}
    \pg \mleft( \{\nu_j\} \mright) & = \sum_{j=1}^{n} \tr \mleft( \nu_{j} K_j \mright) = \sum_{j=1}^{n} \tr \mleft( \nu^{- \frac{1}{2}} \nu_j \nu^{-\frac{1}{2}} \; \sqrt{\nu} K_j \sqrt{\nu} \mright) 
    \\
    &\leq \sum_{j=1}^{n} \tr \mleft( \nu^{-1} \nu_j \mright) \tr \mleft( K_j \nu \mright) = \frac{d}{n}\,.
    \end{aligned}
    \end{equation}
    Since each $\nu_j$ is rank one, the inequality on the second line is saturated if and only if
    \begin{equation}
     \sqrt{\nu} K_j \sqrt{\nu} \propto   \nu^{-\frac{1}{2}} \nu_j \nu^{- \frac{1}{2}} \qquad \textnormal{for all} \;\; j \in \{1, \, \hdots, \, n\}\,,
    \end{equation}
    or, in other words, if the the elements $\{K_j\}$ have the form
    \begin{equation}
        K_j = c_j \nu^{-1} \nu_j \nu^{-1}\,, \qquad c_j \geq 0 \;\; \textnormal{for all} \;\; j \in \{1, \, \hdots, \, n\}\,,
    \end{equation}
    which implies that condition \eqref{eqn: nu squared 2} holds.
\end{proof} 

The next corollary holds when $n=d^2$.
\begin{corollary}\label{cor: not d^2 SD}
 Let $\{\nu_j\}$ be a ensemble of $d^2$ rank-one states $\nu_j= p_j \ketbra{\phi_j}{\phi_j}$ that are linearly independent on the operator space and satisfy 
\begin{equation}
    \tr \mleft( \nu^{-1} \, \nu_j \mright) = \frac{d}{n} \qquad \textnormal{for all} \;\; j \in \{1, \, \hdots, \, d^2\}\,,
\end{equation}
where $\sum_{j=1}^{d^2} \nu_j = \nu$ and $d=\rank \nu$. If it is \emph{not} possible to write the operator $\nu^2$ as 
\begin{equation}\label{eqn: nu squared 3}
    \nu^2 = \sum_{j=1}^{d^2} c_j \nu_j\,, \qquad c_j \geq 0 \qquad \textnormal{for all} \;\; j \in \{1, \, \hdots, \, d^2\}\,,
\end{equation}
then the optimal measurement for state discrimination has at most $d^2-1$ nonzero elements.  
\end{corollary}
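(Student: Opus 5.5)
We argue by contradiction, mirroring the proof of Corollary \ref{corr: notd^2 rand}, but now on the state-discrimination side via the dual program \eqref{eqn: SD dual}. Suppose the optimal POVM $\K=\{K_j\}_j$ has all $d^2$ elements nonzero. Take an optimal dual variable $Y$ with $Y \sgeq \nu_j$ for all $j$. We work, as in the preceding discussion, on the support of $\nu$, which has dimension $d$. Both primal and dual are strictly feasible: $E_j=\id/n$ for the primal, and $Y=\id$ (suitably scaled) for the dual. So strong duality holds, and the complementary slackness condition \eqref{eqn: SDP comp slack} reads $(Y-\nu_j)K_j=0$ for every $j$. Since each $K_j\neq 0$, each $Y-\nu_j$ has a nontrivial kernel, i.e.\ $\det(Y-\nu_j)=0$.

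Next we show that $Y$ is invertible. Summing $Y\sgeq\nu_j$ over all $j$ gives $d^2 Y\sgeq \nu$, and $\nu$ is full rank on the working subspace, so $Y>0$ there. Because each $\nu_j$ is rank one, the matrix determinant lemma gives $0=\det(Y-\nu_j)=(1-\tr(Y^{-1}\nu_j))\det Y$. Hence $\tr(Y^{-1}\nu_j)=1$ for all $j\in\{1,\hdots,d^2\}$.

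Now we use informational completeness. When $n=d^2$ the linearly independent $\{\nu_j\}$ span the whole operator space, so these $d^2$ linear functionals determine $Y^{-1}$ uniquely. The hypothesis $\tr(\nu^{-1}\nu_j)=\frac{d}{n}=\frac1d$ shows that $Y^{-1}=d\,\nu^{-1}$ satisfies all of them. Therefore $Y=\frac1d\nu$ is the unique dual variable compatible with all $K_j$ nonzero, with dual value $\tr Y=\frac1d=\frac{d}{n}$. By strong duality this would give $\pg(\{\nu_j\})=\frac{d}{n}$.

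This contradicts Observation \ref{thm: outP SD}, which under the hypothesis that $\nu^2$ admits no nonnegative expansion \eqref{eqn: nu squared 3} gives $\pg(\{\nu_j\})<\frac{d}{n}$. Hence at least one $K_j$ vanishes. The main point needing care is the complementary slackness step. It requires that some primal–dual optimal pair exist with zero duality gap, which the strict feasibility of both programs guarantees. It also requires $Y$ to be invertible on the support of $\nu$, so the determinant lemma applies; any component of $K_j$ outside that support can be discarded without changing the objective.
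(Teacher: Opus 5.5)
Your proposal is correct and follows essentially the same route as the paper's proof: assume all $d^2$ POVM elements are nonzero, use complementary slackness so that each $Y-\nu_j$ is singular, get invertibility of $Y$ from $d^2 Y \sgeq \nu$, apply the matrix determinant lemma to obtain $\tr(Y^{-1}\nu_j)=1$, and use the spanning property of the $\{\nu_j\}$ to force $Y=\nu/d$ with value $1/d$, which contradicts the strict bound $\pg<\frac{d}{n}$ from the preceding observation. Your explicit strict-feasibility justification of strong duality and your remark about restricting to the support of $\nu$ are small clarifications that the paper leaves implicit; you also silently fix the paper's typos ($\rho$ for $\nu$, and ``every $\nu_j$'' for ``every $K_j$'').
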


\begin{proof}
We prove the corollary by contradiction, starting with the assumption that all $d^2$ elements $K_j$ in the optimal POVM for state discrimination are non-zero. Summing the condition $Y-\nu_j \sgeq 0$ over all $j \in \{1, \, \hdots, \, d^2\}$, we find $d^2 Y \sgeq \rho$, so $Y$ must be full rank, and therefore invertible. The complementary slackness condition \eqref{eqn: SDP comp slack} tells us that, if every $\nu_j$ is non-zero then every $Y- \nu_j$ must have non-full rank, and therefore determinant zero.
From the matrix determinant lemma, since $Y$ is invertible and every $\nu_j$ is rank one, this implies that
\begin{equation}
    0 = \det \mleft( Y - \nu_j \mright) = \left( 1 - \tr \mleft( Y^{-1} \nu_j \mright) \right) \det \mleft(Y \mright) \qquad \text{for all} \;\; j \in \{1, \, \hdots, \, d^2\}\,,
\end{equation}
so 
\begin{equation}
     \tr \mleft( Y^{-1} \nu_j\mright) = 1 \qquad \text{for all} \;\; j \in \{1, \, \hdots, \, d^2\}\,.
\end{equation}
The operator $Y^{-1}$ is positive semidefinite, so it is proportional to a quantum state up to a positive constant. Since the $d^2$ states $\{\nu_j\}$ are linearly independent, the overlaps $\tr \Big( Y^{-1} \nu_j \Big)$ uniquely specify the matrix $Y^{-1}$. The above distribution is satisfied by $Y^{-1}= d \nu^{-1}$, which gives $Y= \frac{\nu}{d}$. We then conclude that $Y= \frac{\nu}{d}$ is the unique optimal dual variable when there are $d^2$ nontrivial POVM elements. However, $Y= \frac{\nu}{d}$ gives a guessing probability of $\frac{1}{d}$ and we know from Observation \ref{thm: outP SD} that $\pg \mleft( \{\nu_j\} \mright) < \frac{1}{d}$ for states satisfying the condition \eqref{eqn: nu squared 3}, so the optimal state discrimination strategy must be to discard at least one of the $d^2$ outcomes.
\end{proof}

\section{Proofs from Section \ref{sec: ub qubit}}

\subsection{Unbiased MICs in dimension two}

\subsubsection{Parameterization with two angles}\label{sssec: two angles}
We will show that any unbiased MIC POVM $\m=\{M_j\}_j$ in dimension two can be parameterized by two angles. Writing the POVM elements as $M_j= \frac{1}{2}\ketbra{\psi_j}{\psi_j}$, we can represent the normalised projectors $\{\ketbra{\psi_j}{\psi_j}\}$ in Bloch vector notation as
\begin{equation}
  \ketbra{\psi_j}{\psi_j}= \frac{1}{2} \left(  \id + \vecr_j \cdot \vec{\sigma}\right)\,, \qquad \abs{\vecr_j}^2=1 \qquad \textnormal{for all} \;\; j \in \{1, \, \hdots, \, 4\}\,,
\end{equation}
where $\vec{\sigma}= \left( X, \, Y, \, Z\right)$ is a vector of Pauli matrices. 
We'll label the elements of each measurement vector by $\vecr_j = \left( x_j, \; y_j, \; z_j\right)$. The POVM is valid and informationally complete if and only if $\sum_{j=1}^{4} \vecr_j=0$ and the operators $\{\ketbra{\psi_j}{\psi_j}\}$ are all linearly independent \cite{D_Ariano_2005} (note that this does not imply that the vectors $\{\ket{\psi_j}\}$ are themselves linearly independent). We append the element 1 to the start of each vector to obtain $\{1, \vecr_j\}$. For the second condition to hold, from \cite[Section 3]{D_Ariano_2005}, we need that the four vectors $\{1, \, \vecr_j\}$ are linearly independent, so, defining the tetrahedron formed by the vectors $\{\vecr_j\}$ as $\mathcal{P}$, i.e., $\mathcal{P}= \textnormal{conv} \left( \{\vecr_j\}\right)$, we need
\begin{equation}\label{eqn: det qubits}
  \abs{ \det 
    \begin{pmatrix}
    1 & x_1 & y_1 & z_1
    \\
    1 & x_2 & y_2 & z_2
    \\
    1 & x_3 & y_3 & z_3
    \\
    1 & x_4 & y_4 & z_4
    \end{pmatrix}} =  6 \textnormal{Vol} \left( \mathcal{P} \right) \neq 0\,.
\end{equation}
In other words, the POVM elements are linearly independent as long as the polytope $\mathcal{P}$ has non-zero volume, or as long as not all the vectors lie in the same plane. The tetrahedron has its centre of mass at the origin. The authors of \cite{D_Ariano_2005} also note that normalisation forces the following three conditions, 
\begin{equation}\label{eqn: sum qubits}
\begin{aligned}
    \vecr_1 + \vecr_2 = - \left( \vecr_3 + \vecr_4  \right)\,,
    \\
    \vecr_1 + \vecr_3 = - \left( \vecr_2 + \vecr_4  \right)\,,
    \\
    \vecr_1 + \vecr_4 = - \left( \vecr_2 + \vecr_3  \right)\,.
    \end{aligned}
\end{equation}
Squaring both sides of each expression, for unbiased MICs, we have 
\begin{equation}\label{eqn: dot prod}
    \vecr_1 \cdot \vecr_2 = \vecr_3 \cdot \vecr_4 \,, \qquad \vecr_1 \cdot \vecr_3 = \vecr_2 \cdot \vecr_4 \,, \qquad \vecr_1 \cdot \vecr_4 = \vecr_2 \cdot \vecr_3\,.
\end{equation}
From this, we find 
\begin{equation}\label{eqn: triangle sides}
\begin{aligned}
  \abs{\vecr_1 - \vecr_2}^2 = 2 \left( 1 - \vecr_1 \cdot \vecr_2\right)  &= 2 \left( 1 - \vecr_3 \cdot \vecr_4\right) = \abs{\vecr_3 - \vecr_4}^2\,,
  \\
 \abs{\vecr_1 - \vecr_3}^2 = 2 \left( 1 - \vecr_1 \cdot \vecr_3\right)  &= 2 \left( 1 - \vecr_2 \cdot \vecr_4\right) = \abs{\vecr_2 - \vecr_4}^2\,,
 \\
 \abs{\vecr_1 - \vecr_4}^2 = 2 \left( 1 - \vecr_1 \cdot \vecr_4\right)  &= 2 \left( 1 - \vecr_2 \cdot \vecr_3\right) = \abs{\vecr_2 - \vecr_3}^2\,,
\end{aligned}
\end{equation}
so the tetrahedron has four congruent triangular faces. Such a tetrahedron is called a disphenoid. Since no two vectors are equal or antiparallel, we can align our Bloch sphere such that the first two vectors are given by
\begin{equation}
    \begin{aligned}
        \vecr_1 &= \left( - \cos \delta , \, -\sin \delta, \, 0\right)
        \\
        \vecr_2 &= \left( - \cos \delta , \, \sin \delta, \, 0\right)\,
    \end{aligned}
\end{equation}
for some angle $0 < \delta < \frac{\pi}{2}$.
By \eqref{eqn: sum qubits}, the remaining vectors must have the form 
\begin{equation}
\begin{aligned}
     \vecr_3 &= \left( x_3 , \, y_3, \, z_3\right)
        \\
        \vecr_4 &= \left( 2 \cos \delta -x_3, \, -y_3, \, -z_3\right)\,.
        \end{aligned}
\end{equation}
Applying
\begin{equation}
    \abs{\vecr_3}^2 = \abs{\vecr_4}^2 =1\,,
\end{equation}
we find 
\begin{equation}
    x_3 = \cos \delta\,, \qquad y_3^2 + z_3^2 = \sin^2 \delta\,.
\end{equation}
Without loss of generality, we can then write 
\begin{equation}
    y_3 = - \sin \delta \sin \alpha\,, \qquad z_3 = \pm \sin \delta \cos \alpha
\end{equation}
for some angle $- \frac{\pi}{2} < \alpha < \frac{\pi}{2}$. We take the liberty of choosing the positive sign for $z_3$ such that $\vecr_3$ lives in the upper hemisphere of the Bloch sphere. The final vectors are then 
\begin{equation}
\begin{aligned}
     \vecr_3 &= \left( \cos \delta , \, - \sin \delta \sin \alpha, \, \sin \delta \cos \alpha\right)
        \\
        \vecr_4 &= \left( \cos \delta , \,  \sin \delta \sin \alpha, \, -\sin \delta \cos \alpha\right)\,.
        \end{aligned}
\end{equation}
From \eqref{eqn: det qubits}, the volume of the polytope is
\begin{equation}
    \textnormal{Vol} \mleft( \mathcal{P} \mright) = \frac{4}{3} \sin^2 \delta \cos \delta \cos \alpha\,,
\end{equation}
which is nonzero in the range $0 < \delta < \frac{\pi}{2}$ and $- \frac{\pi}{2} < \alpha < \frac{\pi}{2}$, so all POVMs defined in this way are informationally complete. The corresponding measurement vectors are
\begin{equation}\label{eqn: meas vec MIC}
    \begin{aligned}
        \ket{\psi_1} &= \frac{1}{\sqrt{2}} \left( \ket{0} - e^{ i \delta} \ket{1} \right)\,,
        \\
        \ket{\psi_2} &= \frac{1}{\sqrt{2}} \left( \ket{0} - e^{-i \delta} \right)\,,
        \\
        \ket{\psi_3} &= \frac{1}{\sqrt{2 \left( 1 + \sin \delta \cos \alpha\right)}} \left( \left( 1 + \sin \delta \cos \alpha\right) \ket{0} + \left( \cos \delta - i \sin \delta \sin \alpha \right) \ket{1} \right)\,,
        \\
        \ket{\psi_3} &= \frac{1}{\sqrt{2 \left( 1 - \sin \delta \cos \alpha\right)}} \left( \left( 1 - \sin \delta \cos \alpha\right) \ket{0} + \left( \cos \delta + i \sin \delta \sin \alpha \right) \ket{1} \right)\,.
    \end{aligned}
\end{equation}
We are now ready to prove Theorem \ref{thm: group cov} from the main text.

\setcounter{theorem}{1}
\subsubsection{Group covariance}\label{sssec: group cov}
\begin{theorem}
  All qubit unbiased extremal POVMs are group covariant.  
\end{theorem}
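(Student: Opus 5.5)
The plan is to treat the three possible numbers of outcomes separately, since an unbiased extremal qubit POVM has $n\in\{2,3,4\}$ outcomes. In each case I will exhibit a finite group $G$ of unitaries, closed up to global phases so that it acts as a group on projectors, together with a fiducial projector $\ketbra{\psi_1}{\psi_1}$ such that $\{U_g \ketbra{\psi_1}{\psi_1} U_g^\dagger\}_{g\in G}$ is exactly the set of POVM elements, each appearing once. In Bloch-vector language, conjugation by the unitary $\vec{u}\cdot\vec{\sigma}$, with $\abs{\vec{u}}=1$, is a rotation by $\pi$ about the axis $\vec{u}$. It therefore suffices to find a group of rotations that acts simply transitively on the measurement vectors $\{\vecr_j\}$.

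The two easy cases go as follows.
\begin{itemize}
\item For $n=2$ we have $\vecr_2=-\vecr_1$. The group $\{\id, \vec{u}\cdot\vec{\sigma}\}$, for any unit $\vec{u}\perp\vecr_1$, maps $\vecr_1\mapsto -\vecr_1$, since a $\pi$-rotation about $\vec{u}$ reverses every vector orthogonal to it.
\item For the trine ($n=3$), the cyclic group generated by $e^{-i\frac{\pi}{3}\vec{n}\cdot\vec{\sigma}}$ rotates the plane of the three vectors by $2\pi/3$ about its normal $\vec{n}$. This permutes the three vectors cyclically, and its cube is $-\id$, which is trivial up to phase.
\end{itemize}

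For $n=4$ I will use the two-angle parameterisation \eqref{eqn: qubit UB MIC}. Geometrically, a disphenoid has three mutually perpendicular twofold symmetry axes, namely the lines through the midpoints of pairs of opposite edges. The $\pi$-rotations about these axes, together with the identity, form a Klein four-group, and this group acts regularly on the four vertices. Concretely, I will compute the three axes as follows.
\begin{itemize}
\item The midpoints of $(\vecr_1,\vecr_2)$ and $(\vecr_3,\vecr_4)$ are $(\mp\cos\delta,0,0)$, so the first axis is $\hat{x}$. The rotation $(x,y,z)\mapsto(x,-y,-z)$ swaps $\vecr_1\leftrightarrow\vecr_2$ and $\vecr_3\leftrightarrow\vecr_4$, and corresponds to the unitary $X$.
\item The midpoint of $(\vecr_1,\vecr_3)$ is proportional to $(0,-(1+\sin\alpha),\cos\alpha)$. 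Using the half-angle identities $1+\sin\alpha=2f(\alpha)^2$ and $\cos\alpha=2f(\alpha)g(\alpha)$, this is proportional to the unit vector $(0,-f,g)$, since $f^2+g^2=1$. This gives the unitary $-(fY-gZ)$, which equals $fY-gZ$ up to a phase.
\item The third axis is the cross product of the first two, $(0,g,f)$, giving the unitary $gY+fZ$.
\end{itemize}
I will then verify directly that each of these rotations permutes $\{\vecr_j\}$ as the corresponding double transposition. The unitaries $\{\id, X, fY-gZ, gY+fZ\}$ are pairwise anticommuting Pauli-type operators on orthonormal axes, so they multiply to one another up to phases and form a projective representation of $\mathbb{Z}_2\times\mathbb{Z}_2$. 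Taking $\ketbra{\psi_1}{\psi_1}$ as fiducial then generates all four elements, which establishes covariance.

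I expect the main obstacle to be purely computational: checking that the midpoint direction simplifies to $(0,-f,g)$, and confirming the sign conventions so that each rotation sends $\vecr_3$ exactly to the intended vertex. For example, the rotation about $(0,-f,g)$ should map $\vecr_1\mapsto\vecr_3$. I would carry out this check with the rotation formula $\vec{v}\mapsto 2(\vec{u}\cdot\vec{v})\vec{u}-\vec{v}$ applied to $\vecr_1$. I would also double-check the degenerate-looking endpoints of the allowed range, but these are excluded because $\abs{\alpha}<\frac{\pi}{2}$, which keeps $\cos\alpha>0$ and the axes well defined.
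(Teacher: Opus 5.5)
Your proposal is correct, and for the four-outcome case it is the paper's construction: the identity together with $X$, $f(\alpha)Y-g(\alpha)Z$ and $g(\alpha)Y+f(\alpha)Z$, acting on the fiducial $\ket{\psi_1}$. Your derivation of these axes as the disphenoid's edge-midpoint symmetry axes, and your explicit check of the induced double transpositions, supply the verification the paper only asserts, while your separate treatment of $n=2$ and $n=3$ covers cases that the appendix version of the statement includes but the paper's proof does not address.
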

\begin{proof}
Following \cite[Chapter 12.1]{Bengtsson2017}, a collection of $d^2$ unitary matrices $\{U_j\}$ is a \emph{unitary operator basis} of group type if one of its elements, say $U_1$, is the identity matrix, if 
\begin{equation}
    \tr U_j =0 \qquad \textnormal{for all} \;\; j \in \{2, \, \hdots, \, d^2\}\,,
\end{equation}
and if
\begin{equation}
    U_{j} U_{k} = \lambda \mleft( j, \, k\mright) U_{jk}\,, \qquad \abs{\lambda \mleft( j, \, k\mright)}=1 \qquad \textnormal{for all} \;\; j , \, k \;\; \in \{1, \, \hdots, \, d^2\}\,.
\end{equation}
One can then construct a group-covariant POVM $\m=\{M_j\}_j$ with elements
\begin{equation}
    M_j = \frac{1}{d}U_j \nu U_j^{\dagger}\,,
\end{equation}
where $\nu$ is some density matrix \cite{D_Ariano_2004}.
One such group in dimension $d=2$ is given by $\{U_j\}$, where
\begin{equation}
    \begin{aligned}
        U_1 &= \id\,,
        \\
        U_2 &= X\,,
        \\
        U_3 &= \frac{1}{\sqrt{2}} \left( \cos \frac{\alpha}{2} + \sin \frac{\alpha}{2}\right) Y - \frac{1}{\sqrt{2}} \left( \cos \frac{\alpha}{2} - \sin \frac{\alpha}{2}   \right) Z\,,
        \\
         U_4 &= \frac{1}{\sqrt{2}} \left( \cos \frac{\alpha}{2} - \sin \frac{\alpha}{2}\right) Y + \frac{1}{\sqrt{2}} \left( \cos \frac{\alpha}{2} + \sin \frac{ \alpha}{2}\right) Z\,,
    \end{aligned}
\end{equation}
Choosing the fiducial vector
\begin{equation}
    \ket{\phi} = \frac{1}{\sqrt{2}} \left( \ket{0} - e^{i \delta} \ket{1} \right) = \ket{\psi_1}\,,
\end{equation}
where $\nu= \ketbra{\phi}{\phi}$, we recover the measurement vectors \eqref{eqn: meas vec MIC}.
\end{proof}

\subsubsection{Geometry}\label{ssec: geometry}
In this section, we will not use the parameterization of the qubit MIC $\m$ given in \ref{sssec: two angles}. Instead,
we introduce the following notation, 
\begin{equation}\label{eqn: a b c}
   a= \vecr_1 \cdot \vecr_2  \,, \qquad b= \vecr_1 \cdot \vecr_3 \,, \qquad c= \vecr_2 \cdot \vecr_3\,.
\end{equation}
From \eqref{eqn: dot prod}, note the useful properties
\begin{equation}\label{eqn: sum prop}
    a + b + c = -1
\end{equation}
and 
\begin{equation}\label{eqn: other prop}
    ab+ ac + bc = - a \left( 1 + a \right) + bc = -b \left( 1+ b\right) + ac = - c \left( 1+c \right) + ab\,.
\end{equation}
Summing the final three expressions together, we find
\begin{equation}
    2 \left( ab + ac + bc \right) = - a \left( 1 + a \right)  -b \left( 1+ b\right)- c \left( 1+c \right) = 1 - a^2 - b^2 -c^2\,.
\end{equation}
Using \eqref{eqn: triangle sides}, we can write the semiperimeter $s$ of the triangle formed by the vectors $\vecr_1$, $\vecr_2$ and $\vecr_3$ as
\begin{equation}
    s = \frac{1}{2} \left( \abs{\vecr_1- \vecr_2} + \abs{\vecr_1- \vecr_3} + \abs{\vecr_2- \vecr_3}\right) = \frac{1}{\sqrt{2}} \left( \sqrt{1-a} + \sqrt{1-b} + \sqrt{1-c} \right)
\end{equation}
so, using Heron's formula, the squared area of the triangle is 
\begin{align}
A^2 &= s \left( s -\abs{\vecr_1 - \vecr_2}\right) \left( s - \abs{\vecr_1 - \vecr_3} \right) \left(s - \abs{\vecr_2 - \vecr_3} \right)
\\
    &= \frac{1}{4} \left( 3 - 2 \left(a + b + c \right) + 2 \left( ab + ac + bc \right) - a^2 - b^2 -c^2 \right)
    \\
    &= \frac{1}{2} \left( 3 - a^2 - b^2 -c^2 \right) = 1 + ab + ac + bc\,. \label{eqn: Asquared}
    \end{align}
We will make extensive use of \eqref{eqn: sum prop}, \eqref{eqn: other prop} and the two formulations of $A^2$ in \eqref{eqn: Asquared} throughout this section. We could alternatively have found the area by taking the cross product
\begin{equation}
    A = \frac{1}{2} \abs{\left( \vecr_1 - \vecr_3 \right) \times \left( \vecr_2 - \vecr_3\right) }\,,
\end{equation}
where the choice of vectors is arbitrary. We denote the faces of the polytope $\p$ by $\left( 123\right)$, $\left( 124\right)$, $\left( 134\right)$ and $\left( 234\right)$. We choose an orientation of the vertices such that the following vectors normal to each face point out of the tetrahedron,
\begin{equation}\label{eqn: normal vecs}
\begin{aligned}
    &\vec{n}_{\left(123\right)} =  \frac{   ( \vecr_1 - \vecr_3 ) \times ( \vecr_2 - \vecr_3  ) }{2 A   } \,, \qquad
    &&\vec{n}_{\left(124\right)} =  -\frac{   ( \vecr_1 - \vecr_4 ) \times ( \vecr_2 - \vecr_4  ) }{ 2 A  } \,, \vspace{0.2 cm}
    \\
    &\vec{n}_{\left(134\right)} = \frac{   ( \vecr_1 - \vecr_4 ) \times ( \vecr_3 - \vecr_4  ) }{ 2 A  } \,, \qquad
    &&\vec{n}_{\left(234\right)} = -\frac{   ( \vecr_2 - \vecr_4 ) \times ( \vecr_3 - \vecr_4  ) }{ 2 A  } \,.
    \end{aligned}
\end{equation}
We will use $\{\vec{n}\}= \{ \vec{n}_{\left(123\right)}, \, \vec{n}_{\left(124\right)}, \, \vec{n}_{\left(134\right)}, \, \vec{n}_{\left(234\right)}   \}$ as a shorthand for this set of unit vectors.
Defining 
\begin{equation}
    l= \frac{\vecr_1 \cdot \left( \vecr_2 \times \vecr_3 \right)}{2 A}\,, 
\end{equation}
where again the vectors are chosen arbitrarily, we have
\begin{equation}
\begin{aligned}
l 
&= \vec{n}_{(123)} \cdot \vecr_i 
&& \qquad i \!\! &\! \in \;& \{1, \, 2, \, 3\} \\
&= \vec{n}_{(124)} \cdot \vecr_j 
&& \qquad j \!\! & \!\in \;& \{1, \, 2, \, 4\} \\
&= \vec{n}_{(134)} \cdot \vecr_k 
&& \qquad k \!\! & \!\in \;& \{1, \, 3, \, 4\} \\
&= \vec{n}_{(234)} \cdot \vecr_l 
&& \qquad l \!\! & \!\in \;& \{2, \, 3, \,4\}.
\end{aligned}
\end{equation}
It will be helpful to have an expression for $l$ that is free of cross products. Note that $l$ is the shortest distance from the origin to the plane $\left( 123 \right)$. The point $l \vec{n}_{\left(123 \right)}$, where the normal vector $\vec{n}_{\left( 123\right)}$ intersects the face $\left( 123\right)$, must be of the form 
\begin{equation}\label{eqn: ln in bary}
    l \vec{n}_{\left( 123 \right)}= p_1 \vecr_1 + p_2 \vecr_2 + p_3 \vecr_3\,
\end{equation}
for some probability distribution $\{p_j\}$. Defining the column vector
\begin{equation}
    \vec{p} = \left( p_1, \, p_2, \, p_3\right)^{\top}\,,
\end{equation}
we can consider the optimisation problem
\begin{equation}
    \begin{aligned}
     l^2=     \underset{ \vec{p}}{\text{minimize}}  \, \quad & \vec{p}^{ \top}    X \, \vec{p}  && {}
        \\
         \text{subject to} \quad & - \vec{p}  \leq 0\,, \quad &&  
        \\
           \quad &\left( 1, \, 1, \, 1\right) \, \vec{p}     = 1\,, \quad && 
    \end{aligned}
\end{equation}
where the inequality constraint holds for each component of $\vec{p}$ and where
\begin{equation}
    X = \begin{pmatrix}
        1 & a & b
        \\
        a & 1 & c
        \\
        b & c & 1
    \end{pmatrix}\,.
\end{equation}
The inequality constraints are convex and the equality constraint is linear, so the KKT conditions are necessary and sufficient. The complementary slackness condition \eqref{eqn: KKT comp slack} is
\begin{align}
\lambda_j \, p_j = 0\,, \qquad  \textnormal{for all} \;\; j \in \{1, \, 2, \, 3\}  \,.
\end{align}
We know that the state in question must be in the interior of the face $\left( 123\right)$, so all $p_j$ must be nonzero. This forces all the slack variables $\lambda_j$ to be zero, such that the stationarity condition \eqref{eqn: KKT stat} becomes
\begin{equation}
    X  \vec{p} + \nu \left(1, \, 1, \, 1 \right)^{\top} =0 \,.
\end{equation}
Inverting $X$ and using \eqref{eqn: sum prop}, we find
\begin{equation}
    X^{-1} = \frac{1}{2t} \begin{pmatrix}
     1 -c^2 & bc -a & ac - b
     \\
     bc - a & 1 - b^2 & ab -c
     \\
     ac - b & ab -c & 1 - a^2
    \end{pmatrix}\,,
\end{equation}
where
\begin{equation}\label{eqn: t}
    t = \left(1+a \right) \left( 1+b\right) \left( 1+c\right)\,.
\end{equation}
The quantity $t$ will be useful later. Setting
\begin{equation}
    \nu= - \frac{t}{2 A^2}
\end{equation}
such that $\vec{p}$ is normalised, we find
\begin{equation}
    \vec{p} = \frac{t}{2 A^2} X^{-1}  \left(1 , \, 1, \, 1 \right)^{\top} = \frac{1}{2 A^2} \left( 1- c^2, \, 1-b^2, \, 1-a^2  \right)^{\top}
\end{equation}
such that
\begin{equation}\label{eqn: l squared}
    l^2 =\frac{t}{2 A^2} \,   \left(1 , \, 1, \, 1 \right) \, X^{-1} \, X \,  \vec{p} = \frac{t}{2 A^2}
\end{equation}
and \eqref{eqn: ln in bary} becomes
\begin{equation}\label{eqn: l n bary}
    l \vec{n}_{\left( 123 \right)} = \frac{1}{2 A^2} \left(  \left( 1-c^2 \right)\vecr_1 + \left( 1-b^2\right) \vecr_2 + \left( 1-a^2\right) \vecr_3  \right)\,.
\end{equation}
Using \eqref{eqn: Asquared}, we can rewrite \eqref{eqn: l squared} as
\begin{equation}\label{eqn: l squared ito a}
    l^2 = \frac{1+a}{2 A^2} \left( 1+ b + c + bc \right) = \frac{1+a}{2} \left( 1 - \frac{1-a^2}{A^2} \right)\,,
\end{equation}
where $a$ could equivalently be swapped with $b$ or $c$. Finally, it will be useful to have an expression for $\vec{n}_{\left( 123\right)} \cdot \vec{n}_{\left( 124\right)}$ using only the variables $l$ and $a$. 
We will use yet another expression for the squared area,
\begin{equation}\label{eqn: weird area}
    A^2 = 1 + a \left(b + c\right) + bc = \left( 1-a^2\right)- \left( a - bc\right)\,,
\end{equation}
such that
\begin{equation}\label{eqn: weird area 2}
    A^2 - 2 \left( 1-a^2 \right) = - A^2 - 2 \left( a - bc\right)\,.
\end{equation}
Then, using \eqref{eqn: weird area 2} and the cross product forms \eqref{eqn: normal vecs}, we can write 
\begin{equation}\label{eqn: nprod 1}
\begin{aligned}
\vec{n}_{\left( 123\right)} \cdot \vec{n}_{\left( 124 \right)} &= \frac{A^2 -2 \left( 1- a^2\right)}{ A^2}  = -1 - \frac{2 \left( a - bc\right)}{A^2}\,. 
\end{aligned}
\end{equation}
Now, with \eqref{eqn: t} and \eqref{eqn: l squared}, we use 
\begin{equation}\label{eqn: other l}
  2   l^2 A^2 =  t =\left( 1+a \right) \left( 1+b\right) \left( 1+c\right) = -\left( 1+a \right) \left( a -bc \right)
\end{equation}
to write 
\begin{equation}\label{eqn: A^2 upside down}
    \frac{a-bc}{A^2} = - \frac{2 l^2}{1+a}
\end{equation}
such that
\begin{equation}\label{eqn: nprod 2}
\vec{n}_{\left( 123\right)} \cdot \vec{n}_{\left( 124 \right)} = -1 + \frac{4 l^2}{1+a}\,.   
\end{equation}
Similar expressions hold for the other overlaps if we swap $a$ for $b$ or $c$.

\subsubsection{Probability space}
We can gain an insight into a given informationally complete POVM by characterising the entire set of probability distributions it can produce. We begin by establishing some general properties of unbiased MICs in any finite dimension $d$, referring the reader to \cite{debrota2021} for more details. Since our POVM $\m=\{M_j\}_j$, with elements $M_j=\frac{1}{d}\ketbra{\psi_j}{\psi_j}$, is informationally complete by definition, we can represent any state $\rho$ uniquely as a normalised linear combination of the projectors $\{ \ketbra{\psi_j}{\psi_j} \}$, i.e.,
\begin{equation}\label{eqn: q rep}
    \rho = \sum_{j=1}^{d^2} q_j \ketbra{\psi_j}{\psi_j}\,, \qquad \sum_{j=1}^{d^2} q_j =1\,,
\end{equation}
where $\{q_j\}$ is a quasiprobability distribution, as its terms sum to one but may not all be positive. 
To determine the coefficients $\{q_j\}$, we first define the Gram matrix $G$, whose entries are
\begin{equation}\label{eqn: gram}
   \left[G \right]_{ij} =  \abs{ \langle \psi_i | \psi_j\rangle}^2\,, \qquad i, \, j \in \{1, \, \hdots, \, d^2\}\,.
\end{equation}
(Note that some works define the Gram matrix entries by $\left[G \right]_{ij}= \tr \mleft( M_i \, M_j\mright)= \frac{1}{d^2}\abs{ \langle \psi_i | \psi_j\rangle}^2$.) We can then find the probabilities by
\begin{equation}
 d p_i=   \tr \mleft( \rho \ketbra{\psi_i}{\psi_i}\mright)  = \sum_{j=1}^{d^2} q_j \left[ G \right]_{ij}\,.
\end{equation}
The Gram matrix $G$ is positive definite, and thus invertible, because the operators $\{\ketbra{\psi_j}{\psi_j}\}$ are linearly independent. Taking the inverse $G^{-1}$, we can define the \emph{dual basis} $\{F_j\}$ of the set $\{\ketbra{\psi_j}{\psi_j}\}$, which has elements 
\begin{equation}\label{eqn: F dual}
   F_j = \sum_{k=1}^{d^2} \left[ G^{-1} \right]_{jk} \ketbra{\psi_k}{\psi_k}\,, \qquad j \in \{1, \, \hdots, \, d^2\}
\end{equation}
such that 
\begin{equation}
    \tr \mleft( F_j \ketbra{\psi_i}{\psi_i}\mright) = \sum_{k=1}^{d^2} \left[ G^{-1}\right]_{jk} \left[ G\right]_{ki}= \delta_{i, \, j} \qquad \textnormal{for all} \;\; i, \, j \in \{1, \, \hdots, \, d^2\}\,,
\end{equation}
and we find the inverse relation
\begin{equation}\label{eqn: q ito p}
 q_j=   \tr \mleft( \rho F_j \mright)= d \sum_{k=1}^{d^2} \left[ G^{-1} \right]_{jk} p_k\,.
\end{equation}
Subbing in the maximally mixed state $\rho= \frac{\id}{d}$, for which $q_j= p_j =\frac{1}{d^2}$ for all $j$, we find
\begin{equation}
    \tr F_j = \sum_{k=1}^{d^2} \left[G^{-1} \right]_{jk} = \frac{1}{d} \qquad \textnormal{for all} \;\; j \in \{1, \, \hdots, \, d^2\}\,.
\end{equation}
We can also define the `classical' average post-measurement state of a given $\rho$ as
\begin{equation}\label{eqn: post meas}
    \mathcal{O} \mleft( \rho \mright) = \sum_{j=1}^{d^2} \tr \mleft( \rho M_j \mright) \ketbra{\psi_j}{\psi_j} = \sum_{j=1}^{d^2} p_j \ketbra{\psi_j}{\psi_j}\,.
\end{equation}
Since $\m$ is informationally complete, the map $\mathcal{O} \mleft( \cdot \mright)$ is invertible. Note, finally, that we can extract the probabilities $\{p_j\}$ from $\mathcal{O}\mleft( \rho\mright)$ by 
\begin{equation}\label{eqn: prob O}
    p_j = \tr \mleft( \mathcal{O} \mleft(\rho \mright) F_j  \mright)\,.
\end{equation}
Now let's return to our unbiased MIC POVMs in dimension two, with measurement vectors $\{\vecr_j\}$ that satisfy \eqref{eqn: a b c} and all of the relations given in \ref{ssec: geometry}. Consider a state $\rho$ in Bloch notation, 
\begin{equation}
    \rho = \frac{1}{2} \left( \id + \vecr \cdot \vec{\sigma}\right)\,, \qquad \abs{\vecr}^2 \leq 1\,.
\end{equation}
In terms of Bloch vectors, then, from \eqref{eqn: q rep} and \eqref{eqn: gram} we find 
\begin{equation}
    \vecr = \sum_{j=1}^{4} q_j \vecr_j\,, \qquad \left[ G \right]_{ij} = \frac{1}{2} \left( 1 + \vecr_i \cdot \vecr_j \right) \qquad \textnormal{for all} \;\; i, \, j \in \{1, \, \hdots, \, 4\}\,,
\end{equation}
and from \eqref{eqn: F dual} we have
\begin{equation}\label{eqn: f eqs}
    F_j = \frac{1}{4} \left( \id + \vec{f}_j \cdot \vec{\sigma} \right)\,, \qquad \vec{f}_j = 2 \sum_{k=1}^{4} \left[ G^{-1}\right]_{jk} \vecr_{k}\,, \qquad q_j = \frac{1}{4} \left( 1 + \vec{f}_j \cdot \vecr \right)\,.
\end{equation}
Let $\vec{R}$ be the Bloch vector of the average state $\mathcal{O} \mleft( \rho \mright)$, such that \eqref{eqn: post meas} and \eqref{eqn: prob O} become
\begin{equation}\label{eqn: prob ito R}
    \vec{R} = \sum_{j=1}^{4} p_j \vecr_j\,, \qquad p_j = \frac{1}{4} \left(1 + \vec{f_j} \cdot \vec{R} \right)\,.
\end{equation}
Notice that the vector $\vec{R}$ represents the probability distribution $\{p_j\}$ in baryonic coordinates, using the measurement vectors $\{\vecr_j\}$ as a basis.
Concretely, our Gram matrix is
\begin{equation}
    G = \frac{1}{2} \left( \mathbb{J} +  H \right) \,, \qquad 
    H = \begin{pmatrix}
        1 & a & b & c
        \\
        a & 1 & c & b
        \\
        b & c & 1 & a
        \\
        c & b & a & 1
    \end{pmatrix}\,,
\end{equation}
where $\mathbb{J}$ is the matrix of ones. Its inverse is
\begin{equation}
    G^{-1} = \frac{1}{8} \left( \mathbb{J} + \frac{4}{t} M \right)\,,
\end{equation}
where
\begin{equation}
  M =
    \frac{1}{2}\begin{pmatrix}
        A^2 & A^2 - 2 \left( 1- a^2\right) &  A^2 - 2 \left( 1- b^2\right) &  A^2 - 2 \left( 1- c^2\right)
        \\
        A^2 - 2 \left( 1- a^2\right) & A^2 & A^2 - 2 \left( 1- b^2\right) & A^2 - 2 \left( 1- c^2\right)
        \\
        A^2 - 2 \left( 1- b^2\right) & A^2 - 2 \left( 1- c^2\right) & A^2 & A^2 - 2 \left( 1- a^2\right)
        \\
        A^2 - 2 \left( 1- c^2\right) & A^2 - 2 \left( 1- b^2\right) & A^2 - 2 \left( 1- a^2\right) & A^2
    \end{pmatrix}\,,   
\end{equation}
and
\begin{equation}
    t = \left( 1+ a \right) \left( 1+b \right) \left( 1+c \right)
\end{equation}
as before. To prove that this is the correct inverse, we expand
\begin{equation}
    H M = \begin{pmatrix}
        \kappa & \alpha & \beta & \gamma
        \\
        \alpha & \kappa & \gamma & \beta
        \\
        \beta & \gamma & \kappa & \alpha
        \\
        \gamma & \beta & \alpha & \kappa
    \end{pmatrix}\,,
\end{equation}
with
\begin{equation}
    \begin{aligned}
        \kappa &= -a \left( 1-a^2\right) - b \left( 1 - b^2\right) -c \left( 1- c^2\right)\,,
        \\
        \alpha &= -1 + a^2 - b- c + bc \left( b+c \right) \,,
        \\
        \beta &= -1 + b^2 - a- c + ac \left( a+c \right) \,,
        \\
        \gamma &= -1 + c^2 - a- b + ab \left( a+b \right)\,.   
    \end{aligned}
\end{equation}
We can break $\kappa$ down as
\begin{equation}
    \begin{aligned}
        \kappa &= -a \left( 1-a^2\right) - b \left( 1 - b^2\right) -c \left( 1- c^2\right) 
        \\
        &= \left( 1 + b + c \right) \left( 1-a^2\right) + \left( 1 + a + c\right) \left( 1 - b^2\right) + \left(1 + a + b \right)  \left( 1- c^2\right)
        \\
        &=   3 + 2 \left( a+b+c\right) - a^2 - b^2 -c^2 - ab \left( a+b\right) - ac \left( a+c\right) - bc \left( b+c\right) 
        \\
        &=   1 - a^2 - b^2 -c^2  + ab \left( 1+c\right) + ac \left( 1+b\right) + bc \left( 1+a\right) 
        \\
        &= 3 \left( ab + ac + bc\right) + 3abc = 3t\,
    \end{aligned}
\end{equation}
and write $\alpha$ as
\begin{equation}
    \begin{aligned}
        \alpha &=   -1 + a^2 - b- c + bc \left( b+c \right)   
        \\
        &= -1 - a \left( 1 + b +c \right) - b - c - bc \left( 1 +a \right)
        \\
        &= -ab - ac - bc - abc=-t\,.
    \end{aligned}
\end{equation}
By symmetry, we can conclude that $\alpha = \beta = \gamma = -t$. Since the elements in each column of $H$ and $M$ sum to zero, we have
\begin{equation}\label{eqn: inverse correct}
  \left[  G G^{-1} \right]_{ij}= \left[ \frac{1}{16}\left(  \mathbb{J} + H  \right) \left(  \mathbb{J} + \frac{4}{t} M \right) \right]_{ij} = \left[ \frac{1}{4} \left( \mathbb{J} + \frac{1}{t} HM \right)\right]_{ij} = \frac{1}{4} \left( 1 + \frac{1}{t} \left[ HM\right]_{ij} \right) = \delta_{i, \, j}
\end{equation}
for all $i$, $j \in \{1, \, \hdots, \, 4\}$. Using \eqref{eqn: nprod 1} and \eqref{eqn: nprod 2}, we can rewrite the matrix $M$ as
\begin{equation}
    M = \frac{A^2}{2} \begin{pmatrix}
      1 & -1 + \frac{4 l^2}{1+a} & -1 + \frac{4 l^2}{1+b}  & -1 + \frac{4 l^2}{1+c} \vspace{0.1 cm}
      \\
     -1 + \frac{4 l^2}{1+a} & 1 & -1 + \frac{4 l^2}{1+c} & -1 + \frac{4 l^2}{1+b}  \vspace{0.1 cm}
     \\
     -1 + \frac{4 l^2}{1+b} & -1 + \frac{4 l^2}{1+c} & 1 & -1 + \frac{4 l^2}{1+a} \vspace{0.1 cm}
     \\
     -1 + \frac{4 l^2}{1+c} & -1 + \frac{4 l^2}{1+b} & -1 + \frac{4 l^2}{1+a} & 1
    \end{pmatrix}\,.
\end{equation}
It will be useful to have an expression for $M^3$ too. It can be written as
\begin{equation}
    M^3 =  \frac{A^{6}}{2} 
    \begin{pmatrix}
     m & n & o & p
     \\
     n & m & p & o
     \\
     o & p & m & n
     \\
     p & o & n & m
    \end{pmatrix}\,,
\end{equation}
with elements
\begin{equation}
    \begin{aligned}
        m &= 1+ 12 l^4 \bigg( w - \frac{4}{t}   \bigg)
  + \frac{96 l^6}{t} \,,
        \\
        n &= -1 + \frac{12 l^2}{1+a}   +12 l^4 \bigg(  \frac{4 a}{ t} -w   \bigg)
         + \frac{16 l^6}{\left( 1+a\right)  } \bigg( \frac{1}{ \left( 1+a\right)^2} + \frac{3}{ \left( 1+b\right)^2} + \frac{3}{ \left( 1+c\right)^2} \bigg) \,,
        \\
       o &= -1 + \frac{12 l^2}{1+b}+12 l^4 \bigg(  \frac{4 b}{ t} -w   \bigg)  
         + \frac{16 l^6}{\left( 1+b\right)  } \bigg( \frac{1}{ \left( 1+b\right)^2} + \frac{ 3}{ \left( 1+c\right)^2} + \frac{ 3}{ \left( 1+a\right)^2} \bigg) \,,
        \\
        p &= -1 + \frac{12 l^2}{1+c}  +12 l^4 \bigg(  \frac{4 c} {t} -w   \bigg) 
     + \frac{16 l^6}{\left( 1+c\right) } \bigg( \frac{1}{ \left( 1+c\right)^2} + \frac{3}{ \left( 1+b\right)^2} + \frac{3}{ \left( 1+a\right)^2}\bigg)\,,
    \end{aligned}
\end{equation}
where
\begin{equation}
    w = \frac{1}{ \left( 1+a\right)^2}  + \frac{1}{ \left( 1+b\right)^2} + \frac{1}{ \left( 1+c\right)^2}\,,
\end{equation}
such that
\begin{equation}
    w t^2 = \left( 1+a\right)^2 \left( 1+b\right)^2 + \left( 1+a\right)^2 \left( 1+c\right)^2 + \left( 1+b\right)^2 \left( 1+c\right)^2\,.
\end{equation}
We have the relation
\begin{equation}\label{eqn: m n o p}
\begin{aligned}
    m + n - o - p &= \frac{2}{t^3} \left( t + 2 l^2 \left( \left( 1+b\right) \left( 1+c\right) - \left( 1+a\right) \left( 1+b\right) - \left( 1+a\right) \left( 1+c\right) \right) \right)^3
    \\
    &= \frac{2}{t^3} \left( t - 2 l^2 \left( \left(1 + ab + ac + bc \right) + 2 \left( a - bc \right)\right)  \right)^3
    \\
    &= \frac{2}{t^3} \left( 2 A^2 l^2 - 2 l^2 \left( A^2 - 2 \left( a - bc \right)\right)  \right)^3
    \\
    &= \frac{2}{t^3} \left( \frac{8l^4 A^2}{1+a} \right)^3 = \frac{128 l^6}{\left( 1+a\right)^3}\,,
    \end{aligned}
\end{equation}
where in the third line we use \eqref{eqn: l squared} and in the last line we use \eqref{eqn: other l}. Similarly, we have 
\begin{equation}
    \begin{aligned}
     m - n + o - p &=   \frac{128 l^6}{ \left( 1+b \right)^3 }\,,
     \\
     m - n - o + p &=   \frac{128 l^6}{ \left( 1+c \right)^3 }\,.
    \end{aligned}
\end{equation}
We can now focus on characterizing the space of allowed probability distributions. Considering only pure states $\vecr$, which form the boundary, the allowed distributions $\{q_j\}$ are defined by 
\begin{equation}
    \abs{\vecr}^2 =  \sum_{i, \, j =1}^{4} q_i q_j \, \vecr_i \cdot \vecr_j  = \sum_{i, \, j =1}^{4} q_i \left[ H \right]_{ij} q_j = 1\,.
\end{equation}
Using \eqref{eqn: q ito p}, in terms of $\{p_j\}$ we find
\begin{equation}
\abs{\vecr}^2 =    4 \sum_{k, \, l=1}^{4} p_k \left( \sum_{i,\, j=1}^{4} \left[ G^{-1}\right]_{ki} \left[ H\right]_{ij} \left[ G^{-1} \right]_{jl} \right)p_l = \frac{1}{ t^2} \sum_{k, \, l=1}^{4} p_k \left[M H M \right]_{kl} p_l= 1\,.
\end{equation}
From \eqref{eqn: inverse correct}, we find 
\begin{equation}
    H M = t \left( 4 \id - \mathbb{J}\right)\,,
\end{equation}
such that
\begin{equation}
   \left[M H M \right]_{kl} = 4t \left[ M\right]_{kl}\
\end{equation}
and
\begin{equation}
    \abs{\vecr}^2 =  \frac{4}{t} \sum_{k, \,  l=1}^{4} p_k \left[ M \right]_{kl} p_l= 1\,.
\end{equation}
This gives us a quadratic expression for the boundary of the set of possible probability distributions $\{p_j\}$ as Cartesian coordinates. Now we want to characterise the space of the average post-measurement states $\vec{R}$. Using \eqref{eqn: prob ito R}, and the fact that $\sum_{k=1}^{4} \left[ M\right]_{kl}=\sum_{l=1}^{4} \left[ M\right]_{kl} =0$, we find
\begin{equation}\label{eqn: R ellipsoid eq}
   \frac{1}{4t} \sum_{k, \, l=1}^{4} \left( \vec{f}_k \cdot \vec{R} \right) \, \left[ M \right]_{kl} \, \left( \vec{f}_l \cdot \vec{R} \right) =  \vec{R} \cdot \left( \frac{1}{4t} \sum_{k, \, l=1}^{4} \left[ M \right]_{kl} \vec{f}_{k} \otimes \vec{f}_l   \right) \cdot \vec{R}= 1\,,
\end{equation}
where $\otimes$ denotes the outer product. We now have a quadratic equation for the states $\vec{R}$. Let's define the $3 \times 3$ matrix
\begin{equation}
    N =\frac{1}{4t} \sum_{k, \, l=1}^{4} \left[ M \right]_{kl} \vec{f}_{k} \otimes \vec{f}_l\,.
\end{equation}
Using \eqref{eqn: f eqs}, we have
\begin{equation}
    \vec{f}_k = \frac{1}{4} \sum_{l=1}^{4} \left[ \mathbb{J} + \frac{4}{t} M \right]_{k l} \vecr_l = \frac{1}{t} \sum_{l=1}^{4} \left[M \right]_{kl} \vecr_l\,,
\end{equation} 
so we can rewrite $N$ as 
\begin{equation}
    N = \frac{1}{4t^3} \sum_{k, \, l=1}^{4} \left[M^3 \right]_{k l} \vecr_{k} \otimes \vecr_l\,.
\end{equation}
\begin{figure}
\centering
    \begin{subfigure}{0.45\textwidth}
    \includegraphics[width=\textwidth]{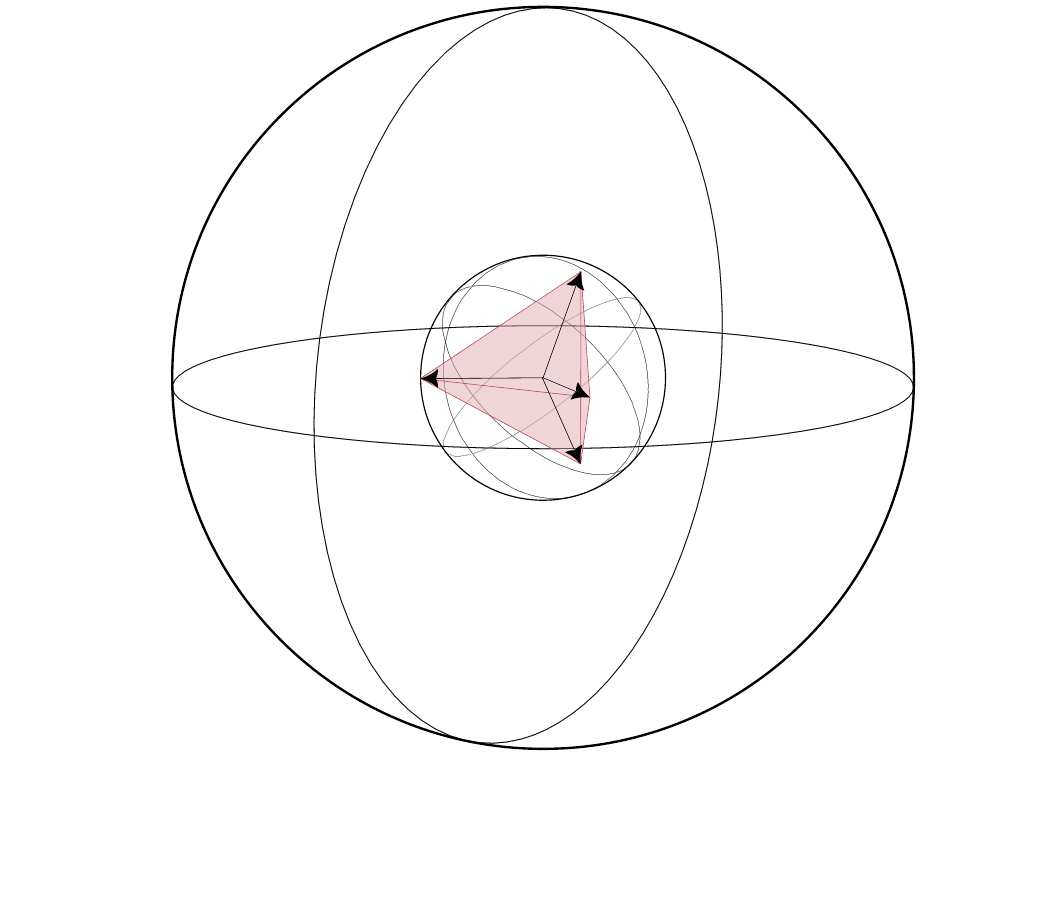}
    \caption{SIC POVM.}
    \label{fig: SIC after}
\end{subfigure}
\hfill
\begin{subfigure}{0.45\textwidth}
    \includegraphics[width=\textwidth]{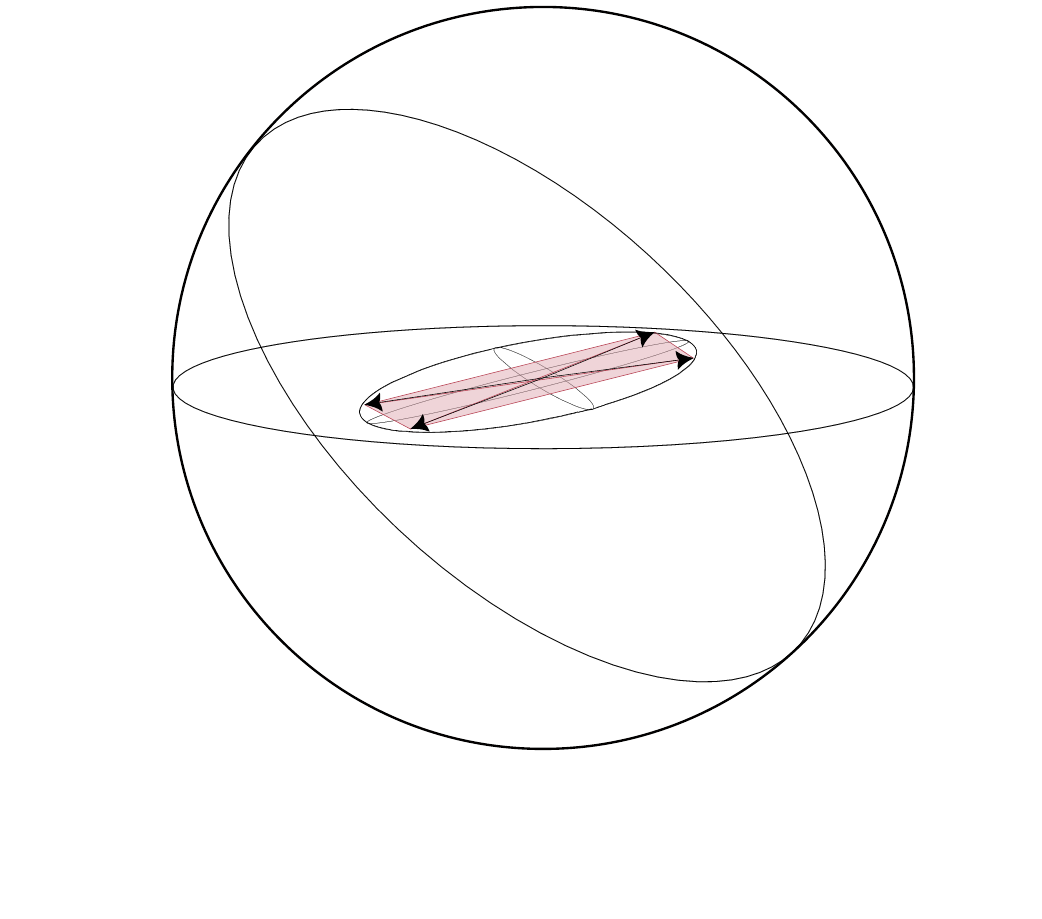}
    \caption{A general qubit MIC.}
    \label{fig: gen after}
\end{subfigure}
    \caption{The space of average post-measurement states $\vec{R}$ for unbiased MICs in dimension two.}
    \label{fig: POVMs after}
\end{figure}
We can see that $\left( \vecr_1 + \vecr_2\right)$ is an unnormalised eigenvector of $N$, as
\begin{equation}
    \begin{aligned}
        N \left( \vecr_1 + \vecr_2 \right) &= \frac{1}{4 t^3} \sum_{k, \, l=1}^{4} \left[ M^3 \right]_{kl} \left( \vecr_l \cdot \vecr_1 + \vecr_l \cdot \vecr_2 \right) \vecr_k
        \\
        &= \frac{A^6 \left( 1+a\right)}{8 t^3}    \left(m+n - o - p\right)\left( \vecr_1 + \vecr_2 - \vecr_3 - \vecr_4 \right) 
        \\
        &= \frac{A^6 \left( 1+a\right)}{4 t^3}    \left(m+n - o - p\right)\left( \vecr_1 + \vecr_2 \right)
        \\
        &= \frac{32 A^6 l^6}{\left(1+a \right)^2 t^3} \left( \vecr_1 + \vecr_2 \right) = \frac{4}{\left(1+a \right)^2} \left( \vecr_1 + \vecr_2 \right)\,,
    \end{aligned}
\end{equation}
where in the last line we use \eqref{eqn: l squared} and \eqref{eqn: m n o p}. Similarly, $\left(\vecr_1 + \vecr_3 \right)$ and $\left( \vecr_2 + \vecr_3\right)$ are also eigenvectors of $N$, with eigenvalues $\frac{4}{\left( 1+b\right)2}$ and $\frac{4}{\left( 1+c \right)^2}$. We can conclude, then, that $N$ is positive semidefinite, so \eqref{eqn: R ellipsoid eq} defines an ellipsoid for the boundary of the space of vectors $\vec{R}$. The ellipsoid is centred at the origin, and its principal axes lie along the vectors $\left( \vecr_1 + \vecr_2\right)$, $\left(\vecr_1 + \vecr_2 \right)$ and $\left(\vecr_2 + \vecr_3 \right)$. Denoting the eigenvalues as
\begin{equation}
    \lambda_a = \frac{4}{\left( 1+ a\right)^2}\,, \qquad \lambda_b = \frac{4}{\left( 1+b\right)^2}\,, \qquad \lambda_c = \frac{4}{\left(1+c \right)^2}\,,
\end{equation}
the length of each semi-axis is given by $\frac{1}{\sqrt{\lambda_j}}$, for $j \in \{a, \, b, \, c\}$. The volume of the ellipsoid is then 
\begin{equation}\label{eqn: vol after}
    \textnormal{Vol} = \frac{4 \pi}{3} \frac{1}{\sqrt{\lambda_a \lambda_b \lambda_c }} = \frac{\pi}{6} \left( 1+a\right) \left( 1+b\right) \left( 1+c\right) = \frac{\pi}{6} t\,.
\end{equation}
Finally, the states that live inside the polytope $\mathcal{P}= \conv \mleft( \{\vecr_j\}\mright)$ are mapped by $\mathcal{O} \mleft( \cdot \mright)$ to another, smaller polytope, $\mathcal{S}=\conv \mleft( \{\vec{R}_j\}\mright)$, where 
\begin{equation}
    \begin{aligned}
        \vec{R}_1 &= \frac{1}{4} \left( \vecr_1 + a \vecr_2 + b \vecr_3 + c \vecr_4 \right)\,,
        \\
        \vec{R}_2 &= \frac{1}{4} \left( a \vecr_1 + \vecr_2 + c \vecr_3 + b \vecr_4 \right)\,,
        \\
        \vec{R}_3 &= \frac{1}{4} \left( b \vecr_1 + c \vecr_2 + \vecr_3 + a \vecr_4\right)\,,
        \\
      \vec{R}_4 &=  \frac{1}{4} \left( c \vecr_1 + b \vecr_2 + a \vecr_3 + \vecr_4\right)\,.
    \end{aligned}
\end{equation}
The ellipsoid of states $\vec{R}$, which encloses the tetrahedron $\mathcal{S}$, is shown in Figure \ref{fig: POVMs after} for two different unbiased MICs. Note that, as discussed in \cite{Appleby2011-APPPOQ}, the space of states $\{\vec{R}_j\}$ for the SIC POVM is the in-sphere of the tetrahedron $\mathcal{P}= \conv \mleft( \{\vecr_j\} \mright)$.

\subsection{Maximal randomness}
Given an unbiased extremal measurement $\m$ in dimension two with $n$ outcomes, represented by the measurement vectors $\{\vecr_j\}$, the intrinsic randomness generated by a pure state $\vecr$ is given simply by the probability of obtaining the most likely outcome,
\begin{equation}
    \pg \mleft( \vecr, \, \m \mright) = \max_{j} \frac{1}{n} \left( 1 + \vecr \cdot \vecr_j\right)\,.
\end{equation}
We can then find the \emph{maximal} intrinsic randomness of $\m$ by minimising the above over all states $\vecr$, 
\begin{equation}
    \pgs \mleft( \m\mright) = \min_{\vecr} \; \max_{j} \frac{1}{n} \left( 1 + \vecr \cdot \vecr_j\right)\,.
\end{equation}
One approach to tackle this min-max problem is, for each $j \in \{1, \, \hdots, \, n\}$, to solve the minimisation in the region of the Bloch sphere for which
\begin{equation}
    \vecr \cdot \vecr_j \geq \vecr \cdot \vecr_i \qquad \textnormal{for all} \;\; i  \in \{1, \, \hdots, \, n\}\setminus j\,.
\end{equation}
Then for a fixed $j$, following the notation of \eqref{eqn: opt stand}, the optimisation problem  is
\begin{equation}
    \begin{aligned}
         \underset{\vecr}{\text{minimize}}  \, \qquad & f_0 \mleft( \vecr \mright) && {}
         \\
          \text{subject to} \qquad & g \mleft( \vecr \mright) = 0 \qquad &&  
        \\
          \qquad & h_{k} \mleft( \vecr \mright) \leq 0 \qquad &&  \textnormal{for all} \;\;  k \in \{1, \, \hdots, \, n\}\setminus j\,,
         \end{aligned}
\end{equation}
where
\begin{equation}
    f_0 \mleft( \vecr\mright) = \frac{1}{2} \left( 1 + \vecr \cdot \vecr_j\right)\,, \qquad g \mleft(\vecr \mright) = 1 - \abs{\vecr}^2\,, \qquad h_k \mleft(\vecr \mright) = \vecr \cdot \left( \vecr_k - \vecr_j\right)\,.
\end{equation}
Since the equality condition $g \mleft( \vecr\mright)$ is quadratic, and therefore not affine, the KKT conditions are necessary but they mightn't be sufficient. Introducing the dual variables $\nu$ and $\{\lambda_k\}$ for $k \in \{1, \, \hdots, \, n\}\setminus j$, such that 
\begin{equation}
    \lambda_k \geq 0 \qquad \textnormal{for all} \;\; k \in \{1, \, \hdots, \, n\}\setminus j\,,
\end{equation}
the complementary slackness and stationarity conditions \eqref{eqn: KKT comp slack} and \eqref{eqn: KKT stat} become
\begin{equation}\label{eqn: rand comp}
    \lambda_{k} h_{k} \mleft( \vecr \mright) = 0 \qquad \textnormal{for all} \;\; k \in \{1, \, \hdots, \, n\}\setminus j
\end{equation}
and
\begin{equation}\label{eqn: rand stat}
    \frac{1}{2} \vecr_j -2 \nu \vecr + \underset{k \neq j}{\sum_{k=1}^{n}} \lambda_{k} \left( \vecr_k - \vecr_j\right)=0\,.
\end{equation}
Multiplying every term by $\vecr$ and applying \eqref{eqn: rand comp}, we find $\nu=\frac{1}{4} \vecr \cdot \vecr_j$, so \eqref{eqn: rand stat} can be simplified to 
\begin{equation}\label{eqn: rand stat 2}
    \frac{1}{2} \vecr_j - \frac{1}{2} \left( \vecr \cdot \vecr_j \right) \vecr + \underset{k \neq j}{\sum_{k=1}^{n}} \lambda_{k} \left( \vecr_k - \vecr_j\right)=0\,.
\end{equation}

\subsubsection{Two outcomes}\label{ssec: two out max}
The case where $n=2$ is trivial, as we know that all states that are unbiased with respect to the measurement basis produce maximal randomness, so $\pgs \mleft(  \m \mright)= \frac{1}{2}$. 

\subsubsection{Three outcomes}\label{ssec: trine max}
The trine measurement is the only unbiased extremal rank-one measurement with three outcomes in dimension two. The two states that are unbiased to the plane of the measurement vectors produce maximal randomness, so $\pgs \mleft( \m \mright)= \frac{1}{3}$.

\subsubsection{Four outcomes}\label{ssec: qubit MIC max}
We will find the optimal state, and the maximal randomness, for MICs in dimension two by solving the KKT conditions \eqref{eqn: rand comp} and \eqref{eqn: rand stat} for four different cases of the dual variables $\{\lambda_j\}$.

\paragraph{All $\lambda_j$ are non-zero.} If all variables $\lambda_j$ are non-zero then, from \eqref{eqn: rand comp}, we must have $\vecr \cdot \vecr_1= \vecr \cdot \vecr_j$ for $j \in \{2, \,  3, \, 4\}$. Since $\sum_{j=1}^{4} \vecr_j =0$, this implies that $\vecr \cdot \vecr_j=0$ for all $j \in \{1, \, \hdots, \, 4\}$. This is impossible, however, as the vectors $\{\vecr_j\}$ cannot live on a common plane. 

\paragraph{Two $\lambda_j$ are non-zero.} Let's arbitrarily choose $\lambda_4=0$ such that, from \eqref{eqn: rand comp}, we have $\vecr \cdot \vecr_1 = \vecr \cdot \vecr_2= \vecr \cdot \vecr_3$. Then the condition \eqref{eqn: rand stat 2} becomes 
\begin{equation}
    \left( \frac{1}{2} - \lambda_2 - \lambda_3 \right) \vecr_1 + \lambda_2 \vecr_2 + \lambda_3 \vecr_3 - \frac{1}{2} \left( \vecr \cdot \vecr_1\right) \vecr =0\,.
\end{equation}
We can then write $\vecr$ as 
\begin{equation}\label{eqn: step 1}
    \frac{1}{2} \left( \vecr \cdot \vecr_1\right) \vecr = \left( \frac{1}{2} - \lambda_2 - \lambda_3\right) \vecr_1 + \lambda_2 \vecr_2 + \lambda_3 \vecr_3\,. 
\end{equation}
There are only two pure states $\vecr$ that satisfy $\vecr \cdot \vecr_1 = \vecr \cdot \vecr_2= \vecr \cdot \vecr_3$, as these conditions define a line that runs through the Bloch sphere, intersecting the origin and piercing the surface on both sides. The states are $\pm \vec{n}_{\left( 123\right)}$, from \eqref{eqn: normal vecs} or \eqref{eqn: l n bary}.  We can discard $- \vec{n}_{\left( 123\right)}$, since we need that $\vecr \cdot \vecr_1$ is positive, so we set $\vecr= \vec{n}_{\left( 123\right)}$ such that 
\begin{equation}
    \vecr \cdot \vecr_1 = \vecr \cdot \vecr_2 = \vecr \cdot \vecr_3 = l\,.
\end{equation}
Comparing \eqref{eqn: step 1} to \eqref{eqn: l n bary}, we define
\begin{equation}
    \lambda_2 = \frac{1-b^2}{4 A^2} >0 \,, \qquad \lambda_3 = \frac{1-a^2 }{4A^2} > 0\,.
\end{equation}
Noting that 
\begin{equation}
    \frac{1}{2} - \lambda_2 - \lambda_3 = \frac{2 A^2 - 2 + b^2 +c^2}{2 A^2} = \frac{1-c^2}{2A^2}\,,
\end{equation}
we conclude that $\vecr =\vec{n}_{\left( 123 \right)}$ satisfies the KKT conditions. This state gives a guessing probability of 
\begin{equation}
    \pg \left( \vecr, \, \m \right) = \frac{1+l}{4}\,.
\end{equation}
By choosing $\lambda_2$ or $\lambda_3$ instead to be zero, we find that $\vec{n}_{\left( 124\right)}$ and $\vec{n}_{\left( 134\right)}$ also satisfy the KKT conditions, and give the same guessing probability.

\paragraph{One $\lambda_j$ is non-zero.} We choose $\lambda_2 \neq 0$ such that $\vecr \cdot \vecr_1 = \vecr \cdot \vecr_2$. The condition \eqref{eqn: rand stat 2} becomes
\begin{equation}
    \left( \frac{1}{2} - \lambda_2\right) \vecr_1 + \lambda_2 \vecr_2 - \frac{1}{2} \left(\vecr \cdot \vecr_1 \right) \vecr =0 \,.
\end{equation} 
Multiplying separately by $\vecr_1$ and $\vecr_2$, we have
\begin{equation}
    \begin{aligned}
      \frac{1}{2} - \lambda_2 + \lambda_2 a - \frac{1}{2} \left( \vecr \cdot \vecr_1\right)^2 &=0\,,
      \\[5pt]
      \left( \frac{1}{2} - \lambda_2\right) a + \lambda_2 - \frac{1}{2}\left( \vecr \cdot \vecr_1\right)^2 &=0\,.
    \end{aligned}
\end{equation}
The unique solution is
\begin{equation}
    \lambda_2 = \frac{1}{4}\,, \qquad \left( \vecr \cdot \vecr_1\right)^2 = \frac{1+a}{2}\,.
\end{equation}
The condition \eqref{eqn: rand stat 2} then gives us the following expression for $\vecr$
\begin{equation}
    \vecr = \frac{\vecr_1 + \vecr_2}{\sqrt{ 2 \left(1 + a \right) }}\,,
\end{equation}
which satisfies all the necessary conditions. It gives a guessing probability of 
\begin{equation}
    \pg \mleft( \vecr, \, \m \mright) = \frac{1}{4} \left(  1 + \sqrt{ \frac{1 + a}{2} } \right)\,.
\end{equation}
Had we instead chosen $\lambda_3$ or $\lambda_4$ to be nonzero, we would have reached equivalent solutions with $\vecr_2$ exchanged with $\vecr_3$ or $\vecr_4$ and $a$ exchanged with $b$ or $c$.

\paragraph{All $\lambda_j$ are zero.} 
If all $\lambda_j$ are zero, the condition \eqref{eqn: rand stat} gives us $\vecr= \vecr_1$. However, this would give us a guessing probability of a half, which is the highest possible guessing probability for an unbiased qubit MIC.
\subsection*{}
We will now show that the state 
\begin{equation}
    \vecr =\frac{\vecr_1 + \vecr_2}{ \sqrt{ 2 \left( 1 + a \right) }}\,
\end{equation}
gives a higher guessing probability than the normal states $\vec{n}_{\left(123 \right) }$, $\vec{n}_{\left(124\right)}$ and $\vec{n}_{\left( 134\right)}$, by showing that 
\begin{equation}
  l <  \sqrt{\frac{1 + a}{2}}\,.  
\end{equation}
Using \eqref{eqn: t} and \eqref{eqn: l squared}, we have
\begin{equation}
\begin{aligned}
    \frac{1+a}{2} - l^2 &= \frac{\left( 1+a\right) \left( A^2 - \left( 1+b\right) \left(1+c\right) \right) }{2A^2} = \frac{\left( 1+a\right)  \left( 1-a^2\right) }{2A^2} > 0\,,
    \end{aligned}
\end{equation}
where in the second equality we use \eqref{eqn: Asquared}.
Since $l$ and $A$ are symmetric under any swapping of $a$, $b$ and $c$, we also have 
\begin{equation}
    l < \sqrt{ \frac{1+b}{2}}\,, \qquad l < \sqrt{ \frac{1+c}{2}}\,, 
\end{equation}
so the states $\vec{n}_{\left(123 \right)}$, $\vec{n}_{\left( 124\right)}$ and $\vec{n}_{\left(134 \right)}$ are the unique optimal states in the region where $\vecr \cdot \vecr_1 \geq \vecr \cdot \vecr_i$ for $i \in \{ 2, \, 3, \,  4\}$. Equivalent results holds if we choose any vector $\vecr_j$ other than $\vecr_1$ to have the largest overlap with $\vecr$. We conclude, then, that the four normal states $\vec{n}_{\left( 123\right)}$, $\vec{n}_{\left( 124\right)}$, $\vec{n}_{\left( 134\right)}$ and $\vec{n}_{\left( 234\right)}$ are the optimal states for randomness across the whole state space, and the optimal guessing probability is 
\begin{equation}
    \pgs \left( \m \right) = \frac{1}{4} \left( 1+l\right)\,.
\end{equation}

\subsection{Randomness for any state}
We will solve the optimisation problem \eqref{eqn: SDP rand} for any mixed qubit state $\vecr$ and any unbiased extremal qubit measurement.

\subsubsection{Two outcomes}\label{ssec: 2 outcomes any}
Let's fix the measurement basis as $\vecr_1 = \vec{m}, \, \vecr_2 = - \vec{m}$ for some normalised vector $\vec{m}$, and let $\vec{n}$ be a normal unit vector, such that $\vec{m} \cdot \vec{n}=0$. Without loss of generality, we can represent our qubit state as
\begin{equation}
    \vecr = m \vec{m} + p \vec{n}\,, \qquad m^2 + p^2 < 1\,.
\end{equation}
The eavesdropper could decompose this state as
\begin{equation}
    \vecr = p_1 \vec{q}_1 + p_2 \vec{q}_2\,,
\end{equation}
where 
\begin{equation}
    p_1 = \frac{1}{2} \left( 1 + \frac{m}{\sqrt{1-p^2}}\right)\,, \qquad p_2 = \frac{1}{2} \left( 1 - \frac{m}{\sqrt{1-p^2}}\right)\,
\end{equation}
and 
\begin{equation}
    \vec{q}_1 = \sqrt{1-p^2} \, \vec{m} + p \vec{n}\,, \qquad \vec{q}_2 = -\sqrt{1-p^2} \, \vec{m} + p \vec{n}\,.
\end{equation}
This gives the lower bound 
\begin{equation}
    \pg \mleft( \vecr, \, \m \mright) \geq \frac{1}{2} \left( 1 + \sqrt{1-p^2} \right)\,.
\end{equation}
Then consider the following dual variable for \eqref{eqn: rand dual},
\begin{equation}
    X = \frac{1}{4} \frac{ \left( \sqrt{1+p} + \sqrt{1-p} \right)^2}{\sqrt{1-p^2}} \left( \id - g \vec{n} \cdot \vec{\sigma} \right)\,, \qquad g = \frac{\sqrt{1+p} - \sqrt{1-p}}{\sqrt{1+p} + \sqrt{1-p}}\,.
\end{equation}
This is a valid variable, since
\begin{equation}
    \begin{aligned}
   &X - M_1 =    \!\! &&X - \frac{1}{2} \left( \id + \vec{m} \cdot \vec{\sigma} \right) = \frac{1}{2 \sqrt{1-p^2}} \left( \id - \vec{q}_1 \cdot \vec{\sigma} \right) \sgeq 0\,,
        \\
    &X - M_2 =  \!\!     &&X - \frac{1}{2} \left( \id - \vec{m} \cdot \vec{\sigma} \right) = \frac{1}{2 \sqrt{1-p^2}} \left( \id - \vec{q}_2 \cdot \vec{\sigma} \right) \sgeq 0\,.
    \end{aligned}
\end{equation}
We find the upper bound 
\begin{equation}
    \pg \mleft( \vecr, \, \m \mright) \leq \tr \mleft( X \rho\mright) = \frac{1}{4} \frac{ \left( \sqrt{1+p} + \sqrt{1-p} \right)^2}{\sqrt{1-p^2}} \left( 1 - g \vec{n} \cdot \vecr \right) = \frac{1}{2} \left( 1 + \sqrt{1-p^2} \right)\,.
\end{equation}
Since the upper and lower bounds match, we conclude that
\begin{equation}\label{eqn: rand qubits proj}
  \pg \mleft( \vecr, \, \m \mright) =  \frac{1}{2} \left( 1 + \sqrt{1-p^2} \right)\,.
\end{equation}
When $m=0$, \eqref{eqn: rand qubits proj} returns the maximal intrinsic randomness that can be generated from the state $\vecr$ using a projective measurement \cite{Meng_2023}.

\subsubsection{Three outcomes}\label{ssec: 3 outcomes any}
Let's consider a vector $\vec{n}$ that's normal to the plane of the trine measurement vectors, and assume that
\begin{equation}
    \vecr \cdot \vec{n} =p\,, \qquad 0 \leq p \leq 1\,.
\end{equation}
Note that the trine vectors $\{\vecr_j\}$ satisfy
\begin{equation}
    \vecr_1 \cdot \vecr_2 = \vecr_1 \cdot \vecr_3 = \vecr_2 \cdot \vecr_3 = - \frac{1}{2}\,. 
\end{equation}
We will construct a three-dimensional shape $\Delta$ in the Bloch ball by contracting and translating the vectors $\{\vecr_j\}$ so they lie a distance $p$ along the vector $\vec{n}$, and still touch the inside of the Bloch sphere. The new vectors will be
\begin{equation}
    \vec{s}_j = p \vec{n} + \sqrt{1-p^2} \vec{r}_j \qquad \textnormal{for all} \;\; j \in \{1, \, 2, \, 3\}\,.
\end{equation}
We can then define the shape $\Delta$ as the set
\begin{equation}
    \Delta = \conv \mleft( \{\vec{s}_j\}   \mright) \quad \text{for all } \quad 0 \leq p \leq 1\,. 
\end{equation}
We will now solve \eqref{eqn: SDP rand} for two different cases (cf. \cite{Avesani_2022}, and Figure 1 therein, for results that apply, in our randomness scenario, to states $\vecr$ that live in the plane spanned by the trine measurement vectors).

\paragraph{States inside $\Delta$.} If the state $\vecr$ lives in $\Delta$ then by definition it can be written as 
\begin{equation}
    \vecr = \sum_{j=1}^{3} p_j \vec{s}_j
\end{equation}
for some probability distribution $\{p_j\}$. This gives us a lower bound for the guessing probability, 
\begin{equation}
    \pg \mleft( \vecr, \, \m \mright) \leq \frac{1}{3} \sum_{j=1}^{3} p_j \left( 1 + \vecr_j \cdot \vec{s}_j \right) = \frac{1}{3} \left( 1 + \sqrt{1-p^2} \right)\,.
\end{equation}
Now consider the dual variable
\begin{equation}
    X = \frac{1}{3} \left(1 + \frac{1}{\sqrt{1-p^2}} \right) \left( \id - \frac{p}{1 + \sqrt{1-p^2}}  \,\vec{n} \cdot \vec{\sigma} \right)\,.
\end{equation}
This variable satisfies all the constraints of \eqref{eqn: rand dual}, because 
\begin{equation}
    X - M_j = X - \frac{1}{3} \left( \id + \vecr_j \cdot \vec{\sigma} \right) = \frac{1}{3 \sqrt{1-p^2}} \left( \id - \vec{s} _j \cdot \vec{\sigma}\right) \sgeq 0 \qquad \textnormal{for all} \;\; j \in \{1, \, 2, \, 3\}\,.
\end{equation}
We find the upper bound
\begin{equation}
    \pg \mleft( \vecr, \, \m \mright) \leq \tr \mleft( X \rho \mright) = \frac{1}{3} \left(1 + \frac{1}{\sqrt{1-p^2}} \right) \left( 1 - \frac{p}{1 + \sqrt{1-p^2}} \, \vec{n} \cdot \vecr \right) = \frac{1}{3} \left( 1 + \sqrt{1-p^2}\right)\,.
\end{equation}
Since the upper and lower bounds match, we conclude that 
\begin{equation}
    \pg \mleft( \vecr, \, \m \mright) = \frac{1}{3} \left( 1 + \sqrt{1-p^2}\right)
\end{equation}
for all $\vecr \in \Delta$.

\paragraph{States outside $\Delta$.}
It will be helpful to introduce the following three unit vectors, 
\begin{equation}
    \vec{\omega}_{12 } = \vecr_1+ \vecr_2\,, \qquad \vec{\omega}_{13} = \vecr_1 + \vecr_3\,,
    \qquad 
    \vec{\omega}_{23 } = \vecr_2 + \vecr_3\,,
\end{equation}
with the shorthand $\{\vec{\omega}\} = \{\vec{\omega}_{12}, \, \vec{\omega}_{13}, \, \vec{\omega}_{23} \}$.
Note that $\vec{\omega}_{12}$, $\vec{\omega}_{13}$ and $\vec{\omega}_{23}$ are perpendicular to $\left(\vec{s}_2 - \vec{s}_1\right)$, $\left(\vec{s}_1- \vec{s}_3\right)$ and $\left(\vec{s}_3 - \vec{s}_2\right)$ respectively, and that all $\vec{w} \in \{\vec{\omega}\}$ are perpendicular to $\vec{n}$. Since all $\vec{\omega} \in \{\vec{\omega}\}$ point out of the Bloch sphere, any state that satisfies $\vecr \cdot \vec{n} \geq 0$ and does not live inside $\Delta$ must fulfil exactly one of the following inequalities,
\begin{align}
    \vecr \cdot \vec{\omega}_{12} & > \vec{s}_1 \cdot  \vec{\omega}_{12 } = \vec{s}_2 \cdot  \vec{\omega}_{12}\,, \label{eqn: qq eq}
    \\
    \vecr \cdot \vec{\omega}_{13} & > \vec{s}_1 \cdot  \vec{\omega}_{13} = \vec{s}_3 \cdot  \vec{\omega}_{13}\,,
    \\
    \vecr \cdot \vec{\omega}_{23} & > \vec{s}_2 \cdot  \vec{\omega}_{23 } = \vec{s}_3 \cdot  \vec{\omega}_{23}\,.
\end{align}
Without loss of generality, we will work in the region where the condition \eqref{eqn: qq eq} holds, so we define
\begin{equation}\label{eqn: q trine}
    q = \vecr \cdot \vec{\omega}_{12 }> \vec{s}_1 \cdot  \vec{\omega}_{12 } = \frac{1}{2}\sqrt{1-p^2}\,.
\end{equation}
We are now ready to give the optimal decomposition for $\vecr$. We will contract and translate the line $\conv \mleft( \vec{s}_1, \, \vec{s}_2 \mright)$ so that the new line remains at a distance $p$ along $\vec{n}$, but is now also at a distance $q$ along $\vec{\omega}_{12}$, with both vertices still touching the Bloch sphere. The new vertices are 
\begin{equation}\label{eqn: t vertices}
    \vec{t}_j  = p \vec{n} + \left(q-\frac{f}{2} \right) \vec{w}_{12 } + f \vec{r_j} \qquad \textnormal{for all} \;\; j \in \{1,\, 2\}\,,
\end{equation}
where 
\begin{equation}
    f = 2 \sqrt{ \frac{ 1 - p^2 -q^2  }{ 3 } }\,.
\end{equation}
Note that in the limit where $q \rightarrow \frac{1}{2}\sqrt{1-p^2}$, each $\vec{t}_j$ reduces to $\vec{s}_j$. We can decompose $\vecr$ into $\{\vec{t}_j\}$ according to some probability distribution. This gives us the lower bound
\begin{equation}
    \pg \mleft( \vecr, \, \m \mright) \geq \frac{1}{3}\sum_{j=1}^{2} p_j \left( 1 + \vecr_j \cdot \vec{t}_j \right) = \frac{1}{3} \left( 1 + \frac{1}{2} \left( q + \sqrt{3 \left( 1 - p^2 - q^2 \right)}\right) \right)\,.
\end{equation}
Now consider the dual variable 
\begin{equation}
    X = \frac{1}{3} \left( 1 + \frac{1}{f} \right) \left( \id - \frac{1}{1+f} \vec{v} \cdot \vec{\sigma} \right)\,, \qquad \vec{v} = p \vec{n} + \left( q - \frac{f}{2}\right) \vec{\omega}_{12}\,.
\end{equation}
It satisfies
\begin{equation}
    X - M_j = X - \frac{1}{3} \left( \id + \vecr_j \cdot \vec{\sigma} \right) = \frac{1}{3f} \left( \id - \vec{t}_j \cdot \vec{\sigma} \right) \sgeq 0 \qquad \textnormal{for all} \;\; j \in \{1, \, 2\}\,.
\end{equation}
We also have
\begin{equation}
    X - M_3 = \frac{1}{3f} \left( \id - \left( \vec{v} + f \vecr_3 \right) \cdot \vec{\sigma} \right) = \frac{1}{3f} \left( \id - \left( \vec{v} - f \vec{\omega}_{12}\right) \right) \sgeq 0\,,
\end{equation}
where we use $\vecr_3 = - \vec{\omega}_{12}$, and where positive semidefiniteness holds because, using \eqref{eqn: q trine}, we have
\begin{equation}
    \abs{\vec{v} - f \vec{\omega}_{12}}^2 = 1 - 3f \left( q - \frac{f}{2}\right) = 1 - \frac{f \left( 2 q + \sqrt{1-p^2}\right)}{q + \frac{f}{2}} \left( 2q - \sqrt{1-p^2}\right) < 1\,.
\end{equation}
All the constraints of \eqref{eqn: rand dual} are then satisfied, so we can upper bound the guessing probability by
\begin{equation}
    \pg \mleft( \vecr, \, \m\mright) \leq \tr \mleft( X \rho \mright) =\frac{1}{3} \left( 1 + \frac{1}{f}\right) \left( 1 - \frac{1}{1+f} \vec{v} \cdot \vecr \right) = \frac{1}{3} \left( 1 + \frac{1}{2}\left( q + \sqrt{3\left( 1 - p^2 - q^2\right)  }\right)  \right)\,.
\end{equation}
Since the upper and lower bounds match, we conclude that 
\begin{equation}
    \pg \mleft( \vecr, \, \m \mright) = \frac{1}{3} \left( 1 + \frac{1}{2}\left( q + \sqrt{3\left( 1 - p^2 - q^2\right)  }\right)  \right)
\end{equation}
for states $\vecr$ outside $\Delta$ such that $\vecr \cdot \vec{\omega}_{12} > \vecr \cdot \vec{\omega}$ for all $\vec{\omega} \in \{\vec{\omega}\}$. Equivalent solutions can be found in the other two cases by swapping the vectors $\{\vecr_j\}$. Note that when the state $\vecr$ lies in the plane of the trine vectors, the decomposition \eqref{eqn: t vertices} reduces to that in \cite[Appendix A]{Avesani_2022}.

\subsubsection{Four outcomes}\label{ssec: 4 outcomes any}
We choose to work, without loss of generality, in the region of the Bloch sphere defined by
\begin{equation}\label{eqn: biggest n}
    \vecr \cdot \vec{n}_{\left(123 \right)} \geq \vecr \cdot \vec{n} \qquad  \textnormal{for all} \;\; \vec{n} \in \{ \vec{n}\}\,.
\end{equation}
Recall the definition of $\vec{n}_{\left( 123\right)}$ in \eqref{eqn: normal vecs} and \eqref{eqn: l n bary}. We are most interested in the part of this region that is outside the polytope $\mathcal{P}$. Note that, since the normal vectors intersect the faces of the polytope at a distance $l$ from the origin, any state that lives outside the polytope must satisfy $\vecr \cdot \vec{n} > l$ for at least one $\vec{n} \in \{ \vec{n}\}$.  For any state that lives outside $\mathcal{P}$ and satisfies \eqref{eqn: biggest n}, then, we have
\begin{equation}
    p > l\,, \qquad p = \vecr \cdot \vec{n}_{\left( 123 \right)}\,.
\end{equation}
In the region where $p \geq l$ and where \eqref{eqn: biggest n} holds, we can contract and translate the face $\left( 123\right)$ such that it lies in a plane normal to $\vec{n}_{\left(123 \right)}$ at a distance $p$ from the origin, with all three vertices still touching the surface of the Bloch sphere. For a given $p$, the transformed vectors are
\begin{equation}
    \vec{s}_j = \big( p -kl \big)\vec{n}_{\left( 123 \right)} + k \vecr_j\,, \qquad k = \sqrt{\frac{1-p^2}{1-l^2} }\, \qquad \textnormal{for all} \;\; j \in \left\{  1, 2, 3\right\}\,.
\end{equation}
We have $k \leq 1$ because $p \geq l$, so $p-kl \geq 0$.
We can check that the vectors are pure, 
\begin{equation}
    \abs{ s_j}^2 = \left( p-kl\right)^2 + 2k \left( p-kl\right) \vec{n}_{\left( 123 \right)} \cdot \vecr_j + k^2 = p^2 + \left( 1-l^2\right) k^2 =1 \qquad \textnormal{for all} \;\; j \in \{1, \, 2, \, 3\}\,,
\end{equation}
and we know that they live in a plane normal to $\vec{n}_{\left(123 \right)}$ because 
\begin{equation}
  \vec{n}_{\left( 123\right)} \cdot \vec{s}_j = p \qquad \text{for all} \;\; j \in \{1, \, 2, \, 3 \}\,.  
\end{equation} 
We can construct a shape, which we call $\Delta_{\left( 123 \right)}$, defined by 
\begin{equation}
    \Delta_{\left( 123\right)} = \conv \mleft( \{\vec{s}_j\}   \mright) \quad \text{for all } \quad l \leq p \leq 1\,. 
\end{equation}
We will solve the intrinsic randomness of any state $\vecr$ satisfying \eqref{eqn: biggest n} in two parts. 

\paragraph{States inside $\Delta_{\left(123 \right)}$.} If a state $\vecr$ lives inside $\Delta_{\left( 123\right)}$ then by definition it can written as a convex combination of the vectors $\{\vec{s}_j\}$, i.e.
\begin{equation}
    \vecr = \sum_{j=1}^{3} p_j \vec{s}_j
\end{equation}
for some probability distribution $\{p_j\}$. Then we can lower bound the guessing probability by
\begin{align}
 \pg \mleft( \vecr, \, \m \mright) \leq \frac{1}{4} \sum_{j=1}^{3} p_j \left( 1 + \vec{s}_j \cdot \vecr_j \right) =  \frac{1}{4} \left( 1 + pl + \sqrt{1-p^2} \sqrt{1-l^2}  \right)\,.
\end{align}
We can bound the guessing probability in the opposite direction using the dual variable
\begin{equation}
    X = \frac{1}{4}\left( 1 + \frac{1}{k} \right) \Big( \id - \frac{p-kl}{1+k} \, \vec{n}_{\left( 123 \right)} \cdot \vec{\sigma} \Big)\,.
\end{equation}
From \eqref{eqn: rand dual}, the constraints on $X$ are 
\begin{equation}
    X - M_j = \frac{1}{4k} \Big( \id -  \left( \big( p-kl\big)\vec{n}_{\left(123 \right)}+k\vecr_j \right) \cdot \vec{\sigma} \Big) \sgeq 0 \qquad \text{for all} \;\; j \in \{1, \, \hdots ,\, 4\}\,.
\end{equation}
For $j \in \{1, \, 2, \, 3\}$, we have
 \begin{equation}
      X - M_j = \frac{1}{4k} \Big( \id -  \vec{s}_j \cdot \vec{\sigma} \Big) \sgeq 0\,
 \end{equation}
 so we just need to show that 
\begin{equation}
    \abs{ \big( p-kl\big) \vec{n}_{\left( 123\right)} +k \vecr_4 }^2 \leq 1\,.
\end{equation}
Noting that $\vec{n}_{\left(123 \right)} \cdot \vecr_4 = -3l$, we find
\begin{equation}
 \abs{ \big( p-kl\big) \vec{n} +k \vecr_4 }^2 = \abs{ \big( p-kl\big) \vec{n} +k \vecr_1 }^2 - 8 kl \left( p-kl\right) = 1    - 8 kl \left( p-kl\right) \leq 1\,.
\end{equation}
The variable $X$ is then valid, so 
\begin{equation}
\begin{aligned}
    \pg \mleft( \vecr, \, \m \mright) & \leq \tr \mleft( X \rho \mright) = \frac{1}{4}\left( 1 + \frac{1}{k} \right) \left( 1 - \frac{p-kl}{1+k} \vec{n}_{\left( 123 \right)} \cdot \vecr \right)
    \\[3pt]
    &= \frac{1}{4} \left( 1 + pl + \sqrt{1-p^2} \sqrt{1-l^2} \right)\,. 
\end{aligned}
\end{equation}
Since the upper and lower bounds match, we have proven Theorem \ref{thm: inside shape} from the main text.

\paragraph{States outside $\Delta_{\left(123 \right)}$.} It will be helpful to introduce three more unit vectors, 
\begin{equation}
    \vec{\omega}_{12 | 3} = \frac{\left( \vecr_2 -  \vecr_1\right) \times \vec{n}_{\left( 123\right)}}{\abs{\vecr_2 - \vecr_1}}\,, \qquad \vec{\omega}_{13 | 2} = \frac{\left( \vecr_1 -  \vecr_3\right) \times \vec{n}_{\left( 123\right)}}{\abs{\vecr_1 - \vecr_3}}\,,
    \qquad 
    \vec{\omega}_{23 | 1} = \frac{\left( \vecr_3 -  \vecr_2\right) \times \vec{n}_{\left( 123\right)}}{\abs{\vecr_3 - \vecr_2}}\,,
\end{equation}
with the shorthand $\{\vec{\omega}\}= \{\vec{\omega}_{12 | 3}, \, \vec{\omega}_{13 | 2}, \, \vec{\omega}_{23 | 1}  \}$.
Note that $\vec{\omega}_{12 | 3}$, $\vec{\omega}_{13 | 2}$ and $\vec{\omega}_{23 | 1}$ are perpendicular to $\left(\vec{s}_2 - \vec{s}_1\right)$, $\left(\vec{s}_1- \vec{s}_3\right)$ and $\left(\vec{s}_3 - \vec{s}_2\right)$ respectively, and that all $\vec{w} \in \{\vec{\omega}\}$ are perpendicular to $\vec{n}_{\left( 123\right)}$. Since all of these unit vectors point out of the Bloch sphere, any state that satisfies \eqref{eqn: biggest n} and does not live inside $\mathcal{P}$ or $\Delta_{\left(123 \right)}$ must fulfil exactly one of the following inequalities,
\begin{align}
    \vecr \cdot \vec{\omega}_{12 | 3} & > \vec{s}_1 \cdot  \vec{\omega}_{12 | 3} = \vec{s}_2 \cdot  \vec{\omega}_{12 | 3}\,, \label{eqn: q eq}
    \\
    \vecr \cdot \vec{\omega}_{13 | 2} & > \vec{s}_1 \cdot  \vec{\omega}_{13 | 2} = \vec{s}_3 \cdot  \vec{\omega}_{13 | 2}\,,
    \\
    \vecr \cdot \vec{\omega}_{23 | 1} & > \vec{s}_2 \cdot  \vec{\omega}_{23 | 1} = \vec{s}_3 \cdot  \vec{\omega}_{23 | 1}\,.
\end{align}
Note that no state can satisfy more than one of these conditions, since the vertices $\{\vec{s}_j\}$ live on the surface of the Bloch sphere. Without loss of generality, we will work in the region where the condition \eqref{eqn: q eq} holds, so
\begin{equation}
    q > km\,, \qquad q = \vecr \cdot \vec{\omega}_{12 | 3}\,, \qquad m=\vecr_1 \cdot  \vec{\omega}_{12 | 3} = \vecr_2 \cdot  \vec{\omega}_{12 | 3}\,.
\end{equation}
Using \eqref{eqn: normal vecs} and the cross product identity 
\begin{equation}
    \vec{a} \times \left(\vec{b} \times \vec{c }\right) = \vec{b} \left(\vec{a} \cdot \vec{c} \, \right) - \vec{c} \left(\vec{a} \cdot \vec{b} \right)\,,
\end{equation}
we can write $\vec{\omega}_{12 | 3}$ as
\begin{equation}\label{eqn: omega bary}
\begin{aligned}
    \vec{\omega}_{12 | 3} &= \frac{1}{2 A \sqrt{2 \left( 1-a\right)}}\Big( \left( \vecr_2 - \vecr_1 \right) \times \left( \left( \vecr_1 - \vecr_3 \right) \times \left( \vecr_2 - \vecr_3 \right)    \right)   \Big)
    \\
    &= \frac{1}{A \sqrt{2 \left(1 - a \right)}} \Big( \left( 1 + b  \right) \vecr_1 + \left( 1 + c\right) \vecr_2 - \left( 1-a \right) \vecr_3 \Big)\,.
\end{aligned}    
\end{equation}
We can then define
\begin{equation}\label{eqn: m def}
    m = \vecr_1 \cdot \vec{\omega}_{12 | 3} = \vecr_2 \cdot \vec{\omega}_{12 | 3} = \frac{1-a^2}{A \sqrt{2  \left( 1 - a\right) }}\,,
\end{equation}
such that, from \eqref{eqn: l squared ito a}, we have
\begin{equation}\label{eqn: m squared}
  m^2 =  \frac{1-a^2}{A^2} \frac{1+a}{2}  =  \frac{1 + a}{2} -  l^2\,.
\end{equation}
Comparing \eqref{eqn: m def} to the formulation \eqref{eqn: l n bary} for $\vec{n}_{\left( 123 \right)}$, we find the useful formula 
\begin{equation}\label{eqn: r1 and r2}
    \frac{1}{2} \Big( \vecr_1 + \vecr_2\Big) = l \vec{n}_{\left( 123\right)} + m \vec{w}_{12 |3}\,,
\end{equation}
where we use \eqref{eqn: Asquared}. Let's briefly return to the condition \eqref{eqn: biggest n} that we imposed at the start of this section. Expanding $\vecr$ in terms of $\vec{n}_{\left( 123\right)}$ and $\vec{\omega}_{12 | 3}$, it implies, in particular, that
\begin{equation}\label{eqn: ns ineq}
 p =   \vec{n}_{\left( 123\right)} \cdot \vecr  \geq \vec{n}_{\left(124 \right)} \cdot \vecr = p \vec{n}_{\left( 123\right)} \cdot \vec{n}_{\left( 124\right)} + q \vec{\omega}_{12 |3} \cdot \vec{n}_{\left(124 \right)} \,.
\end{equation}
Using \eqref{eqn: r1 and r2} and \eqref{eqn: nprod 2}, we have 
\begin{equation}\label{eqn: next omega}
\vec{\omega}_{12 |3} \cdot \vec{n}_{\left(124 \right)} = \frac{1}{m} \left( l - l \vec{n}_{\left( 123\right)} \cdot \vec{n}_{\left( 124\right)}\right) = \frac{2l}{m} \left( 1 - \frac{2l^2}{1+a}\right)\,.  
\end{equation}
From \eqref{eqn: ns ineq}, using \eqref{eqn: next omega} and \eqref{eqn: nprod 2} again, we find
\begin{equation}
   \left( pm - lq\right) \left(1 + a - 2l^2 \right) \geq 0\,.
\end{equation}
Finally, using \eqref{eqn: m squared}, we conclude that
\begin{equation}\label{eqn: pm lq}
    pm \geq lq\,.
\end{equation}
We are now ready to define the optimal decomposition for $\vecr$. We will contract and translate the line $\conv \mleft( \vec{s}_1, \, \vec{s}_2 \mright)$ so that the new line remains at a distance $p$ along $\vec{n}_{\left( 123 \right)}$ but is now also at a distance $q$ along $\vec{\omega}_{\left( 123\right)}$, with both vertices still touching the Bloch sphere. The new vertices are 
\begin{equation}
    \vec{t}_j  = (p-fl) \vec{n}_{\left(123 \right)} + (q-fm) \vec{w}_{12 | 3} + f \vec{r_j} \qquad \textnormal{for all} \;\; j \in \{1, \, 2\} \,,
\end{equation}
where 
\begin{equation}
    f = \sqrt{ \frac{ 1 - p^2 -q^2  }{ 1 - l^2 -m^2 } }\,.
\end{equation}
We can check that this is correct by noting that $\abs{\vec{t}_j}^2=1$ and
\begin{equation}
    \vec{t}_j \cdot \vec{n}_{\left(123 \right)} = p\,, \qquad \vec{t}_j \cdot \vec{w}_{12 | 3}= q \qquad \text{for all} \;\; j \in \{1, \, 2\}\,.
\end{equation}
Using $q \geq km$, we find
\begin{equation}
    f^2 = \frac{1-p^2-q^2}{1-l^2-m^2} \leq \frac{1-p^2}{1-l^2} = k^2 < 1\,, 
\end{equation}
so $f \leq k < 1$, such that $p -fl$ and $q-fm$ are both non-negative. We can decompose $\vecr$ into $\{\vec{t}_j\}$, choosing the appropriate probability distribution $\{p_j\}$. This gives a guessing probability of 
\begin{equation}
\begin{aligned}
     \pg \mleft( \vecr, \, \m \mright)& \leq \frac{1}{4} \sum_{j=1}^{2} p_j \left( 1 + \vec{t}_j \cdot \vecr_j \right) = \frac{1}{4} \left(1 + pl + qm + \sqrt{1-p^2-q^2} \sqrt{1-l^2 -m^2}  \right)\,.
     \end{aligned}
\end{equation}
The corresponding dual variable is
\begin{equation}\label{eqn: X dual qub}
    X = \frac{1}{4}\left( 1 + \frac{1}{f} \right) \Big( \id - \frac{1}{1+f} \, \vec{v} \cdot \vec{\sigma} \Big)\,, \qquad \vec{v} = (p-fl) \vec{n}_{\left( 123 \right)} + (q-fm) \vec{w}_{12 | 3}\,.
\end{equation}
To prove that it's valid, we need to show that
\begin{equation}
    X - M_j = \frac{1}{4f} \bigg( \id - \big( \vec{v} + f \vecr_j  \big) \cdot \vec{\sigma} \bigg) \geq 0 \qquad \text{for all} \;\; j \in \{1, \, \hdots, \, 4\}\,.
\end{equation}
Equivalently, we can show that
\begin{equation}
    \abs{\vec{v} + f \vecr_j}^2   \leq 1 \qquad \text{for all} \;\; j \in \{1, \hdots, \, 4\}\,,
\end{equation}
where
\begin{equation}
    \abs{\vec{v} + f \vecr_j}^2 =  \big( p-fl \big)^2 + \big( q-fm\big)^2 +f^2 + 2f \left( \big( p-fl \big)\vec{n}_{\left( 123\right)} +  \big( q-fm \big) \vec{w}_{12 | 3} \right) \cdot \vecr_j \,.
\end{equation}
When $j \in \{1, \, 2\}$, we have 
\begin{equation}
    \abs{\vec{v} + f \vecr_j}^2 = \abs{\vec{t}_j}^2 =1\,.
\end{equation}
This means for the remaining two POVM elements, we only need to show that 
\begin{equation}
    \abs{\vec{v} + f \vecr_j }^2 = 1 - 2f \left( \big( p-fl \big)\vec{n}_{\left( 123\right)} +  \big( q-fm \big) \vec{w}_{12 | 3} \right) \cdot \left( \vecr_1 - \vecr_j \right) \leq 1\,,
\end{equation}
or
\begin{equation}
    \left( p - fl\right) \vec{n}_{\left( 123 \right)} \cdot \vecr_j + \left( q-fm\right) \vec{w}_{12 | 3} \cdot \vecr_j \leq \left( p - fl\right) l + \left( q-fm\right) m 
\end{equation}
for $ j \in \{3, \, 4\}$.
The case where $j=3$ is straightforward, because $\vec{n}_{\left( 123\right)} \cdot \vecr_3=l$ and $\vec{w}_{12 |3} \cdot \vecr_3$ is negative, since, from \eqref{eqn: omega bary},
\begin{equation}
  \vec{w}_{12 |3} \cdot \vecr_3 =  \frac{1}{2 A \sqrt{2 \left( 1-a\right)}} \left( b^2 + c^2 -2\right) < 0\,.
\end{equation}
When $j=4$, combining \eqref{eqn: r1 and r2} and \eqref{eqn: A^2 upside down}, we find
\begin{equation}
    \vec{\omega}_{12 | 3} = \frac{1}{m} \left( - \frac{1+a}{2} + 3 l ^2 \right) = \frac{1}{m} \left( 2 l^2- m^2\right)\,,
\end{equation}
so the condition is
\begin{equation}
    -3l\left( p - fl\right)  + \left( q-fm\right) \frac{2 l^2-m^2}{m} \leq \left( p - fl\right) l + \left( q-fm\right) m \,.
\end{equation}
Multiplying both sides by $m$ and rearranging, we have
\begin{equation}
    \left( q - fm \right) \left( l^2 - m^2 \right) \leq 2 ml \left( p - fl \right)\,.
\end{equation}
If $l \leq m $, the inequality is easily shown, since the left-hand side is negative. We focus on the case where $l \geq m $, and rearrange the inequality to get 
\begin{equation}\label{eqn: ineq final X}
    q \left( l^2 - m^2\right) \leq 2 mlp - fm \left(l^2 + m^2 \right)\,.
\end{equation}
Using \eqref{eqn: pm lq}, we can upper bound the left-hand side by 
\begin{equation}
 q \left( l^2 -m^2  \right) \leq \frac{mp}{l} \left( l^2 -m^2  \right) \leq m \left( pl - m^2 \right)\,, 
\end{equation}
while we can lower bound the right-hand side by 
\begin{equation}
   2 mlp - fm \left(l^2 + m^2 \right) \geq 2 mlp - m \left(l^2 + m^2 \right) =  mlp + ml \left( p -l\right) - m^3 \geq mlp - m^3\,,  
\end{equation}
which together prove the inequality \eqref{eqn: ineq final X}. The variable $X$ from \eqref{eqn: X dual qub} is then valid. It gives a guessing probability of 
\begin{equation}
\begin{aligned}
    \pg \mleft( \vecr, \, \m\mright) &\leq \tr \mleft( X \rho\mright) =\frac{1}{4} \left( 1 + \frac{1}{f} \right) \left(1 - \frac{1}{1+f} \vec{v} \cdot \vecr \right)
    \\[0.3pt]
    &= \frac{1}{4} \left( 1 + lp + mq + \sqrt{1-p^2 -q^2} \sqrt{1-l^2 - m^2} \right)\,.
    \end{aligned}
\end{equation}
Combining the upper and lower bounds, we have proven that  
\begin{equation}
\begin{aligned}
    \pg \mleft( \vecr, \, \m\mright) =  \frac{1}{4} \left( 1 + lp + mq + \sqrt{1-p^2 -q^2} \sqrt{1-l^2 - m^2} \right)\,
    \end{aligned}
\end{equation}
for any state satisfying \eqref{eqn: biggest n} such that $\vecr \notin \Delta_{\left( 123\right)}$.

\section{Scissors and SIC measurements}
\subsection{Qubits}\label{ssec: scissors and SIC}
Let's now return to our parameterization of unbiased qubit MIC POVMs from Appendix \ref{sssec: two angles}, with measurement vectors
\begin{equation}\label{eqn: unbiased all vecs}
    \begin{aligned}
        \vecr_1 &= \left( - \cos \delta , \, -\sin \delta, \, 0\right)
        \\
        \vecr_2 &= \left( - \cos \delta , \, \sin \delta, \, 0\right)\,,
        \\
        \vecr_3 &= \left( \cos \delta , \, - \sin \delta \sin \alpha, \, \sin \delta \cos \alpha\right)
        \\
        \vecr_4 &= \left( \cos \delta , \,  \sin \delta \sin \alpha, \, -\sin \delta \cos \alpha\right)\,.
    \end{aligned}
\end{equation}
The vectors overlaps from \eqref{eqn: a b c} are
\begin{equation}
    \begin{aligned}
        a &= \vecr_1 \cdot \vecr_2 = \cos^2 \delta - \sin^2 \delta\,,
        \\
        b &= \vecr_1 \cdot \vecr_3 = - \cos^2 \delta + \sin^2 \delta \sin \alpha\,,
        \\
        c &= \vecr_2 \cdot \vecr_3 =- \cos^2 \delta - \sin^2 \delta \sin \alpha\,.
    \end{aligned}
\end{equation}
The squared area of each of the faces of the tetrahedron, from \eqref{eqn: Asquared}, is then
\begin{equation}\label{eqn: A squared parameterized}
    A^2 = \sin^2 \delta \left( 4 \cos^2 \delta + \sin^2 \delta \cos^2 \alpha \right)\,.
\end{equation}
The perpendicular length $l$ from the origin to each of the tetrahedron faces, from \eqref{eqn: l squared}, is then given by
\begin{equation}\label{eqn: l squared parameterized}
    l^2 = \frac{\cos^2 \delta \sin^4 \delta \cos^2 \alpha}{A^2} =  \frac{\cos^2 \delta \sin^2 \delta \cos^2 \alpha}{4 \cos^2 \delta + \sin^2 \delta \cos^2 \alpha}\,.
\end{equation}
Finally, from \eqref{eqn: vol after} the volume of the space of average post-measurement states, when represented in baryonic coordinates using the vectors $\{\vecr_j\}$, is
\begin{equation}\label{eqn: vol parameterized}
    \textnormal{Vol} = \frac{\pi}{3} \cos^2 \delta \sin^4 \delta \cos^2 \alpha\,.
\end{equation}
 We can find the parameters $\delta$ and $\alpha$ that \emph{maximise} these three quantities by solving the derivatives
\begin{equation}
    \begin{aligned}
        \frac{\partial A^2}{\partial \delta} &= 4 \cos \delta \sin \delta \left( 2 - 4 \sin^2 \delta + \sin^2 \delta \cos^2 \alpha\right)\,, \qquad  &\frac{\partial A^2}{\partial \alpha} &=  - 2 \sin^4 \delta \cos \alpha \sin \alpha\,,
        \\[5pt]
        \frac{\partial l^2}{\partial \delta} &= \frac{2 \sin^4 \delta \cos^2 \alpha \left(4 \cos^4 \delta - \sin^4 \delta \cos^2 \alpha \right)}{A^4}\,, \qquad  &\frac{\partial l^2}{\partial \alpha}  & = - \frac{8 \sin^6 \delta \cos^4 \delta \sin \alpha \cos \alpha}{A^4}\,, 
        \\[5pt]
         \frac{\partial \textnormal{Vol}}{\partial \delta} &= 2 \sin^3 \delta \cos \delta \cos^2 \alpha\left( 2 \cos^2 \delta - \sin^2 \delta\right)\,, \qquad &\frac{\partial \textnormal{Vol}}{\partial \alpha} &= - 2 \sin^4 \delta \cos^2 \delta \sin \alpha \cos \alpha\,.
    \end{aligned}
\end{equation}
The unique critical point in each case occurs when $\alpha=0$ and $\delta= \arccos{\frac{1}{\sqrt{3}}}$, which defines the tetrahedron SIC POVM. (We can check that the Hessian of each quantity is negative definite at this point, so this point is the local maximum in each case.) The corresponding values for the SIC are
\begin{equation}
    A^2 = \frac{4}{3}\,, \qquad l^2 = \frac{1}{9}\,, \qquad \textnormal{Vol} = \frac{4 \pi}{81}\,.
\end{equation}

We can also consider a very different kind of measurement which \emph{minimises} these geometric quantities. Let's define the scissors measurement in dimension two, which was previously introduced in \cite[Appendix C.C]{Woodhead_2020}, by the POVM \eqref{eqn: unbiased all vecs} where $\alpha=0$. We are particularly interested in the case where $\delta$ is arbitrarily close to zero, as then the squared area \eqref{eqn: A squared parameterized}, the squared length \eqref{eqn: l squared parameterized} and the volume \eqref{eqn: vol parameterized} all tend to zero as well. Note that $\delta$ cannot be zero, as the POVM would not then be extremal, but it can be arbitrarily close, such that we can achieve arbitrarily close to perfect randomness. Does there exist such a one-parameter family of measurements in higher dimensions that include a SIC POVM and achieve arbitrarily high randomness close to a limiting POVM which is non-extremal? In the following, we present two such `scissors-like' families in dimensions three and four, leaving other dimensions for future study. 

\subsection{Qutrits and ququarts}\label{ssec: qudit scissors}
The Weyl-Heisenberg group in dimension three is given by the ladder operators 
\begin{equation}
    D_{jk} = \eta^{\frac{jk}{2}} \sum_{m=0}^{2} \eta^{jm} \ketbra{k + m}{m}\,, \qquad \eta= e^{i \frac{2 \pi}{3}}\,,
\end{equation}
where addition is modulo $3$. 
From \cite{D_Ariano_2004}, a POVM $\m = \{M_{jk}\}_{jk}$ generated from a fiducial state $\ket{\psi}$, with elements
\begin{equation}
    M_{jk} = \frac{1}{3} \ketbra{\psi_{jk}}{\psi_{jk}}\,, \qquad \ket{\psi_{jk}} = D_{jk} \ket{\psi} 
\end{equation}
will be informationally complete as long as long as the following condition is satisfied, 
\begin{equation}
    \langle  \psi | D_{jk} | \psi \rangle \neq 0 \qquad \textnormal{for all} \;\; j, \, k \in \{0, \, 1, \, 2\}\,.
\end{equation}
Let's take qutrit fiducial state 
\begin{equation}
\ket{\psi} = \frac{1}{2\sqrt{3}} \left(  \left( 2 \cos \delta + \sqrt{2} \sin \delta\right) \left(\ket{0} + \ket{1} \right) + \left( 2 \cos \delta -2 \sqrt{2} \sin \delta\right) \ket{2}    \right)\,.    
\end{equation}
For this $\ket{\psi}$, we have
\begin{equation}
\begin{aligned}
 \langle  \psi | D_{jk} | \psi \rangle = \frac{1}{2} \left( \left( 1 + \cos^2 \delta + 2 \sqrt{2} \cos \delta \sin \delta \right) \delta_{j, 0} + 3\,  \omega^{2 j} \, \delta_{k, 0} \, \sin^2 \delta \right)  
 \\
 - \omega^{2 j} \cos \mleft(  \frac{\pi j k}{3}\mright) \sin \delta \left( \sqrt{2} \cos \delta + \sin \delta \right)\,,
 \end{aligned}
\end{equation}
such that 
\begin{equation}
\langle  \psi | D_{jk} | \psi \rangle = 
\begin{cases}
1 \qquad & j, \,  k =0\,,
\\[5pt]
- \frac{1}{2} \left( 1 - 3 \cos^2 \delta \right) \qquad & j=0, \, \, k \in \{1, 2\}\,,
\\[5pt]
- \frac{1}{2} \, \omega^{2j} \sin \delta \left( 2 \sqrt{2} \cos \delta - \sin \delta   \right) \qquad &j \in \{1, 2\},\,\, k =0 \,,
\\[5pt]
- \omega^{2 j} \cos \mleft(  \frac{\pi j k}{3}\mright) \sin \delta \left( \sqrt{2} \cos \delta + \sin \delta \right) \qquad & j, \, k \in \{1, 2\}\,.
\end{cases}    
\end{equation}
All of these overlaps are nonzero in the range $0 < \delta < \arccos \frac{1}{\sqrt{3}}$, so we define our family of MIC POVMs in this range. When $\cos \delta = \sqrt{\frac{2}{3}}$, we recover a SIC POVM (see \cite[Section IV B]{Renes_2004}). 
The probability distribution if we measure on $\ket{0}$ is 
\begin{equation}
    p_{jk} = \frac{1}{3}\abs{ \langle 3 - k | \psi \rangle  }^2 = \frac{1}{36}\begin{cases}
        \left( 2 \cos \delta + \sqrt{2}\sin \delta \right)^2\,, \qquad & k=0\,,
        \\[5pt]
        \left( 2 \cos \delta - 2\sqrt{2}\sin \delta \right)^2\,, \qquad & k=1\,,
        \\[5pt]
        \left( 2 \cos \delta + \sqrt{2}\sin \delta \right)^2\,, \qquad & k=2\,,
    \end{cases}
\end{equation}
which tends to the uniform distribution in the limit $\delta \rightarrow 0$.

In dimension four, while we could also define an appropriate family of fiducial states for the Weyl-Heisenberg basis, it is easier to directly generalise the SIC POVM given in \cite[Chapter 7]{belovs2008} and \cite[Section 6]{Appleby_2012}. Consider a POVM with 16 elements $\m= \{\frac{1}{4} \ketbra{\psi_j}{\psi_j}\}_j$, where the states are given in row form by
\begin{equation}
    \begin{aligned}
    \ket{\psi_1} &= \frac{1}{\sqrt{3 + a^2}} \left( a, \, 1, \, 1, \, 1 \right)\,, \qquad &&\ket{\psi_2} = \frac{1}{\sqrt{3 + a^2}} \left(a, \, -1, \, 1, \, -1 \right)   \,,
    \\
    \ket{\psi_3} &= \frac{1}{\sqrt{3 + a^2}} \left( a, \, 1, \, -1, \, -1 \right)\,, \qquad &&\ket{\psi_4} = \frac{1}{\sqrt{3 + a^2}} \left(a, \, -1, \, -1, \, 1 \right)   \,,
    \\
    \ket{\psi_5} &= \frac{1}{\sqrt{3 + a^2}} \left( 1, \,  i a, \, -i, \, 1 \right)\,, \qquad &&\ket{\psi_6} = \frac{1}{\sqrt{3 + a^2}} \left(1, \, - i a, \, -i, \, -1 \right)   \,,
    \\
    \ket{\psi_7} &= \frac{1}{\sqrt{3 + a^2}} \left( 1, \, ia, \, i, \, -1 \right)\,, \qquad &&\ket{\psi_8} = \frac{1}{\sqrt{3 + a^2}} \left(1, \, - i a, \, i, \, 1 \right)   \,,
    \\
    \ket{\psi_9} &= \frac{1}{\sqrt{3 + a^2}} \left( 1, \,  1, \, -i a, \, i \right)\,, \qquad &&\ket{\psi_{10}} = \frac{1}{\sqrt{3 + a^2}} \left(1, \, - 1, \, -i a, \, - i \right)   \,,
    \\
    \ket{\psi_{11}} &= \frac{1}{\sqrt{3 + a^2}} \left( 1, \, 1, \, i a, \, - i \right)\,, \qquad &&\ket{\psi_{12}} = \frac{1}{\sqrt{3 + a^2}} \left(1, \, - 1 , \, i a, \, i \right)   \,,
     \\
    \ket{\psi_{13}} &= \frac{1}{\sqrt{3 + a^2}} \left( 1, \,  i , \, 1, \, -i a \right)\,, \qquad &&\ket{\psi_{14}} = \frac{1}{\sqrt{3 + a^2}} \left(1, \, - i , \, 1, \, ia \right)   \,,
    \\
    \ket{\psi_{15}} &= \frac{1}{\sqrt{3 + a^2}} \left( 1, \, i, \, -1, \, ia \right)\,, \qquad &&\ket{\psi_{16}} = \frac{1}{\sqrt{3 + a^2}} \left(1, \, - i , \, -1, \, -ia \right)   \,,
    \end{aligned}
\end{equation}
for some parameter $a > 1$. 
When $a = \sqrt{2 + \sqrt{5}}$, we return the SIC POVM of \cite{belovs2008, Appleby_2012}. It is straightforward to see that the vectors form a POVM. To show that this POVM is informationally complete, we can form a $16 \times 16$ matrix by flattening the projectors $\{\ketbra{\psi_j}{\psi_j}\}$ to fill the rows. Up to a constant, this matrix is 
\begin{equation} 
    \begin{pmatrix}
        a^2 & a & a & a &  & a & 1 & 1 & 1 && a & 1 & 1 & 1&&  a & 1 & 1 & 1 
        \\
        a^2 & -a & a & -a &&  -a & 1 & -1 & 1  && a & -1 & 1 & -1  && -a & 1 & -1 & 1
        \\
        a^2 & a & -a & -a && a & 1 & -1 & -1 && -a & -1 & 1 & 1 && -a & -1 & 1 & 1
        \\
        a^2 & -a & -a & a && -a & 1 & 1 & -1 && -a & 1 & 1 & -1& & a & -1 & -1 & 1
        \\
        \\
        1 & -i a & i & 1 && ia & a^2 & -a & ia && -i & -a & 1 & -i && 1 & -ia & i & 1
        \\
        1  & ia & i & -1 && -ia & a^2 & a & ia && -i & a & 1 & i && -1 & -ia & -i & 1
        \\
        1 & -ia & -i & -1 && ia & a^2 & a& -ia && i & a & 1 & -i && -1 & ia & i & 1
        \\
        1 & ia & -i & 1 && -ia & a^2 & -a & -ia && i & -a & 1 & i && 1 & ia & -i & 1
        \\
        \\
        1 & 1& ia & -i && 1 & 1 & ia & -i && -ia & -ia & a^2 & -a && i & i & -a & 1
        \\
        1 & -1 & ia & i && -1 & 1 & -ia & -i && -ia & ia & a^2 & a&& -i & i & a & 1
        \\
        1 & 1 & -ia & i && 1 & 1 & -ia & i && ia & ia & a^2 & -a && -i & -i & -a & 1
        \\
        1 & -1 & -ia & -i && -1 & 1 & ia & i && ia & -ia & a^2 & a && i & -i & a & 1
        \\
        \\
        1 & -i & 1 & ia && i & 1 & i & -a && 1 & -i & 1 & ia && -ia & -a & -ia & a^2
        \\
        1 & i & 1 & -ia && -i & 1 & -i & -a && 1 & i & 1 & -ia && ia & -a & ia & a^2
        \\
        1 & -i & -1 & -ia && i & 1 & -i & a && -1 & i & 1 & ia && ia & a & -ia & a^2
        \\
        1 & i & -1 & ia && -i & 1 & i & a && -1 & -i & 1 & -ia && -ia & a & ia & a^2
    \end{pmatrix}\,.
\end{equation}
Its determinant is
\begin{equation}
    \textnormal{det}  \propto \left( a^2 - 1\right)^3 \left( 1 + a^2 \right)^{6} \left( 3 + a^2\right)\,, 
\end{equation}
which is nonzero in our range, so we have defined a family of extremal POVMs. Interestingly, there does not seem to be an upper limit for $a$, and in the limit as $a \rightarrow \infty$ it tends to a non-extremal measurement. The limiting POVM, for which $a = 1$, is also non-extremal. If we measure the state $\ket{0}$ with a POVM in this family, we find 
\begin{equation}
    p_{j} = \tr \mleft( M_j \ketbra{0}{0} \mright) = \frac{1}{4 \left( 3 + a^2 \right)} 
    \begin{cases}
     a^2\,, \qquad  & j \in \{1, \, \hdots, \, 4\}\,,
     \\
     1\,, \qquad   &j \in \{5, \, \hdots, \, 16\}\,,
    \end{cases}
\end{equation}
which tends to the uniform distribution in the limit $a \rightarrow 1$.

\end{document}